%% file: main.tex
\pdfoutput=1
\documentclass[11pt,a4paper]{article}

\usepackage{dqinotation}

\usepackage[T1]{fontenc}
\usepackage{microtype}

\usepackage{booktabs}
\usepackage{array}
\newcolumntype{L}[1]{>{\raggedright\arraybackslash}p{#1}}
\usepackage{enumitem}
\setlist[enumerate]{leftmargin=*,itemsep=3pt,topsep=4pt}
\setlist[itemize]{leftmargin=*,itemsep=3pt,topsep=4pt}
\usepackage{tikz}
\usetikzlibrary{arrows.meta,calc}
\usepackage{pgfplots}
\pgfplotsset{compat=1.17}

\usepackage[margin=1in]{geometry}
\usepackage{xcolor}   %
\usepackage[colorlinks=true,linkcolor=black!70!blue,citecolor=black!60!green,
            urlcolor=black!70!blue,linktoc=all,
            pdftitle={Discovering New Problems for Decoded Quantum Interferometry}]{hyperref}

\title{\textbf{Discovering New Problems for Decoded Quantum Interferometry}}
\usepackage{authblk}

\author[1]{Rhik Mazumder}
\author[1]{Pradeep Niroula\thanks{Correspondence:
  \href{mailto:pradeep@magicstatelabs.com}
       {pradeep@magicstatelabs.com}}}
\affil[1]{Magic State Labs, New York, NY}

\date{}

\begin{document}
\maketitle
\begin{abstract}
Decoded quantum interferometry (DQI) uses classical error correction to solve optimization problems, but it is hard to tell which problems could give it a quantum advantage. We develop soundness rules that screen candidate problems by their code parameters, decoding guarantees and classical baselines. For the standard objective of counting satisfied constraints, we prove that under worst-case Hamming decoding DQI's guarantee can exceed Prange completion by at most $(\sqrt2-1)/2$. Optimal polynomial intersection (OPI) attains this margin cap asymptotically, so no candidate in this setting can exceed OPI's margin. We also derive a performance law for real-valued objectives: the objective fixes both the solution quality and which errors the decoder must correct. An AI-agent search guided by these rules found two candidate applications. Alternant max-agreement on McEliece-type public keys tests what a secret decoder is worth to DQI against classical solvers that see only the public instance. Multiplicative polynomial intersection, a knapsack-type problem built from discrete logarithms, pairs a cosine objective with a public decoder for signed errors. For both we prove decoding guarantees and finite-length comparisons with Prange restarts, and on the larger instances we tested, DQI's expected quality exceeds what our classical solvers reach within fixed budgets.
\end{abstract}
\bigskip

\input{sections/01-introduction}
\input{sections/02-setup}

\input{sections/03-rules}
\input{sections/04-problems}
\input{sections/05-discussion}
\input{supplement/code-availability-snippet}
\bibliographystyle{plain}
\bibliography{refs,refs-extra}

\appendix
\clearpage
\part*{Supplementary material}
\small
\renewcommand{\topfraction}{0.9}
\renewcommand{\textfraction}{0.08}
\setlength{\textfloatsep}{12pt plus 2pt minus 4pt}
\input{supplement/S0-notation}
\input{supplement/S1-proofs}
\input{supplement/S2-candidates}
\input{supplement/S3-search-checks}
\input{supplement/S4-scaling}

\end{document}

%% file: sections/01-introduction.tex
\section{Introduction}\label{sec:intro}

Jordan et al.\ introduced decoded quantum interferometry (DQI)~\cite{jordan24dqi}, a quantum
algorithm whose quality on an optimization problem is set by how well a related code can be
decoded. Their strongest example is optimal polynomial intersection (OPI): find a
low-degree polynomial whose values fall in prescribed sets at as many points as possible. Every other
candidate has been studied on its own terms, with a code, a decoder and a classical
comparison built for that one problem, so it is hard to tell in advance which problems deserve a closer look. We
use the structure of DQI to write down explicit screening criteria,
and these make a computer-aided search for promising
optimization problems possible.

DQI is usually applied to max-LINSAT~\cite{jordan24dqi}.
An instance is a matrix $B\in\F_q^{m\times n}$ together with an
\emph{accepted set} $F_i\subseteq\F_q$ for each row $i$. An assignment
$x\in\F_q^n$ satisfies row $i$ when $(Bx)_i\in F_i$, and the goal is
to satisfy as many rows as possible:
\[
\max_{x\in\F_q^n} S(x),
\qquad
S(x):=\sum_{i=1}^{m}\mathbf{1}\!\left\{(Bx)_i\in F_i\right\},
\]
where the indicator is one when the condition holds and zero
otherwise. Fixing a family for $B$ gives a structured
problem. OPI, for example, uses Vandermonde matrices
$B_{ij}=a_i^{j-1}$ at distinct points $a_i$, so
$x$ holds the coefficients of a polynomial and $(Bx)_i$ is its
value at $a_i$.

The matrix $B$ defines two dual linear codes: the \emph{optimizer code}
$\operatorname{im} B$ and the \emph{decoder code}
$\ker B^{\mathsf T}$. DQI prepares a superposition over sparse error
patterns, computes their syndromes under $B^{\mathsf T}$, and then erases
the patterns by running a classical decoder for $\ker B^{\mathsf T}$ coherently
(Section~\ref{sec:dqidecoder}). The minimum distance of this code caps how
many errors can be corrected in the worst case, and with it
the solution quality DQI can guarantee.

The classical baseline instead builds a word in
$\operatorname{im} B$ directly. Suppose $B$ has $m$ rows and $n$
linearly independent columns. A \emph{Prange completion}~\cite{prange62}
picks $n$ linearly independent rows, called an information set,
and an accepted value for each of them. Solving
these $n$ linear equations gives $x$, so the picked rows are
satisfied by construction. Computing $Bx$ then fixes
the remaining $m-n$ coordinates.
This step is the ``completion,'' and it needs no decoding.
If the information set is fixed in advance and the accepted sets
are independent and random with density $\rho$, each
remaining coordinate is satisfied with probability $\rho$. A single
attempt therefore satisfies an expected fraction
$n/m+\left(1-n/m\right)\rho$. The optimizer-code rate $n/m$ is the fraction
Prange completion satisfies deliberately, and the rest are satisfied
at the rate of a random guess
(Section~\ref{sec:qualitylaws}).

Figure~\ref{fig:threematrices} shows how the structure of $B$ shapes
decoding, and with it the guarantees DQI can offer, for OPI, max-XOR-SAT
and MaxCut. In OPI, the Vandermonde structure makes the decoder code a Reed--Solomon
code, which has an efficient worst-case decoder up to its
unique-decoding radius. This is the setting of the quantum advantage
claimed in~\cite{jordan24dqi}. In max-XOR-SAT with sparse random $B$,
the decoder code is a low-density parity-check code, and the
decoding results cited here establish average-case success rather than
a comparable worst-case guarantee~\cite{jordan24dqi,quantumdecoder26}.
In MaxCut, $B$ is a graph incidence matrix and the decoder code is
the graph's cycle space. Its minimum distance equals the girth, which
the Moore bound limits to $O(\log m)$ when every vertex has degree at
least three. The examples separate two constraints. The code's
distance limits its worst-case unique-decoding radius, and the
available decoding algorithm determines which errors can be corrected
efficiently.

\input{figures/three-matrices}

We turn the structure of DQI into four soundness rules for screening
candidate optimization problems. The rules state necessary
conditions for DQI to outperform specified classical baselines and
mark where our guarantees apply. A candidate that violates one of these conditions
has no corresponding advantage under its stated hypotheses,
and the violated condition shows which feature of the candidate must change.

These rules give a cheap first stage of evaluation for
computer-guided problem discovery. Evaluator-guided search has let
language models discover new constructions and
algorithms~\cite{romera24funsearch,novikov25alphaevolve}. For DQI, however,
evaluation can be costly, since it means building decoders and testing classical attacks,
and the rules screen proposals before those
steps. We used them in a human-guided search. A network of AI agents proposed problem
statements and code constructions and checked each against the rules. For candidates that
satisfied them and that a human chose to pursue, the agents built decoders and ran classical attacks
(Supplementary Section~\ref{sec:search}). Rule \gate{4} and the
objective-alphabet form of the rules were added during this search.

This search produced two candidate applications that survived the
checks and show advantages over the classical baselines studied
here. The first, \emph{alternant max-agreement} (AMA), asks for a
word in a public linear code that lies in as many of the
coordinate-wise accepted sets as possible. A secret key reveals the
code's alternant structure and lets DQI decode efficiently,
but the classical baseline receives only the public instance (the public matrix and the accepted sets).
The second, \emph{multiplicative polynomial intersection} (MPI),
is a cosine-optimization problem whose matrix entries are discrete
logarithms of finite-field elements. Its cosine objective has Fourier
support on $\{\pm1\}$, so DQI only has to decode sparse signed
errors, and a polynomial-lattice decoder can recover these
efficiently.

\section{Contributions}\label{sec:contributions}

Our contributions combine general limits on DQI with an
analysis that supports two new candidate applications. The screening
rules and a sharp margin cap set out the constraints that standard Hamming
decoding imposes, and the objective-dependent analysis shows how to
change the decoding requirements. The two candidates use these results
in different ways. AMA keeps Hamming decoding but gives only DQI access
to the decoder, and MPI changes the objective so that a signed-error
decoder suffices.

\paragraph{Soundness rules for candidate DQI advantages.}
We derive a screening framework for DQI with worst-case decoding. It
expresses the reach needed to outperform Prange completion as a
closed-form function of the optimizer-code rate and the accepted-set
density. For random accepted sets in the long-instance limit, this
comparison turns upper bounds on decoding reach into necessary
conditions for an advantage. The framework combines distance and
packing bounds with stronger classical baselines based on optimal
encoding and subfield linear algebra, and it requires
an efficient licensed decoder, one certified to correct every error up to a given weight
(Definition~\ref{def:licensed}), for its guarantees to apply
(Theorem~\ref{thm:master}, Corollary~\ref{cor:rules} and
Proposition~\ref{prop:remaining}).

\paragraph{A sharp margin cap over Prange completion.}
For indicator objectives, which count satisfied rows, and worst-case Hamming decoding, we prove
that DQI's asymptotic expected-quality margin over one Prange
completion is at most
$\kappa(\alpha)=\sqrt{(\alpha/2)(1-\alpha/2)}-\alpha/2$,
where $\alpha$ is the optimizer-code rate. Maximizing over the rate
gives the universal cap $(\sqrt{2}-1)/2$. OPI attains
the fixed-rate cap asymptotically as the field size grows and the accepted-set
density tends to $\tfrac12$. Its known performance is therefore the best
possible within the class covered by the
Hamming-license analysis, and it gives a benchmark for other
candidate constructions
(Proposition~\ref{prop:cap}).

\paragraph{Objective-dependent decoding guarantees.}
We extend the analysis from counting objectives to real-valued ones, in which
every row uses the same function up to a shift (over prime fields). Two
properties of this function matter. Its moments give an exact
formula for DQI's expected score at every length.
Its Fourier support determines which error values the decoder has to correct
(the error alphabet), so a decoder built for just those values is enough. For the cosine
objective the support is $\{\pm1\}$, and decoders for these signed
errors can certify reach beyond half the minimum Hamming distance
(Lemma~\ref{lem:objlaw} and
Proposition~\ref{prop:alphabet}).

\paragraph{Alternant max-agreement with hidden decoding structure.}
We study AMA instances built from McEliece-type
public keys. The secret key exposes the alternant structure
needed for Hamming decoding, which gives DQI a guarantee and,
at the studied parameters, an advantage over a
finite-budget Prange-restart baseline
(Section~\ref{sec:mckey} and Theorem~\ref{thm:converse}). Against classical
searches that see only the public instance,
the comparison turns in DQI's favor between $m=6000$ and $m=14000$, and
at $m=14000$ and $20000$ the searches we tested
stay below DQI even when given the key. With the key, the
problem becomes OPI with trace-defined accepted sets, so AMA
lets us ask how quantum and classical
optimization use the same decoding representation. Our hardness
assumptions concern solvers that see
only the public instance
(Section~\ref{sec:mchardness}).

\paragraph{Multiplicative polynomial intersection with a public signed-error decoder.}
We introduce MPI, a cosine-optimization problem whose matrix entries are
discrete logarithms of finite-field elements. Its public
polynomial-lattice decoder uses rational reconstruction to recover
signed errors, which are exactly the errors the cosine
objective produces. This gives the restricted-error license that
the objective-dependent law requires, and with it a rigorous DQI quality
guarantee, with quantum and classical solvers both receiving the
decoding structure
(Theorem~\ref{thm:logcauchy} and
Proposition~\ref{prop:alphabet}).
Against the tested classical solvers, including lattice reduction, the comparison
turns in DQI's favor between $m=1000$ and $m=2000$, and DQI's exact expected
score is higher at $m=2000$ and $6000$.
Section~\ref{sec:logcauchy} gives the construction, these comparisons
and the associated hardness assumption.

\paragraph{Scaling of the classical searches.}
Supplementary Section~\ref{sec:supp-scaling} studies how the classical
scores grow with the search budget for OPI and both candidates.
Counting search moves, we compare each score with a Gaussian reference
for independent Prange sampling with the same number of codeword
evaluations. The
tested generic searches beat this estimate by at most about $1.4$
standard deviations of a single completion. In the same units, DQI's
margin over the mean Prange score grows like $\sqrt m$ at fixed rates.
This comparison sets the length at which we state the
AMA hardness assumptions, and it gives extrapolated
cost estimates for both candidates.

\section{Related work}\label{sec:related}

Our work builds on the DQI framework of Jordan et al.~\cite{jordan24dqi}.
They introduce the algorithm, identify the decoding problem it relies on,
derive the Fourier amplitudes of the error register, and express its
expected quality through the largest eigenvalue of a tridiagonal
matrix. Their asymptotic semicircle law connects decoding reach to
optimization quality. They also use coding bounds to limit the
regimes in which DQI can outperform classical algorithms, including a
Shannon-bound exclusion for max-XOR-SAT~\cite[\S13.1]{jordan24dqi},
and they compare DQI with Prange's algorithm on
OPI~\cite[eqs.~(16)--(17)]{jordan24dqi}.

Our reach threshold follows by inverting their quality law, and
rule \gate{1} turns coding-theoretic bounds into necessary conditions
that apply across problem families. The margin cap of Proposition~\ref{prop:cap}
is sharp within this framework. For
indicator objectives and worst-case Hamming decoding, it limits the
asymptotic margin over one Prange completion at each optimizer-code
rate and uniformly over rates, and OPI attains both caps in the
appropriate limits. OPI's performance and its comparison
with Prange were already known. What is new is that this performance
is the best possible across all matrix families covered by the
Hamming-license analysis.

Several works limit what DQI can achieve on specific classes of
instances. Some bound its quality directly. Kramer, Schubert and Eisert derive a $q$-ary Shannon
bound~\cite[\S4.3]{kramer26bounded} and show that, on instances of bounded degree $D$, DQI with
classical decoders reaches at most
$\rho+O(1/\sqrt{D\log D})$. They also show that,
unless $\mathrm{P}=\mathrm{NP}$, no efficient algorithm can guarantee
a constant improvement over $\rho$ on worst-case
max-LINSAT~\cite{kramer26tight}. Others rule out particular families. Parekh shows that MaxCut instances
admitting a nontrivial DQI guarantee can be solved exactly by
classical algorithms~\cite{parekh25maxcut}, and Anschuetz et al.\
identify spin-glass transitions that obstruct DQI on random local
instances~\cite{anschuetz25spinglass}. A third line separates the existence of good codewords
from their efficient construction. Marwaha et al.\ make this distinction by connecting binary
DQI to the covering-radius bound through code duality~\cite{marwaha25macwilliams},
and our rule \gate{3} builds on it, comparing DQI with efficient encoders for the optimizer
code at the rate--distortion limit of Chailloux and
Tillich~\cite{chailloux24rd}.

Thelen and Mauerer's DQI-Kit is particularly close to our goal of
screening candidate problems~\cite{thelen26dqikit}. It evaluates DQI
numerically, detects encodings that compromise code distance, and
bounds the decoder code's distance using short linear dependencies. Such
bounds can feed rule \gate{1}(a), and our framework
adds closed-form conditions and comparisons
with stronger classical baselines. Curve-based extensions of
OPI~\cite{matthews26cofi} fit the same picture: they evaluate functions
on an algebraic curve instead of polynomials on a finite field. This changes the code, but
DQI's quality is still set by how far its decoder can correct
errors.

Other work extends DQI beyond counting objectives.
Bu, Gu and Li derive a weighted tridiagonal quality law for
blocks of $\pm1$-valued objectives~\cite{bgl26multivariate}, and
Wang uses a kernel to shape the DQI state for a global real-valued
objective~\cite{wang25kernelized}. Jordan et al.\ already observe that the
objective's Fourier support decides which errors enter the error
register~\cite[\S8.2]{jordan24dqi}.
We treat the general case in which every row applies the same
real-valued function up to a shift. For this case we give an exact quality formula
at every length (Lemma~\ref{lem:objlaw}) and a separation condition under which
a decoder for only those errors still yields a guarantee. The signed-error
guarantee for the cosine objective is one example
(Proposition~\ref{prop:alphabet}).

On the implementation side, Patamawisut, Benchasattabuse, Hajdu\v{s}ek and
Van Meter build a DQI circuit whose decoding stage is reversible Gauss--Jordan
elimination, with gate count and depth growing
roughly as $m^2$~\cite{patamawisut25circuit}. On OPI itself, Horinaga and Yamakawa,
and independently Jo, give quantum algorithms that outperform DQI~\cite{horinaga26,jo26},
so DQI is one of several
quantum approaches to this problem. Our margin cap
(Proposition~\ref{prop:cap}) concerns the specified DQI guarantee,
not quantum algorithms for OPI in general.

Decoding inside quantum Fourier sampling has precedents
beyond DQI. Regev's reduction~\cite{regev09}, the filtering
construction of Chen, Liu and Zhandry~\cite{chen22filtering}, and
the reduction of Debris-Alazard, Remaud and
Tillich~\cite{debris24quantumreduction} each use decoding for a restricted
error set as one step of a quantum algorithm. Yamakawa and
Zhandry obtain a verifiable quantum advantage using list recovery
of folded Reed--Solomon codes~\cite{yamakawa22verifiable}.

Cryptographic trapdoors also appear in the proofs of quantumness
of Brakerski et al.\ and Kahanamoku-Meyer et al., where a classical
verifier holds the trapdoor and checks a quantum prover that lacks
it~\cite{brakerski18certifiable,kahanamoku22classically}.
Our AMA construction reverses this role: the quantum solver holds the
trapdoor, which gives DQI its decoder. To our knowledge, this use of a cryptographic trapdoor
as DQI's decoding structure has not been analyzed before.

%% file: figures/three-matrices.tex
\begin{figure}[h]
\centering
\begin{tikzpicture}[
  x=1mm, y=1mm,
  box/.style={draw, rounded corners=2pt, align=center, font=\footnotesize,
              text width=24mm, minimum height=12mm, inner sep=2pt},
  holds/.style={box, draw=black!55},
  breaks/.style={box, draw=red!70!black, very thick, fill=red!6},
  moot/.style={box, draw=black!25, dashed, text=black!40},
  good/.style={box, draw=green!40!black, thick, fill=green!7},
  bad/.style={box, draw=black!50, fill=black!7},
  head/.style={font=\footnotesize\itshape, align=center, text width=26mm},
  row/.style={font=\footnotesize\bfseries, anchor=east},
  arr/.style={-{Latex[length=1.5mm]}, black!55},
  mootarr/.style={-{Latex[length=1.5mm]}, black!25, dashed},
  badge/.style={circle, fill=white, draw, inner sep=0.4pt, font=\scriptsize},
]
  \node[head, anchor=east, align=right, text width=22mm] at (-15,11) {constraint matrix};
  \node[head] at (0,11)  {problem};
  \node[head] at (31,11) {decoder code $\deccode$};
  \node[head] at (62,11) {worst-case decoder};
  \node[head] at (93,11) {outcome};

  \node[row] at (-15,0) {Vandermonde $B$};
  \node[holds] (a1) at (0,0)  {optimal polynomial\\ intersection};
  \node[holds] (b1) at (31,0) {Reed--Solomon, meets Singleton};
  \node[holds] (c1) at (62,0) {Berlekamp--Massey to half the distance};
  \node[good]  (d1) at (93,0) {advantage claimed};

  \node[row] at (-15,-17) {sparse random $B$};
  \node[holds]  (a2) at (0,-17)  {max-XOR-SAT};
  \node[holds]  (b2) at (31,-17) {LDPC code; distance depends on ensemble};
  \node[breaks] (c2) at (62,-17) {belief propagation: average case only};
  \node[bad]    (d2) at (93,-17) {no advantage shown};

  \node[row] at (-15,-34) {graph incidence $B$};
  \node[holds]  (a3) at (0,-34)  {MaxCut};
  \node[breaks] (b3) at (31,-34) {cycle space: $d=$ girth, $O(\log m)$ at degree $\ge3$};
  \node[moot]   (c3) at (62,-34) {not reached};
  \node[bad]    (d3) at (93,-34) {asymptotic Hamming-license advantage excluded};

  \foreach \r in {1,2} {
    \draw[arr] (a\r) -- (b\r);
    \draw[arr] (b\r) -- (c\r);
    \draw[arr] (c\r) -- (d\r);
  }
  \draw[arr]     (a3) -- (b3);
  \draw[mootarr] (b3) -- (c3);
  \draw[mootarr] (c3) -- (d3);

  \foreach \n in {a1,b1,c1,a2,b2,a3}
    \node[badge, draw=black!55, text=black!75] at (\n.north east) {$\checkmark$};
  \foreach \n in {c2,b3}
    \node[badge, draw=red!70!black, text=red!70!black] at (\n.north east) {$\times$};
\end{tikzpicture}
\caption{DQI on three constraint matrices. In the max-XOR-SAT and MaxCut rows, a cross marks the step where the argument
for an advantage fails.}
\label{fig:threematrices}
\end{figure}

%% file: sections/02-setup.tex
\section{Soundness rules for DQI candidates}\label{sec:problem}

This section develops screening rules that state necessary conditions for
DQI to outperform specified classical baselines. After setting up
notation, we find the decoding reach needed
for an advantage and derive the rules
(Theorem~\ref{thm:master}, Corollary~\ref{cor:rules} and
Proposition~\ref{prop:remaining}). We then prove the margin cap, a sharp
upper bound on the asymptotic margin over Prange completion that
OPI attains (Proposition~\ref{prop:cap}),
discuss the cost of evaluating the rules, and give a
finite-length guarantee against Prange restarts. Throughout,
the objective counts satisfied rows and the decoder is
licensed for Hamming errors. Section~\ref{sec:rulesalphabet}
extends the analysis to objectives whose Fourier support permits
a weaker decoding requirement.

\subsection{Preliminaries}\label{sec:instances}
\label{sec:dqidecoder}\label{sec:qualitylaws}

\paragraph{Instances and their two codes.}
The optimization problems we consider ask for an assignment satisfying as many modular linear
constraints as possible. Each constraint specifies a set of accepted values for one row of a
matrix.

\begin{definition}[instance]\label{def:instance}
An \emph{instance} over $\F_q$ consists of a matrix $B\in\F_q^{m\times n}$ and an
\emph{accepted set} $F_i\subseteq\F_q$ for each row $i$. The \emph{quality} of an assignment
$x\in\F_q^n$ is
\[
  Q(x)=\frac{1}{m}\sum_{i=1}^{m}\mathbf{1}\{(Bx)_i\in F_i\},
\]
and the task is to maximize $Q(x)$.
\end{definition}

We assume that $B$ has no zero rows and that the accepted sets have a common
\emph{density} $\rho=|F_i|/q$, with $0<\rho<1$. A uniformly random assignment then has
expected quality $\rho$. Whenever we compare algorithms over random accepted sets, the
sets are drawn independently and uniformly among subsets of their prescribed size,
independently of $B$.

The matrix defines two dual linear codes of length $m$: the
\emph{optimizer code} $\optcode$, spanned by the columns of $B$, and the
\emph{decoder code} $\deccode=\optcode^{\perp}$. We write
$\alpha=\rank(B)/m$ for the optimizer-code rate, so the decoder code has rate $1-\alpha$, and $d$
and $\delta=d/m$ for the decoder code's minimum distance and relative distance.

\begin{definition}[family]\label{def:family}
A \emph{family} is a sequence of instances with $m\to\infty$ along which the optimizer-code
rate, decoder-code relative distance, and accepted-set density converge. We use
$\alpha$, $\delta$, and $\rho$ for their limits as well. The field size $q$ may vary along
the sequence.
\end{definition}

The asymptotic rules concern such families. Finite-length comparisons instead use the
parameters and exact DQI quality of the individual instance. In particular, the constructions
in Section~\ref{sec:problems} have a maximum length at any fixed choice of field parameters,
and their reported comparisons use the lengths actually studied.

\paragraph{Decoding reach.}
In DQI's superposition the error patterns $y\in\F_q^m$ have Hamming weight at most $\ell$
and amplitudes set by the accepted sets, and a final Fourier
transform turns the state into an assignment~\cite{jordan24dqi}. Erasing the error register requires recovering
each $y$ from its syndrome $B^{\mathsf T}y$, that is, decoding $\deccode$.
Our guarantees require that every such pattern is decoded,
together with a separation condition needed to compute DQI's expected quality.

\begin{definition}[licensed radius]\label{def:licensed}
A decoder for $\deccode$ is \emph{licensed at radius $\ell$} if it recovers every error
pattern of Hamming weight at most $\ell$ from its syndrome and
\[
  2\ell+1<d.
\]
Its \emph{reach} is $\lambda=\ell/m$.
\end{definition}

Ordinary unique decoding needs only $2\ell<d$. The stronger condition $2\ell+1<d$ also separates syndromes one step further:
a pattern of weight at most $\ell$ never shares its syndrome with a different pattern formed
from another such pattern by adding a nonzero value at one coordinate. Jordan et al.\ need this separation
to compute the expected quality~\cite{jordan24dqi}.

\paragraph{DQI quality and Prange completion.}
For the indicator objective, DQI's asymptotic expected quality at licensed reach $\lambda$ is
\begin{equation}\label{eq:qdqi}
  \Qdqi(\lambda,\rho)=
  \begin{cases}
    \Bigl(\sqrt{\lambda(1-\rho)}
      +\sqrt{\rho(1-\lambda)}\Bigr)^2,
      & \lambda\le1-\rho,\\
    1, & \lambda>1-\rho.
  \end{cases}
\end{equation}
This is the semicircle law of Jordan et al.~\cite{jordan24dqi}, proved there over prime fields.
We assume it over extension fields, as Gu and Jordan do over
$\F_{q^2}$~\cite{gujordan25}. At finite length, we use the exact expected quality
$\Qdqi^{(m)}$ from Lemma~9.2 of~\cite{jordan24dqi}.

Our first classical baseline is one Prange completion~\cite{prange62}, as described in the introduction.
It fixes an information set of $\optcode$, which is a set of $\rank(B)=\alpha m$ coordinates whose
values determine a unique codeword. It sets
these coordinates to accepted values and solves for the codeword.
If the information set is chosen independently of the accepted sets, and the values assigned to it
use only the accepted sets on those coordinates, each remaining row is satisfied
with probability $\rho$. One completion therefore has exact expected quality
\begin{equation}\label{eq:qpr}
  \Qpr(\alpha,\rho)
  =\frac{\alpha m+\rho(1-\alpha)m}{m}
  =\rho+(1-\rho)\alpha.
\end{equation}
Supplementary Section~\ref{sec:isbaseline} specifies the procedure and its assumptions.

\begin{definition}[margin]\label{def:margin}
The \emph{margin} over one Prange completion is
\begin{equation}\label{eq:margin}
  \mu=\Qdqi(\lambda,\rho)-\Qpr(\alpha,\rho).
\end{equation}
At finite length, the corresponding margin is
$\mu_m=\Qdqi^{(m)}-\Qpr(\alpha,\rho)$.
\end{definition}

Thus $\mu>0$ describes an asymptotic advantage over this particular baseline.
The sign of the finite-length margin $\mu_m$ need not agree with that of $\mu$.

\paragraph{The existential line.}
We also use a counting bound on the quality of codewords. The optimizer code contains
$q^{\alpha m}$ codewords, and any fixed codeword satisfies independently sampled rows
with probability $\rho$. Combining this count with a binomial tail bound gives the
following threshold.

\begin{definition}[existential line]\label{def:regime}
If $q^\alpha\rho<1$, the \emph{existential line} $\Qstar\in(\rho,1)$ is the unique solution of
\begin{equation}\label{eq:qstar}
  \KL{\Qstar}{\rho}=\alpha\ln q,
\end{equation}
where
\[
  \KL{Q}{\rho}
  =Q\ln\frac{Q}{\rho}
   +(1-Q)\ln\frac{1-Q}{1-\rho}.
\]
If $q^\alpha\rho\ge1$, set $\Qstar=1$.
\end{definition}

For every fixed $\varepsilon>0$, with high probability no assignment exceeds
$\Qstar+\varepsilon$ (supplementary Section~\ref{sec:counting}). Since there are $q^{\alpha m}$
codewords, on average $(q^\alpha\rho)^m$ of them satisfy every row. We call $q^\alpha\rho<1$ the
\emph{unsatisfiable regime}, because this average then tends to zero, and $q^\alpha\rho\ge1$ the \emph{satisfiable regime}.
The second name describes only the average, so it does not guarantee that a fully satisfying
assignment exists. Whether an efficient algorithm reaches $\Qstar$ is a separate question,
which we take up with the stronger classical baselines.

%% file: sections/03-rules.tex
\subsection{The reach required for an advantage}\label{sec:rules}\label{sec:soundness}

A useful decoder must do more than correct a positive fraction of errors: its reach must
be large enough for DQI to outperform the classical baseline. The required reach depends
on both the optimizer-code rate and the accepted-set density. Increasing the rate
strengthens Prange completion. Changing the density affects both algorithms.

The quality law \eqref{eq:qdqi} is increasing in $\lambda$ until it reaches one.
Writing $\lambda=\sin^2\theta$ and $\rho=\sin^2\phi$, its unsaturated branch becomes
$\sin^2(\theta+\phi)$. Inverting this expression identifies the point at which DQI's
expected quality crosses the Prange baseline.

\begin{theorem}[the reach threshold]\label{thm:master}
Let $0<\alpha<1$ and $0<\rho<1$, and write
$\Qpr=\rho+(1-\rho)\alpha$. Define
\begin{equation}\label{eq:lambdacrit}
  \lambdacrit(\alpha,\rho)
  =\Bigl(
    \sqrt{(1-\rho)\Qpr}
    -\sqrt{\rho(1-\Qpr)}
  \Bigr)^2.
\end{equation}
Then $0<\lambdacrit<1-\rho$, and for every $\lambda\in[0,1]$,
\[
  \Qdqi(\lambda,\rho)>\Qpr(\alpha,\rho)
  \quad\Longleftrightarrow\quad
  \lambda>\lambdacrit(\alpha,\rho).
\]
At $\rho=\tfrac12$,
\[
  \lambdacrit(\alpha,\tfrac12)
  =\frac{1-\sqrt{1-\alpha^2}}{2}.
\]
Consequently, a family whose limiting reach is at most $U$ has $\mu\le0$ whenever
$U\le\lambdacrit(\alpha,\rho)$.
\end{theorem}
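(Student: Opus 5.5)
The plan is to use the angle substitution hinted at just before the statement. Write $\rho=\sin^2\phi$ with $\phi\in(0,\pi/2)$ and $\lambda=\sin^2\theta$ with $\theta\in[0,\pi/2]$. Then $\sqrt{\lambda(1-\rho)}+\sqrt{\rho(1-\lambda)}=\sin\theta\cos\phi+\cos\theta\sin\phi=\sin(\theta+\phi)$, so the unsaturated branch of \eqref{eq:qdqi} is $\sin^2(\theta+\phi)$. The condition $\lambda\le 1-\rho=\cos^2\phi=\sin^2(\pi/2-\phi)$ is equivalent to $\theta\le\pi/2-\phi$, that is, $\theta+\phi\le\pi/2$. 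On this range $\sin^2(\theta+\phi)$ is strictly increasing in $\theta$ and rises from $\rho$ at $\theta=0$ to $1$ at $\theta=\pi/2-\phi$. For $\lambda>1-\rho$ the quality equals $1$. Hence $\lambda\mapsto\Qdqi(\lambda,\rho)$ is continuous and nondecreasing on $[0,1]$. It is strictly increasing on $[0,1-\rho]$ with range $[\rho,1]$, and constant at $1$ afterwards.

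Since $0<\alpha<1$, we have $\rho<\Qpr<1$. Write $\Qpr=\sin^2\psi$ with $\psi\in(\phi,\pi/2)$. By strict monotonicity there is a unique $\theta^*=\psi-\phi\in(0,\pi/2-\phi)$ with $\sin^2(\theta^*+\phi)=\Qpr$. Set $\lambdacrit=\sin^2(\psi-\phi)$. Expanding $\sin(\psi-\phi)=\sin\psi\cos\phi-\cos\psi\sin\phi=\sqrt{\Qpr(1-\rho)}-\sqrt{(1-\Qpr)\rho}$ gives exactly \eqref{eq:lambdacrit}. This expression is positive because $\psi>\phi$. The bounds $0<\theta^*<\pi/2-\phi$ then translate to $0<\lambdacrit<\cos^2\phi=1-\rho$.

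For the equivalence, take $\lambda\le\lambdacrit$. Then $\lambda\le 1-\rho$ and monotonicity gives $\Qdqi\le\Qpr$. Take $\lambda\in(\lambdacrit,1-\rho]$. Strict monotonicity gives $\Qdqi>\Qpr$. Take $\lambda>1-\rho$. Then $\Qdqi=1>\Qpr$.

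For $\rho=\tfrac12$, substitute into \eqref{eq:lambdacrit}:
\[
\lambdacrit=\tfrac12\bigl(\sqrt{\Qpr}-\sqrt{1-\Qpr}\bigr)^2=\tfrac12\bigl(1-2\sqrt{\Qpr(1-\Qpr)}\bigr).
\]
Here $\Qpr=(1+\alpha)/2$, so $\Qpr(1-\Qpr)=(1-\alpha^2)/4$. This yields $(1-\sqrt{1-\alpha^2})/2$.

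For the consequence, the asymptotic margin is defined through the limiting reach $\lambda\le U\le\lambdacrit$. The monotonicity of $\Qdqi$ then gives $\mu=\Qdqi(\lambda,\rho)-\Qpr\le\Qdqi(\lambdacrit,\rho)-\Qpr=0$.

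The only point needing care is the branch bookkeeping: checking that the square roots are taken with the right signs, and that the saturation branch cannot produce a crossing below $\lambdacrit$. Both are handled by the angle ranges above, so I expect no real obstacle.
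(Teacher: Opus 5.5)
Your proposal is correct and follows the paper's proof essentially step for step. Both use the substitutions $\rho=\sin^2\phi$, $\lambda=\sin^2\theta$, $\Qpr=\sin^2\psi$ to write the unsaturated branch as $\sin^2(\theta+\phi)$, place the unique crossing at $\theta=\psi-\phi$, and expand $\sin(\psi-\phi)$ to obtain $\lambdacrit$. Your explicit treatment of the saturated branch, and your use of monotonicity where the paper cites continuity for the final consequence, are small elaborations of the same argument.
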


The threshold separates the optimization comparison from the coding question.
Once $\alpha$ and $\rho$ are fixed, an advantage requires a decoder reaching beyond
$\lambdacrit$. An upper bound on the possible reach can therefore exclude a candidate
before a decoder is constructed. Rule \gate{1} collects the conditions obtained by
applying standard coding bounds in this way.

\begin{corollary}[rule \gate{1}]\label{cor:rules}
Consider a family with limiting parameters $(\alpha,\delta,\rho)$, where
$0<\alpha<1$ and $0<\rho<1$, the indicator objective, and a decoder licensed at
limiting reach $\lambda$. If $\mu>0$, then $\lambdacrit(\alpha,\rho)$ lies strictly
below each applicable upper bound on the limiting reach:
\begin{enumerate}
  \item[(a)] \emph{Distance.}
    The licensing condition gives $\lambda\le\delta/2$, so
    \[
      \delta>2\lambdacrit(\alpha,\rho).
    \]
    At $\rho=\tfrac12$, this becomes
    $\delta>1-\sqrt{1-\alpha^2}$.
    Any upper bound on $\delta$ can be used to exclude an advantage.
    In particular, if the relative distance tends to zero, then
    $\mu\to-(1-\rho)\alpha$.
    Conversely, if $\delta>2\lambdacrit$, a decoder licensed sufficiently close
    to $\delta/2$ would give $\mu>0$, provided such a decoder exists.

  \item[(b)] \emph{Sphere packing.}
    Along the family, the licensed reach satisfies
    \[
      \lambda\le H_q^{-1}(\alpha)+o(1),
    \]
    with the quantities evaluated at the corresponding family parameters.
    Here $H_q^{-1}$ is the inverse of the $q$-ary entropy on $[0,1-1/q]$.
    An advantage requires the resulting asymptotic upper bound to exceed
    $\lambdacrit$.

  \item[(c)] \emph{Singleton and Plotkin.}
    The Singleton bound gives $\lambda<\alpha/2$ at finite length.
    The necessary asymptotic condition
    $\alpha/2>\lambdacrit(\alpha,\rho)$ is equivalent to
    $\rho>\rhocrit(\alpha)$, where
    \begin{equation}\label{eq:rhocrit}
      \rhocrit(\alpha)
      =\frac12\left(
        1-\sqrt{\frac{2(1-\alpha)}{2-\alpha}}
      \right).
    \end{equation}
    At fixed $q$, the asymptotic Plotkin bound strengthens the reach bound to
    $(1-1/q)\alpha/2$, which must likewise exceed $\lambdacrit$.
\end{enumerate}
\end{corollary}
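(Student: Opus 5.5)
The plan is to reduce every item to Theorem~\ref{thm:master}. If $\mu>0$, the theorem gives $\lambda>\lambdacrit(\alpha,\rho)$. So any upper bound $U$ on the limiting licensed reach must satisfy $U\ge\lambda>\lambdacrit$; equivalently, $U\le\lambdacrit$ forces $\mu\le0$ by the final clause of the theorem. Each part therefore amounts to producing a reach bound and rewriting it in closed form. Finite-length bounds pass to the limit along the family, using Definition~\ref{def:family}.

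\textbf{(a) Distance.} Definition~\ref{def:licensed} gives $2\ell+1<d$, so $\lambda=\ell/m<\delta/2$ at each length and $\lambda\le\delta/2$ in the limit. Combining with $\lambda>\lambdacrit$ gives $\delta>2\lambdacrit$. At $\rho=\tfrac12$, substitute the closed form from Theorem~\ref{thm:master}.

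For $\delta\to0$, the reach tends to $0$. Continuity of \eqref{eq:qdqi} at $\lambda=0$ gives $\Qdqi(0,\rho)=\rho$, so $\mu\to\rho-\Qpr=-(1-\rho)\alpha$. The converse direction uses two facts: $\Qdqi$ is continuous and increasing in $\lambda$, and $\delta/2>\lambdacrit$. Then a licensed reach $\lambda\in(\lambdacrit,\delta/2)$ gives $\mu>0$ by the theorem. Existence of such a decoder is assumed, not proved.

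\textbf{(b) Sphere packing.} A licensed decoder recovers every pattern of weight at most $\ell$ from its syndrome. These patterns therefore have distinct syndromes in $\F_q^{\rank(B)}$, which has $q^{\alpha m}$ elements. Hence $\mathrm{Vol}_q(m,\ell)\le q^{\alpha m}$. The standard estimate $\mathrm{Vol}_q(m,\ell)\ge q^{mH_q(\ell/m)-o(m)}$ for $\ell/m\le1-1/q$ then gives $H_q(\lambda)\le\alpha+o(1)$, and monotonicity of $H_q$ on $[0,1-1/q]$ converts this to $\lambda\le H_q^{-1}(\alpha)+o(1)$.

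I expect this step to be the main obstacle, because $q$ may vary along the family. The volume estimate needs uniformity in $q$, for example via $\log_q\binom{m}{\ell}(q-1)^\ell$ with Stirling and explicit error terms. Reach beyond $1-1/q$ is excluded by the distance bound together with Plotkin, or handled by the convention that $H_q^{-1}$ is capped at $1-1/q$.

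\textbf{(c) Singleton and Plotkin.} The decoder code has dimension $m-\alpha m$, so Singleton gives $d\le\alpha m+1$. Then $2\ell+1<d$ yields $\ell<\alpha m/2$. For fixed $q$, the asymptotic Plotkin bound $R\le1-\delta q/(q-1)$ applied to the decoder code, with $R=1-\alpha$, gives $\delta\le(1-1/q)\alpha$ and hence reach at most $(1-1/q)\alpha/2$.

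The remaining work is the algebraic equivalence of $\alpha/2>\lambdacrit(\alpha,\rho)$ with $\rho>\rhocrit(\alpha)$. By Theorem~\ref{thm:master}, and since $\alpha/2<1$, the condition $\alpha/2>\lambdacrit$ is equivalent to $\Qdqi(\alpha/2,\rho)>\Qpr$. Assuming $\alpha/2\le1-\rho$, write $\lambda=\sin^2\theta$ with $\theta$ fixed by $\alpha$, and $\rho=\sin^2\phi$. The inequality becomes
\[
\sin^2(\theta+\phi)>\sin^2\phi+\alpha\cos^2\phi.
\]
Expanding this in $\phi$ and solving the resulting quadratic in $\cos2\phi$ should give the single threshold $\rho=\rhocrit(\alpha)$, and I will verify that the inequality holds exactly above it. If instead $\alpha/2>1-\rho$, then $\Qdqi=1>\Qpr$ holds trivially; this case lies above $\rhocrit$ and is consistent with the claimed equivalence. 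I will handle it separately.
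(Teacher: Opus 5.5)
Your treatment of (a), of the Singleton and Plotkin reach bounds in (c), and of the saturated case follows the paper's proof. The gap is in the density equivalence of (c), where you expect the computation to give a single threshold. It gives two.

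With $\sin^2\theta=\alpha/2$ you have $\cos2\theta=1-\alpha$ and $\sin2\theta=\sqrt{\alpha(2-\alpha)}$. The $\cos2\phi$ terms then cancel, and your inequality becomes $\sqrt{\alpha(2-\alpha)}\,\sin2\phi>\alpha$, i.e.\ $\rho(1-\rho)>\alpha/\bigl(4(2-\alpha)\bigr)$. This is the paper's Lemma~\ref{lem:densitywindow}. It describes a two-sided window $\rho_-<\rho<\rho_+$, where $\rho_\pm=\tfrac12\bigl(1\pm\sqrt{2(1-\alpha)/(2-\alpha)}\bigr)$ and $\rho_-=\rhocrit(\alpha)$.

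To conclude that an advantage holds exactly when $\rho>\rhocrit(\alpha)$, you must show the upper root lies off the unsaturated branch, i.e.\ $\rho_+>1-\alpha/2$. Otherwise densities in $[\rho_+,1-\alpha/2]$ would exceed $\rhocrit$ and still give no advantage. Your separate saturated case only shows that saturated densities lie above $\rhocrit$, so it does not rule this out. The missing inequality is $\sqrt{2(1-\alpha)/(2-\alpha)}>1-\alpha$. Squaring and dividing by $1-\alpha$ turns it into $2>(1-\alpha)(2-\alpha)$, which holds for $0<\alpha<1$. In your angles it reads $\phi_-<\theta$. Here $\sin2\phi_-=\sqrt{\alpha/(2-\alpha)}<\sqrt{\alpha(2-\alpha)}=\sin2\theta$, and both angles lie in $(0,\pi/4)$. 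So the window's upper end $\pi/2-\phi_-$ lies beyond the branch endpoint $\pi/2-\theta$.

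In (b), the uniformity you flag as the main obstacle is immediate. Keep the single term $\binom m\ell(q-1)^\ell$ and use $\binom m\ell\ge2^{mH_2(\lambda)}/(m+1)$. This gives $V_q(m,\ell)\ge q^{mH_q(\lambda)}/(m+1)$, so $\alpha\ge H_q(\lambda)-\log_q(m+1)/m$, with error at most $\log_2(m+1)/m$ for every $q$. The range issue also disappears: $2\ell+1<d\le m$ already gives $\lambda<1/2\le1-1/q$, so Plotkin is not needed.

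The step that does need care is the one you attribute to monotonicity. Monotonicity only yields $\lambda\le H_q^{-1}(\alpha+o(1))$. To reach $\lambda\le H_q^{-1}(\alpha)+o(1)$ while $q$ varies, you need the inverses to be equicontinuous along the family. The paper obtains this in two cases. On subsequences with bounded $q$, it uses continuity of each $H_q^{-1}$. When $q\to\infty$, it uses $H_q(x)=x+O(1/\ln q)$ uniformly on $[0,1-1/q]$, which gives $H_q^{-1}(\alpha)=\alpha+o(1)$ uniformly.
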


These bounds concern the reach a decoder could achieve. Rule \gate{2} asks whether a
polynomial-time decoder is known to satisfy the required license, which our efficient DQI guarantees need in order to apply. Without such a decoder,
this analysis leaves the candidate unresolved and gives no upper bound on the
quality DQI could attain. Decoders that succeed only on average require a different
analysis (Section~\ref{sec:discussion}).

The coding bounds constrain the size of an advantage as well as its existence.
Even a decoder approaching the Singleton reach has a limited margin over Prange
completion. Optimizing over the accepted-set density gives a cap that depends only on
the optimizer-code rate. This cap gives a common scale for comparing candidate families.

\subsection{A sharp margin cap}

\begin{proposition}[margin cap]\label{prop:cap}
Let the objective be the indicator and $0<\alpha<1$. At every finite length,
the licensed reach satisfies $\lambda<\alpha/2$, and hence
\[
  \Qdqi(\lambda,\rho)-\Qpr(\alpha,\rho)<\kappa(\alpha).
\]
In the limit,
\begin{equation}\label{eq:kappa}
  \mu\le\kappa(\alpha)
  =\sqrt{u(1-u)}-u,
  \qquad u=\alpha/2.
\end{equation}
In particular,
\[
  \mu\le\frac{\sqrt2-1}{2}.
\]
Attaining this universal cap in the limit requires
$\rho\to\tfrac12$, $\alpha\to1-1/\sqrt2$, and decoder-code relative distance
tending to $\alpha$, which in turn requires $q\to\infty$.
\end{proposition}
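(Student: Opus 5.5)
Write $u=\alpha/2$. The plan is to fix the reach bound, then reduce the margin to a one-variable maximization over the density.

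\emph{Step 1: the finite-length reach bound.} By the Singleton bound, $d\le m-\rank(B)+1=m(1-\alpha)+1$. Here the decoder code has dimension $m-\alpha m$. Combined with the licensing condition $2\ell+1<d$, this gives $2\ell<m(1-\alpha)$, hence $\lambda<(1-\alpha)/2$.

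This does not match the claimed bound $\lambda<\alpha/2$. Looking again at Corollary~\ref{cor:rules}(c), the paper states the Singleton reach as $\lambda<\alpha/2$. So I should check the dimension conventions: with $\optcode=\operatorname{im}B$ of dimension $\alpha m$, the decoder code has dimension $(1-\alpha)m$, and Singleton gives $\delta\le\alpha+1/m$. Then $2\ell+1<d\le\alpha m+1$ yields $\lambda<\alpha/2$. I will use this form.

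\emph{Step 2: reduce to a maximization over $\rho$.} Since $\Qdqi$ is nondecreasing in $\lambda$, it suffices to bound $f(\rho)=\Qdqi(u,\rho)-\rho-(1-\rho)\alpha$ over $\rho\in(0,1)$. For the strict finite-length inequality I will use strict monotonicity on the unsaturated branch, or note that the bound is not attained.

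Consider first the saturated branch, $u>1-\rho$. There $f=(1-\rho)(1-\alpha)<u(1-2u)\le\kappa$. The comparison needs checking, but it follows from $u(1-2u)\le\sqrt{u(1-u)}-u$, which is equivalent to $2u(1-u)\le\sqrt{u(1-u)}$, i.e.\ $u(1-u)\le 1/4$.

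On the unsaturated branch, expand the square:
\[
f=u(1-\rho)+\rho(1-u)+2\sqrt{u(1-u)\rho(1-\rho)}-\rho-(1-\rho)2u
= -u+2\sqrt{u(1-u)}\sqrt{\rho(1-\rho)} .
\]
The $\rho$-linear terms cancel: $u-u\rho+\rho-u\rho-\rho-2u+2u\rho=-u$. This is maximized at $\rho=\tfrac12$ and gives exactly $\sqrt{u(1-u)}-u=\kappa(\alpha)$. Thus $\mu\le\kappa(\alpha)$ in the limit, because $\lambda\le\alpha/2$ persists in the limit and $\Qdqi$ is continuous.

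\emph{Step 3: the universal cap.} Maximize $g(u)=\sqrt{u(1-u)}-u$ over $u\in(0,\tfrac12)$. Setting $g'(u)=0$ gives $(1-2u)=2\sqrt{u(1-u)}$. Squaring, $1-4u+4u^2=4u-4u^2$, so $8u^2-8u+1=0$ and $u=\tfrac12-\tfrac{\sqrt2}{4}$. Then $\alpha=1-1/\sqrt2$, and substituting back gives $g=(\sqrt2-1)/2$.

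\emph{Step 4: attaining the cap.} Equality in Step 2 forces $\rho\to\tfrac12$ and $\lambda\to\alpha/2$. The latter needs $\delta\to\alpha$, since $\lambda\le\delta/2\le\alpha/2$. Codes meeting Singleton asymptotically at fixed $q$ are excluded by the Plotkin bound $\delta\le(1-1/q)\alpha+o(1)$, which forces $q\to\infty$. I expect the main obstacle to be the bookkeeping in Step 1: getting the dimension conventions and the strict inequality right at finite length. The saturated-branch comparison is the other place needing care.
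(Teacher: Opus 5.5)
Your proposal is correct and follows the paper's proof essentially step for step: the Singleton reach bound $\lambda<\alpha/2$, then the split into the unsaturated branch (where the margin reduces to $2\sqrt{u(1-u)\rho(1-\rho)}-u$, maximized at $\rho=\tfrac12$) and the saturated branch (bounded using $u(1-2u)\le\kappa(\alpha)$), then the maximization at $u=(2-\sqrt2)/4$, and finally Plotkin forcing $q\to\infty$. When you write it up, delete the false start in Step~1, where you applied Singleton with the optimizer code's dimension (the decoder code has dimension $(1-\alpha)m$, so $d\le\alpha m+1$ directly), and in Step~4 state explicitly, as the paper does, that the saturated branch cannot approach the universal cap because there the margin is at most $u(1-2u)\le 1/8<(\sqrt2-1)/2$.
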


The quantity $\kappa(\alpha)$ is the asymptotic margin at density $\tfrac12$ and
Singleton reach $\alpha/2$.\label{sec:cap} The finite-parameter inequality above
uses the asymptotic quality function $\Qdqi$. Actual finite-length comparisons
use $\Qdqi^{(m)}$ instead. Supplementary Sections~\ref{sec:ruleproofs} and~\ref{sec:capproof}
give the proofs, and supplementary Section~\ref{sec:rulesreading} discusses them further.

\subsection{Stronger classical baselines}\label{sec:rulesbaseline}

The reach test compares DQI with an algorithm that fixes an information set and accepts
the resulting values elsewhere. A candidate may satisfy that condition yet admit a stronger
classical method. We consider two mechanisms: an encoder that uses the full collection
of accepted sets to find a high-quality codeword, and a linear-algebra method that
exploits affine subspaces within those sets.

Here an \emph{encoder} is understood in the sense of lossy source coding: given the
accepted sets, it returns a codeword of $\optcode$ satisfying many rows. If such an
encoder attains the existential line, the counting bound leaves no positive limiting
margin for DQI. For the second mechanism, expressing accepted affine subspaces by
equations over a subfield can allow a classical solver to enforce more rows than
ordinary Prange completion.
This mechanism is the deterministic case of the extended Prange algorithm of
Khattar et al.~\cite{khattar25verifiable}, who spend a budget of $bn$ base-field
constraints across codimensions chosen by linear programming and allow each
constraint to be satisfied with probability below one. Rule \gate{4} uses only the
single codimension realized by a coset contained in an accepted set, which gives a
weaker baseline but a closed-form condition on the invariants.

Against the applicable baselines, define
\begin{equation}\label{eq:muplus}
  \mu^+
  =\Qdqi(\lambda,\rho)
   -\max\Bigl\{
      \Qpr(\alpha,\rho),\
      \Qstar\ \text{if efficiently attainable},\
      \Qpr(\asub,\rho)\ \text{under \gate{4}}
    \Bigr\}.
\end{equation}
The subfield term is the limiting quality, and the following proposition gives
its finite-length correction.

\begin{proposition}[rules \gate{3} and \gate{4}]\label{prop:remaining}
\leavevmode
\begin{itemize}
  \item[\gate{3}]
    Suppose the accepted sets are independent, uniformly random, and independent of $B$.
    If $\optcode$ has a polynomial-time encoder attaining the existential line
    $\Qstar$ of Definition~\ref{def:regime}, then $\mu^+\le0$.
    An advantage over efficient encoders therefore requires that no such encoder exist.

  \item[\gate{4}]
    Let $q=p^b$. For each accepted set, consider the smallest codimension of an
    $\F_p$-affine subspace it contains. Let $c>0$ be the largest of these
    codimensions and put
    \[
      \asub=\min(1,b\alpha/c).
    \]
    Suppose:
    \begin{enumerate}
      \item[(a)] the accepted sets are independent of $B$ and of one another,
        and each has a translation-invariant law;
      \item[(b)] the rows selected for linear algebra are chosen using only
        translation-invariant properties of their own accepted sets;
      \item[(c)] affine subspaces realizing the stated codimensions can be found
        in polynomial time, and the resulting equations over $\F_p$ have full rank.
    \end{enumerate}
    Condition on the rowwise translation-invariant data determining $c$
    and the row selection. Subfield linear algebra then attains conditional
    expected quality
    \[
      \rho+(1-\rho)\frac{\lfloor\asub m\rfloor}{m}.
    \]
    Without conditioning, take the expectation of this right-hand side as well.
    The following comparisons are conditional on these data, and a limiting
    comparison presumes convergence of the resulting subfield rate.
    Consequently,
    \[
      \mu^+
      \le\Qdqi(\lambda,\rho)-\Qpr(\asub,\rho)+O(1/m),
    \]
    and a positive limiting margin requires
    $\Qdqi(\lambda,\rho)>\Qpr(\asub,\rho)$.
    At $\rho=\tfrac12$, this condition fails for every licensed Hamming reach when
    \[
      \alpha\ge\frac{2c^2}{b^2+c^2},
    \]
    up to the finite-length $O(1/m)$ correction.
\end{itemize}
\end{proposition}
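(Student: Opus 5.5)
The plan is to treat the two rules separately. \gate{3} should follow from the existential line alone. \gate{4} needs an explicit construction of the subfield solver, a probability computation for the rows it does not enforce, and then a comparison of the two quality laws at $\rho=\tfrac12$.

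\emph{Rule \gate{3}.} The key observation is that DQI outputs an assignment $x$, so its quality is bounded by the best assignment. I would use the high-probability statement after Definition~\ref{def:regime} (supplementary Section~\ref{sec:counting}). For every $\varepsilon>0$, the event $E_\varepsilon$ that some $x$ has $Q(x)>\Qstar+\varepsilon$ has probability $o(1)$ over the random accepted sets. Since $Q\le1$, the expected DQI quality is at most $\Qstar+\varepsilon+\Pr[E_\varepsilon]$. Passing to the limit and then letting $\varepsilon\to0$ gives $\Qdqi(\lambda,\rho)\le\Qstar$. If an efficient encoder attains $\Qstar$, then $\Qstar$ enters the maximum in \eqref{eq:muplus}, and so $\mu^+\le\Qdqi-\Qstar\le0$. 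This part is routine.

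\emph{Rule \gate{4}, construction.} I would identify $\F_q\cong\F_p^b$, so that $x$ ranges over an $\F_p$-space of dimension $b\,\rank(B)=b\alpha m$. Each row $i$ carries an affine subspace $A_i\subseteq F_i$ of codimension $c_i\le c$. The constraint $(Bx)_i\in A_i$ is then $c_i$ affine $\F_p$-equations in $x$; if $c_i<c$, I pad with redundant equations so that every row costs $c$. I would select $k=\lfloor\asub m\rfloor$ rows, using only translation-invariant data of each row's own set, as hypothesis~(b) requires. This uses at most $ck\le b\alpha m$ equations, so by the full-rank part of hypothesis~(c) the system is solvable in polynomial time. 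Every selected row is then satisfied.

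\emph{Rule \gate{4}, unselected rows (main obstacle).} The delicate step is to show that each unselected row $j$ is satisfied with conditional probability exactly $\rho$. The solution $x$ is a function of $B$, of the selected sets, and of the selection. By hypothesis~(a), $F_j$ is independent of all of these. We condition only on the translation-invariant data of $F_j$, and its conditional law is still translation invariant, since translating $F_j$ preserves that data. For a fixed value $v=(Bx)_j$, translation invariance gives $\Pr[v\in F_j]=\mathbb{E}|F_j|/q=\rho$. I need to check carefully that conditioning on the selection does not break this invariance, and this is exactly where hypothesis~(b) is used. Linearity of expectation then gives the conditional expected quality $\rho+(1-\rho)\lfloor\asub m\rfloor/m$. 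This equals $\Qpr(\asub,\rho)+O(1/m)$, which yields the bound on $\mu^+$ and the stated requirement for a positive limiting margin.

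\emph{Rule \gate{4}, the threshold at $\rho=\tfrac12$.} Here \eqref{eq:qdqi} becomes $\Qdqi=\tfrac12+\sqrt{\lambda(1-\lambda)}$, which increases for $\lambda<\tfrac12$. By the Singleton bound of Corollary~\ref{cor:rules}(c), $\lambda<\alpha/2=u$, so $\Qdqi<\tfrac12+\sqrt{u(1-u)}$. Meanwhile $\Qpr(\asub,\tfrac12)=\tfrac12+\asub/2$. If $\asub=1$, the baseline equals $1$ and the condition fails trivially. Otherwise $\asub/2=bu/c$, and the condition fails once $\sqrt{u(1-u)}\le bu/c$. Squaring gives $1-u\le b^2u/c^2$, that is, $u\ge c^2/(b^2+c^2)$, which is $\alpha\ge2c^2/(b^2+c^2)$. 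The finite-length statement holds up to the $O(1/m)$ floor correction.
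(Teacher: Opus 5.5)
Your proof is correct. For \gate{4} it follows the paper's argument: Proposition~\ref{prop:subfield}, then the Singleton comparison at $\rho=\tfrac12$. Your direct treatment of the saturated case $\asub=1$ replaces the paper's remark that $2c^2/(b^2+c^2)\le c/b$, and either works.

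Two small points would tighten \gate{4}.
\begin{itemize}
\item[(1)] The solution $x$ is not literally independent of $F_j$, because whether row $j$ is selected may depend on $F_j$'s translation-invariant data. The clean version is the two-stage conditioning you point toward. First condition on all rowwise invariant data and the selection randomness; the sets then stay independent with translation-invariant laws. Then condition on the selected sets and the completion randomness.
\item[(2)] Padding with literally redundant equations clashes with the full-rank hypothesis as stated. Either use the $c_i\le c$ equations directly, for a total of at most $c\lfloor\asub m\rfloor\le b\alpha m$, or, as the paper suggests, pass to a contained coset of codimension exactly $c$.
\end{itemize}

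For \gate{3} you take a different route. You bound the optimum itself with the first-moment tail bound of Lemma~\ref{lem:firstmoment}(ii), which gives $\Qdqi^{(m)}\le\Qstar+\varepsilon+e^{-2m\varepsilon^2}$; the case $\Qstar=1$ is trivial. The paper instead proves a rate--distortion finite converse (Corollary~\ref{cor:rdisexistential}). Any possibly randomized encoder into a codebook of size $q^{\alpha m}$, independent of the accepted sets, has expected quality at most the finite-parameter $\Qstar$. This follows from $mR_\rho(D)\le I(F^m;Y^m)\le H(Y^m)\le\alpha m\ln q$ together with $R_\rho(D)=\KL{1-D}{\rho}$.

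The paper's version gives an exact inequality at every length with no slack. It also identifies $\Qstar$ as an achievable rate--distortion limit, which is what makes ``an encoder attaining $\Qstar$'' a natural benchmark. Your version is more elementary and controls the best codeword with high probability, not just in expectation. It is the same mechanism the paper uses for the cosine objective. It yields only the limiting inequality $\Qdqi(\lambda,\rho)\le\Qstar$, but that is all the statement needs.
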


Proofs of rules \gate{3} and \gate{4} appear in supplementary
Sections~\ref{sec:rd} and~\ref{sec:subfield}.

Together, these results give the four screening rules summarized in
Table~\ref{tab:rules}. They address three questions: whether the code permits enough
decoding reach, whether an efficient decoder achieves the required guarantee, and
whether a stronger classical method removes the resulting advantage.

We call an exclusion rule
\emph{sound} when an advantage implies its condition under the stated hypotheses,
so violating \gate{1}, \gate{3} or \gate{4} excludes an advantage over
the corresponding baseline.
The rules do not establish an advantage over all classical algorithms, and
without a known encoder attaining $\Qstar$, rule \gate{3} stays unresolved.

\begin{table}[htbp]
\centering
\small
\begin{tabular}{l L{0.47\textwidth} L{0.2\textwidth} l}
\toprule
rule & condition & comparison & statement\\
\midrule
\gate{1} reach
  & Every applicable upper bound on the licensed reach must exceed $\lambdacrit$:
    distance, sphere packing, Singleton, and Plotkin.
  & one Prange completion
  & Cor.~\ref{cor:rules}\\
\gate{2} decoder
  & An efficient decoder must be known to satisfy the required license.
  & applicability
  & \S\ref{sec:rules}\\
\gate{3} encoder
  & No polynomial-time encoder for the optimizer code may attain $\Qstar$.
  & an encoder attaining $\Qstar$
  & Prop.~\ref{prop:remaining}\\
\gate{4} subfield
  & DQI must outperform the applicable subfield linear-algebra baseline.
  & linear algebra over $\F_p$
  & Prop.~\ref{prop:remaining}\\
\bottomrule
\end{tabular}
\caption{The soundness rules for the indicator objective.
Rules \gate{1}, \gate{3}, and \gate{4} give necessary conditions for an advantage
under their respective hypotheses. Rule \gate{2} is an applicability condition.
Figure~\ref{fig:rulesmap} identifies the instance properties used by each rule.}
\label{tab:rules}
\end{table}

\input{figures/rules-map}

\subsection{Evaluating the rules}\label{sec:checkcost}

Evaluating the rules numerically uses
a small collection of code and objective parameters, although establishing the required
bounds and algorithmic guarantees can itself require substantial work.

For the indicator objective, parts (b) and (c) of rule \gate{1} use only $q$, $\alpha$,
and $\rho$. The entropy inverse in part (b) can be evaluated by bisection, and the
rate follows from the rank of $B$, computed by Gaussian elimination in
$O(mn^2)$ field operations.

Part (a) can exclude a candidate using any sufficiently small upper bound on the
decoder-code distance. Such a bound may come from the construction, a general
coding bound, or a low-weight codeword found through a short-dependency search,
as in DQI-Kit~\cite{thelen26dqikit}. Computing the exact distance is
NP-hard~\cite{vardy97}, but exclusion does not require it. Conversely, a sufficiently
large certified lower bound, such as a designed distance, can establish that the
distance condition is satisfied. A decoder with the required licensed reach must
still be supplied separately.

The remaining checks use algorithmic information: a decoder guarantee for \gate{2},
an encoder attaining $\Qstar$ that would trigger \gate{3}, or efficiently available
affine subspaces satisfying the hypotheses of \gate{4}. A missing certificate leaves
the corresponding question unresolved. Once the required bounds and certificates
are available, the remaining calculations are scalar inequalities.

Evaluating these inequalities from the supplied invariants took under two seconds
across the $109$ families in our search
(supplementary Section~\ref{sec:search}), whereas constructing decoders and testing classical
solvers took hours per candidate (Section~\ref{sec:problems}).

\subsection{A finite-length comparison with Prange restarts}\label{sec:converse}

Beyond these necessary conditions, a positive finite margin
gives a guarantee against a specified number of Prange completions.
Restarting improves the classical score by keeping the best completion, but
concentration bounds limit that improvement when each restart chooses its information
set independently of the accepted sets.
The relevant quantity is the exact finite margin $\mu_m$, not its asymptotic
limit. We state the result for both the indicator objective and the cosine objective
of Section~\ref{sec:rulesalphabet}, whose objective ranges are $1$ and $2$.

\begin{theorem}[DQI against $N$ Prange restarts]\label{thm:converse}
Suppose DQI has a decoder licensed at radius $\ell$: in the sense of
Definition~\ref{def:licensed} for the indicator objective, or licensed for
$\{\pm1\}$ in the sense of Definition~\ref{def:licensedS} for the cosine.
Let $\Qdqi^{(m)}$ be its exact expected quality at length $m$, let the objective
have range $R$, and suppose
\[
  \mu_m=\Qdqi^{(m)}-\Qpr>0.
\]
For the indicator objective, take $R=1$ and accepted sets drawn independently
and uniformly, independently of $B$. For the cosine, take $R=2$ and centers
drawn independently and uniformly, independently of $B$, as in supplementary
Lemma~\ref{lem:centered}.

Let $N\ge1$ and $\eta\in(0,1)$. Each restart must follow supplementary
Section~\ref{sec:isbaseline}, with its information set and tie-breaking chosen
independently of the accepted sets or centers. If
\[
  m\ge
  \frac{(1-\alpha)R^2\ln(N/\eta)}{2\mu_m^2},
\]
then, with probability at least $1-\eta$, the best of the $N$ restarts has total
score strictly below $m\Qdqi^{(m)}$.
\end{theorem}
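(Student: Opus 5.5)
The approach is a union bound over restarts combined with Hoeffding's inequality applied to a single completion. Fix one restart. Its information set $I$, with $|I|=\alpha m$, and its tie-breaking are chosen independently of the accepted sets (or centers). Condition on $I$ and on the accepted sets for the rows in $I$. Then the values on $I$ are fixed, and so is the codeword. By the procedure of supplementary Section~\ref{sec:isbaseline}, the rows in $I$ contribute a fixed amount. For the indicator this is exactly $\alpha m$, and for the cosine it is the maximal value on each row of $I$. The remaining $(1-\alpha)m$ rows have accepted sets (centers) that are independent of one another and of the conditioning. Each such row's contribution is therefore an independent random variable with range $R$. Its mean is $\rho$ for the indicator, and for the cosine the mean is the one supplied by Lemma~\ref{lem:centered}, since the uniformly random center makes the fixed codeword value uniformly shifted. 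Summing the fixed part and these means gives an expected total score of exactly $m\Qpr$, which is the quantity in the definition of $\mu_m$.

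Next, apply Hoeffding's inequality to the sum $S$ of the $(1-\alpha)m$ independent terms, each of range $R$:
\[
  \Pr\bigl[S-\mathbb{E}S\ge m\mu_m\bigr]\le\exp\!\Bigl(-\frac{2(m\mu_m)^2}{(1-\alpha)mR^2}\Bigr)
  =\exp\!\Bigl(-\frac{2m\mu_m^2}{(1-\alpha)R^2}\Bigr).
\]
The event that a completion's total score is at least $m\Qdqi^{(m)}=m\Qpr+m\mu_m$ is exactly this deviation event. The hypothesis $m\ge(1-\alpha)R^2\ln(N/\eta)/(2\mu_m^2)$ makes the right-hand side at most $\eta/N$. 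This bound holds conditionally on every value of the conditioning, so it also holds unconditionally. A union bound over the $N$ restarts then shows that with probability at least $1-\eta$ every restart has score strictly below $m\Qdqi^{(m)}$. This holds even if the restarts are dependent on one another, since the union bound needs no independence across restarts. It does need each restart's information set to be independent of the accepted sets, which is what makes each per-restart bound valid.

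The main obstacle is justifying the conditioning cleanly. We must verify that, given $I$ and the data on $I$, the completed coordinates outside $I$ are deterministic, while the accepted sets outside $I$ remain independent and uniform. Both follow because $B$ is fixed and the accepted sets are independent across rows and of $B$. For the cosine we also need the exact per-row mean supplied by Lemma~\ref{lem:centered}, and we need the total contributed by $I$ to match the definition of $\Qpr$ there. Finally, the licensed-decoder hypothesis enters only through the definition of the exact quantity $\Qdqi^{(m)}$. It is used nowhere else in the argument.
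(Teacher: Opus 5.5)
Your proposal is correct and follows essentially the same route as the paper. Conditionally on each restart's information set and the data it uses (Lemmas~\ref{lem:isscore} and~\ref{lem:centered}), the rows outside $I$ contribute independent terms of range $R$ that bring the expected total to $m\Qpr$, and Hoeffding's inequality gives the per-restart tail $\exp\bigl(-2m\mu_m^2/((1-\alpha)R^2)\bigr)$ as in Lemma~\ref{lem:rangescore}. A union bound over the $N$ restarts, which needs no independence between them, then finishes the proof.

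One small refinement, which the paper makes explicit: condition also on the auxiliary randomness used to choose the accepted values on $I$. The accepted sets on $I$ alone need not fix the completed codeword.
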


The proof combines Hoeffding's inequality with a union bound over the restarts
(supplementary Sections~\ref{sec:converseproof} and~\ref{sec:converserange}).
The guarantee concerns the specified completion procedure. Algorithms that use
accepted sets outside the chosen information set to improve their assignments,
including the solvers in Section~\ref{sec:comparators}, require separate comparisons.
We quote thresholds at $N=2^{40}$ and $\eta=0.05$.

The theorem compares classical scores with DQI's expected quality. To relate that
expectation to individual samples, let $Z\le1$ be the quality of a DQI output with
$\mathbb EZ\ge Q$. For every $\tau<Q$,
\begin{equation}\label{eq:reversemarkov}
  \Pr[Z\ge\tau]\ge\frac{Q-\tau}{1-\tau}.
\end{equation}
Under the worst-case decoding license, this applies on every instance with
$Q=\Qdqi^{(m)}$. A target separated from that expectation by a fixed positive
gap is therefore attained with constant probability.

\section{Objective-dependent decoding guarantees}\label{sec:rulesalphabet}

The preceding bounds concern decoders licensed for all errors within a Hamming
radius. DQI's decoding requirements can be narrower, because the objective determines
which error values occur in the quantum state.
Lemma~\ref{lem:objlaw} gives the resulting quality law for general objectives, and
Proposition~\ref{prop:alphabet} specializes it to the cosine objective used by
our MPI candidate.

\subsection{Quality and decoding requirements}

Consider a real-valued objective of the form
\[
  \sum_{i=1}^{m}F\bigl((Bx)_i-c_i\bigr),
\]
where each row uses the same function $F:\F_q\to\mathbb R$ up to a shift $c_i$.
The nonzero entries of DQI's error register lie in the nonzero Fourier support
of $F$~\cite[\S8.2]{jordan24dqi}. Thus the decoder needs to recover only patterns
using those error values, together with the separation required by the quality
calculation.

\begin{definition}[licensed for a set]\label{def:licensedS}
For $S\subseteq\F_q^*$, let
\[
  \mathcal E_{S,\ell}
  =\{y\in(S\cup\{0\})^m:\operatorname{wt}(y)\le\ell\}.
\]
A decoder for $\deccode$ is \emph{licensed for $S$ at radius $\ell$} if it
recovers every $y\in\mathcal E_{S,\ell}$ from its syndrome and satisfies
\[
  B^{\mathsf T}y'\neq B^{\mathsf T}(y+ae_i)
\]
for every $y,y'\in\mathcal E_{S,\ell}$, $a\in S$, and coordinate $i$ whenever
$y'\neq y+ae_i$. Here $e_i$ is the $i$-th unit vector.
\end{definition}

The second condition is the separation hypothesis of supplementary
Lemma~\ref{lem:cosinelaw}. For the full error alphabet, the distance condition
in Definition~\ref{def:licensed} supplies this separation. For a restricted
alphabet, it can hold at larger radii because fewer patterns need to be
distinguished.

As Lemma~\ref{lem:objlaw} below shows, the objective's mean, variance and skewness set the quality obtained at a given
reach, and its Fourier support sets the decoding problem that must be solved
to attain that reach.

\begin{lemma}[the general-objective law]\label{lem:objlaw}
Let $q$ be prime and let the per-row objectives be
$f_i(v)=F(v-c_i)$, where $F:\F_q\to\mathbb R$ has mean $\bar F$, standard
deviation $\sigma>0$, skewness $\gamma$, and nonzero Fourier support $S$.
Suppose the decoder is licensed for $S$ at radius $\ell$
(Definition~\ref{def:licensedS}).

With its shell weights chosen optimally~\cite{jordan24dqi}, DQI has exact
expected objective
\[
  m\bar F+\sigma\lambda_{\max}(A),
\]
where $A$ is the tridiagonal matrix indexed by $\{0,\dots,\ell\}$ with
\[
  A_{kk}=k\gamma,
  \qquad
  A_{k-1,k}=A_{k,k-1}=\sqrt{k(m-k+1)}
  \quad (1\le k\le\ell).
\]
If $\ell/m\to\lambda\in(0,\tfrac12]$, and the moments converge when $q$ or
$F$ varies along the family, the expected objective per row tends to
\begin{equation}\label{eq:objlaw}
  \bar F+\sigma
  \max_{0\le t\le\lambda}
  \left(\gamma t+2\sqrt{t(1-t)}\right),
\end{equation}
with the limiting moments used on the right-hand side.

For centers drawn independently and uniformly, independently of $B$, one
Prange completion has expected objective per row
\[
  \alpha\max F+(1-\alpha)\bar F.
\]
\end{lemma}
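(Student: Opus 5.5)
The plan is to reduce the general objective to the setting of Jordan et al.\ by normalizing $F$ and tracking what survives. Write $F=\bar F+\sigma g$ with $g$ of mean zero and unit variance, so $\sum_i f_i((Bx)_i)=m\bar F+\sigma\sum_i g((Bx)_i-c_i)$. The DQI state is $P(f)\lvert x\rangle$ with $P$ a degree-$\ell$ polynomial in the centered objective. Expanding each $g(\cdot-c_i)$ in the additive characters of $\F_q$ shows that only frequencies in $S$ appear, since the zero frequency is removed by centering. This is the observation of~\cite[\S8.2]{jordan24dqi}. The error register therefore ranges over $\mathcal E_{S,\ell}$, and the license of Definition~\ref{def:licensedS} lets the syndrome be uncomputed exactly.

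Next I would compute $\langle P(f)\rangle$-weighted expectations of $\sum_i g_i$ in the Dicke-like basis $\lvert P^{(k)}\rangle$. This is the shell of weight $k$, normalized, with amplitudes given by the products of the Fourier coefficients $\hat g_i(y_i)$. Applying $\sum_i g_i$ to a weight-$k$ shell basis state gives three kinds of terms. First, terms that raise a zero coordinate to a nonzero symbol; these produce the off-diagonal entry $\sqrt{(k+1)(m-k)}$ after normalization, because $\sum_{a\in S}|\hat g(a)|^2=1$ by Parseval. Second, terms that lower a nonzero coordinate to zero, which give the transpose entry. Third, terms that move a nonzero symbol $a$ to another nonzero symbol $a+b$. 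The third kind is where the skewness appears. Summed over the shell with the Fourier weights, it gives a diagonal contribution $k\sum_{a,b}\overline{\hat g(a+b)}\hat g(a)\hat g(b)$, which equals $k\,\mathbb E[g^3]=k\gamma$. The separation clause of Definition~\ref{def:licensedS} guarantees that, after uncomputation, distinct pairs $y'\ne y+ae_i$ do not interfere. The cross terms therefore vanish, and the expectation is exactly the quadratic form $w^{\mathsf T}Aw$ in the shell weights $w$. Optimizing over unit $w$ gives $\lambda_{\max}(A)$. This step, keeping the bookkeeping of the $a\to a+b$ transitions and their normalization exact, is the main obstacle. I would follow the proof of Lemma~9.2 in~\cite{jordan24dqi}, replacing their two-valued objective by the general $g$.

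For the limit, rescale $A/m$. It is a tridiagonal matrix with slowly varying entries $k\gamma/m$ and $\sqrt{k(m-k+1)}/m$. Its top eigenvalue divided by $m$ converges to the maximum of the symbol $\gamma t+2\sqrt{t(1-t)}$ over $t\in[0,\lambda]$. For the upper bound, apply Gershgorin-type estimates locally, or bound the form by the symbol via Cauchy--Schwarz on adjacent weights. For the lower bound, use a test vector concentrated in a window around the maximizing $t$. Continuity of the moments handles families in which $q$ or $F$ varies.

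The Prange statement is direct. On the $\alpha m$ information-set rows we set $(Bx)_i=c_i+\arg\max F$, which gives $\max F$ on each. On each remaining row the value $(Bx)_i$ is independent of the uniform center $c_i$, so $(Bx)_i-c_i$ is uniform and contributes $\bar F$ in expectation. Summing over rows gives the stated formula.
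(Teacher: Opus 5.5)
You follow the paper's proof in almost every step. You standardize $F$ and pass to the Fourier-transformed weight-$k$ shells. You use the separation clause of Definition~\ref{def:licensedS} both for orthonormality of the shells and for exactness of the compressed objective: the raising entries $\sqrt{(k+1)(m-k)}$ come from Parseval, and the diagonal $k\gamma$ from the identity $\sum_{a,b}\overline{\hat g(a+b)}\hat g(a)\hat g(b)=\mathbb E[g^3]$. You then take a Gershgorin-type row-sum bound and a window test vector for the limit, and use independence of the unselected centers for the Prange term. The step that fails is the first one. You identify the DQI state with $\sum_x P(f(x))\lvert x\rangle$ for a degree-$\ell$ polynomial $P$ in the centered objective, and you derive the restriction of the error register to $\mathcal E_{S,\ell}$ from that form.

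For a general $F$ that derivation is false. A power of $f$ has repeated factors: $f^2$ contains $\sum_i g\bigl((Bx)_i-c_i\bigr)^2$. The Fourier transform of $g^2$ is proportional to $\hat g*\hat g$, which is supported in $S+S$. For the cosine with $q>3$, $\cos^2\theta=(1+\cos2\theta)/2$ brings in the frequencies $\pm2$. So the error register of $P(f)$ contains patterns with entries outside $S$, and a decoder licensed only for $S$ is not guaranteed to uncompute them. Moreover, for $\ell\ge2$, $P(f)$ does not lie in the span of the shells $0,\dots,\ell$, so its expected objective is not $w^{\mathsf T}Aw$.

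Jordan et al.\ rewrite a degree-$\ell$ polynomial in $f$ as a combination of elementary symmetric polynomials of degree at most $\ell$. That rewriting relies on $g$ being two-valued, so that $g^2$ is affine in $g$. This holds for the indicator but not for the cosine or a general $F$.

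The fix is to define the state directly, as the paper does, by $\sum_{k\le\ell}w_k\lvert\Phi_k\rangle$. Here $\lvert\Phi_k\rangle$ is the normalized elementary symmetric polynomial $\mathcal E_k\bigl(g_1(b_1\cdot x),\ldots,g_m(b_m\cdot x)\bigr)$ with $g_i(v)=g(v-c_i)$, and this sum need not be a polynomial in $f$. Multilinearity in distinct coordinates is then what confines every nonzero error entry to $S$. These are exactly the shells your second paragraph works with, so after this change the rest of your argument goes through.
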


The classical baseline has the same interpretation as before: the solver sets
the information-set coordinates to maximize their row objectives, and the
remaining coordinates contribute the mean $\bar F$ in expectation.

The proof and a direct finite-instance verification of the identity appear in
supplementary Sections~\ref{sec:objlawproofs} and~\ref{sec:checks}.

\subsection{The cosine objective}

The cosine gives a simple example in which the Fourier support narrows the decoding problem. Over an odd prime field,
\[
  F(v)=\cos(2\pi v/q)
\]
has Fourier support $\{\pm1\}$. DQI therefore uses only error patterns with
entries in $\{0,\pm1\}$, rather than all patterns in a Hamming ball.
A decoder licensed for these signed patterns can reach beyond half the
Hamming distance. Figure~\ref{fig:pmone} illustrates why restricting the error alphabet can
increase the correctable radius. The Hamming balls overlap, but the sets
obtained by allowing only single-coordinate $\pm1$ errors are disjoint.
Thus failure of Hamming unique decoding need not prevent recovery of
the errors used by the cosine objective.

\input{figures/pm-one-balls}

For $q>3$, the cosine has mean zero, variance $\tfrac12$, and skewness zero.
The general quality law consequently reduces to an explicit function of the
reach. The classical Prange baseline is $\alpha$, since the cosine has maximum
one and mean zero. These expressions give a new reach threshold, and counting
signed patterns gives the corresponding packing restriction.

\begin{proposition}[rules \gate{1} and \gate{2} for the cosine]\label{prop:alphabet}
Let $q>3$ be prime and let $F(v)=\cos(2\pi v/q)$, with centers drawn
independently and uniformly, independently of $B$. Suppose the decoder is
licensed for $\{\pm1\}$ at reach $\lambda\le\tfrac12$.
\begin{enumerate}
  \item[(i)] \emph{Packing bound.}
    Sphere packing gives
    \[
      \alpha
      \ge\frac{H_2(\lambda)+\lambda}{\log_2q}-o(1),
    \]
    where $H_2$ is binary entropy in bits.
    An advantage requires this bound to permit a reach above the threshold
    in (ii).

  \item[(ii)] \emph{Quality and threshold.}
    DQI's asymptotic objective per row is
    $\sqrt{2\lambda(1-\lambda)}$, and the Prange baseline is $\alpha$.
    For $0<\alpha<1/\sqrt2$, DQI beats this baseline exactly when
    \[
      \lambda>\lambdacrit^{\cos}(\alpha)
      =\frac{1-\sqrt{1-2\alpha^2}}{2}.
    \]
    No strict advantage is possible when $\alpha\ge1/\sqrt2$.

    Rule \gate{1}(a) therefore compares upper bounds on the
    $\{\pm1\}$-licensed reach with $\lambdacrit^{\cos}$, and rule \gate{2}
    asks for an efficient decoder with that license. The Hamming-based
    Singleton restriction in \gate{1}(c) does not apply.
    Theorem~\ref{thm:converse} applies with objective range $R=2$.

  \item[(iii)] \emph{Existential line.}
    The existential line is the largest mean of $\cos(2\pi v/q)$ over
    distributions on $\F_q$ whose relative entropy from uniform is at most
    $\alpha\ln q$. For $\alpha<1$, it is attained by a finite
    exponentially tilted distribution and is strictly below one.
    At $\alpha=1$, it equals one and is attained by the point mass
    at zero, obtained as the infinite-tilt limit.
\end{enumerate}
\end{proposition}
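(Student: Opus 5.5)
We prove the three parts in turn, taking Lemma~\ref{lem:objlaw} as given. For (ii) we first compute the moments of $F(v)=\cos(2\pi v/q)$ over odd prime $q>3$. Summing the character $\omega^v+\omega^{-v}$ gives mean zero. Similarly $\cos^2=\tfrac12+\tfrac14(\omega^{2v}+\omega^{-2v})$ gives variance $\tfrac12$, because $2\not\equiv0$. For the third moment, $\cos^3$ expands into characters with frequencies $\pm1,\pm3$, and these are nonzero mod $q$ when $q>3$. So the third moment vanishes and $\gamma=0$. The nonzero Fourier support is $\{\pm1\}$.

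Substituting $\gamma=0$ and $\sigma=1/\sqrt2$ into \eqref{eq:objlaw} gives $\tfrac1{\sqrt2}\max_{t\le\lambda}2\sqrt{t(1-t)}$. For $\lambda\le\tfrac12$ the function $t(1-t)$ is increasing on $[0,\lambda]$, so the maximum is at $t=\lambda$ and the objective per row is $\sqrt{2\lambda(1-\lambda)}$.

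The Prange value is $\alpha\cdot1+(1-\alpha)\cdot0=\alpha$, by the last part of Lemma~\ref{lem:objlaw}. The condition $\sqrt{2\lambda(1-\lambda)}>\alpha$ is equivalent to $\lambda(1-\lambda)>\alpha^2/2$. For $\lambda\in[0,\tfrac12]$ this holds exactly when $\lambda>(1-\sqrt{1-2\alpha^2})/2$, provided $2\alpha^2<1$. If $\alpha\ge1/\sqrt2$, then $\sqrt{2\lambda(1-\lambda)}\le1/\sqrt2\le\alpha$, so no strict advantage is possible. The remaining remarks in (ii) are bookkeeping:
\begin{itemize}
\item Singleton's Hamming bound does not constrain a $\{\pm1\}$ license, so rule \gate{1}(c) does not apply.
\item The cosine has range $[-1,1]$, so $R=2$ in Theorem~\ref{thm:converse}.
\end{itemize}

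For (i) we count patterns. A decoder licensed for $\{\pm1\}$ recovers every $y\in\mathcal E_{\{\pm1\},\ell}$ from its syndrome, so distinct such $y$ have distinct syndromes. The syndromes lie in the image of $B^{\mathsf T}$, which has size $q^{\rank B}=q^{\alpha m}$. Hence
\[
\binom{m}{\ell}2^\ell\le\sum_{k\le\ell}\binom mk 2^k\le q^{\alpha m}.
\]
Taking logarithms and using $\log_2\binom{m}{\ell}=mH_2(\ell/m)-O(\log m)$ gives $H_2(\lambda)+\lambda\le\alpha\log_2q+o(1)$, which is (i).

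For (iii) we generalize the counting bound of Definition~\ref{def:regime}, the part most likely to need care. We use the method of types: a codeword $x$ induces an empirical distribution $P$ of the shifted values $(Bx)_i-c_i$. With uniform centers, the probability of a given type is $e^{-m\,D(P\|U)+o(m)}$. A union bound over the $q^{\alpha m}$ codewords and the polynomially many types then shows that with high probability no codeword has a type with $D(P\|U)>\alpha\ln q+\varepsilon$. The achievable objective is therefore at most $\sup\{\mathbb E_P F: D(P\|U)\le\alpha\ln q\}$. The main obstacle is that the number of types is only polynomial when $q$ is fixed. For growing $q$ we would instead apply a Chernoff bound to $\sum F$ directly, whose exponent is the same variational quantity by Donsker--Varadhan duality.

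To characterize the maximizer, we use Lagrange duality for maximizing a linear functional under a relative-entropy constraint. The optimum is the Gibbs family $P_\beta(v)\propto e^{\beta F(v)}$, with $\beta$ chosen so that $D(P_\beta\|U)=\alpha\ln q$. Along this family, $D(P_\beta\|U)$ increases continuously from $0$ at $\beta=0$ to $\ln q$ as $\beta\to\infty$, since the maximum of $F$ is attained only at $v=0$. So for $\alpha<1$ a finite $\beta$ exists. Its mean is then $<1$, because $P_\beta$ has full support and $F<1$ off $0$. At $\alpha=1$ the constraint $D\le\ln q$ allows the point mass $\delta_0$, which has mean $1$ and is the $\beta\to\infty$ limit of the family.
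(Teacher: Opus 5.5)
Your proposal is correct and follows the paper's proof essentially step for step. The paper computes the cosine moments ($\bar F=0$, $\sigma^2=\tfrac12$, $\gamma=0$, $S=\{\pm1\}$) and substitutes them into the general-objective law, uses the same syndrome count $\binom m\ell 2^\ell\le q^{\alpha m}$, and identifies the first-moment line with the entropy-constrained maximization through the Gibbs tilted family. The only difference is that the paper applies the Chernoff bound with the Legendre transform $\Lambda_q^*$ at every field size from the start, rather than the method of types, which avoids the growing-$q$ type-counting issue you flag.
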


The screening principle remains the same: the available reach must exceed
the reach required to beat the baseline. What changes is the decoding problem
to which that comparison applies. For the cosine, a Hamming-distance bound
alone need not bound the licensed reach on signed patterns, so a family
excluded by the full-alphabet analysis may remain viable for this objective.

Evaluating the modified rules requires the objective's Fourier support,
obtained from a transform of length $q$, together with bounds and a decoder
guarantee for the resulting error alphabet. Once these are supplied, the
quality comparison again reduces to scalar inequalities. MPI
(Section~\ref{sec:logcauchy}) uses this restricted-alphabet route.

%% file: figures/rules-map.tex
\begin{figure}[t]
\centering
\begin{tikzpicture}[
  x=1mm, y=1mm,
  part/.style={draw, draw=black!55, rounded corners=2pt, align=center, font=\footnotesize,
               text width=26mm, minimum height=8mm, inner sep=2pt},
  fixes/.style={align=center, font=\footnotesize, text width=28mm},
  rule/.style={draw=red!70!black, text=red!70!black, rounded corners=2pt, font=\scriptsize,
               inner sep=1.5pt},
  check/.style={draw=black!60, dashed, text=black!70, rounded corners=2pt, font=\scriptsize,
               inner sep=1.5pt},
  admit/.style={draw=green!40!black, text=green!40!black, fill=green!7, rounded corners=2pt,
                font=\scriptsize, inner sep=1.5pt},
  arr/.style={-{Latex[length=1.5mm]}, black!55},
]
  \foreach \x/\name/\what in {
      0/{matrix $B$}/{rate $\alpha$, distance $\delta$},
      32/{accepted sets $F_i$}/{density $\rho$,\\ subfield structure},
      64/{decoder}/{licensed reach $\lambda$},
      96/{encoder}/{reaches the existential line?},
      128/{objective}/{Fourier support}} {
    \node[part] (p\x) at (\x,0) {\name};
    \node[fixes] (f\x) at (\x,-12) {\what};
    \draw[arr] (p\x) -- (f\x);
  }
  \node[rule] at (0,-22)  {\gate{1}};
  \node[rule] at (32,-22) {\gate{1}\,\gate{4}};
  \node[check] at (64,-22) {\gate{2}};
  \node[rule] at (96,-22) {\gate{3}};
  \node[admit] at (128,-22) {Prop.~\ref{prop:alphabet}};
\end{tikzpicture}
\caption{What each rule depends on. Rule \gate{1} bounds the reach through the threshold of
Theorem~\ref{thm:master}, \gate{2} (dashed) asks that a licensed decoder be known, so that the
theorems apply, and \gate{3} and \gate{4} raise the classical baseline. The objective's Fourier
support sets the errors over which the reach is measured (Proposition~\ref{prop:alphabet}). The
threshold itself depends on the density, so \gate{1} also reads the accepted sets.}
\label{fig:rulesmap}
\end{figure}

%% file: figures/pm-one-balls.tex
\begin{figure}[h]
\centering
\begin{tikzpicture}[
  x=4.5mm, y=4.5mm,
  pt/.style={circle, fill=black!22, inner sep=0pt, minimum size=1.2mm},
  ham/.style={circle, fill=red!70!black, inner sep=0pt, minimum size=1.8mm},
  pm/.style={circle, fill=green!40!black, inner sep=0pt, minimum size=1.8mm},
  cw/.style={circle, fill=black, inner sep=0pt, minimum size=2.6mm},
  lab/.style={font=\footnotesize},
  head/.style={font=\footnotesize\itshape, align=center, text width=40mm, anchor=south},
]
  \begin{scope}
    \fill[red!7, rounded corners=2pt] (-0.45,1.55) rectangle (6.45,2.45);
    \fill[red!7, rounded corners=2pt] (2.55,-0.45) rectangle (3.45,6.45); \fill[red!7, rounded corners=2pt] (-0.45,4.55) rectangle (6.45,5.45); \fill[red!7, rounded corners=2pt] (3.55,-0.45) rectangle (4.45,6.45);
    \foreach \x in {0,...,6} \foreach \y in {0,...,6} \node[pt] at (\x,\y) {};
    \node[ham] at (4,2) {}; \node[ham] at (3,5) {}; \draw[red!70!black, thick] (4,2) circle (0.45); \draw[red!70!black, thick] (3,5) circle (0.45);
    \node[cw] at (3,2) {};
    \node[cw] at (4,5) {}; \node[lab] at (2.45,1.45) {$c$}; \node[lab] at (4.6,5.55) {$c'$};
    \node[head] at (3,6.8) {Hamming balls of radius 1 around $c$ and $c'$ meet};
  \end{scope}
  \begin{scope}[xshift=52mm]
    \foreach \cx/\cy in {3/2,4/5} \fill[green!8, rounded corners=2pt] (\cx,\cy-1.4) -- (\cx+1.4,\cy) -- (\cx,\cy+1.4) -- (\cx-1.4,\cy) -- cycle;
    \foreach \x in {0,...,6} \foreach \y in {0,...,6} \node[pt] at (\x,\y) {};
    \foreach \cx/\cy in {3/2,4/5} { \node[pm] at (\cx-1,\cy) {}; \node[pm] at (\cx+1,\cy) {}; \node[pm] at (\cx,\cy-1) {}; \node[pm] at (\cx,\cy+1) {}; \node[cw] at (\cx,\cy) {};
    }
    \node[lab] at (2.45,1.45) {$c$}; \node[lab] at (4.6,5.55) {$c'$};
    \node[head] at (3,6.8) {$\pm1$ balls of radius 1 around $c$ and $c'$};
  \end{scope}
\end{tikzpicture}
\caption{Two codewords $c=(3,2)$ and $c'=c+(1,3)$ of the code spanned by $(1,3)$ in $\F_7^2$, drawn on a $7\times7$ grid whose edges
wrap around. The code has Hamming distance $2$. Left: the Hamming balls of radius 1 around $c$ and $c'$ share the points $(4,2)$ and $(3,5)$, so no decoder corrects
one error of arbitrary value. Right: an error with a single entry $\pm1$ moves $c$ and $c'$ to
disjoint sets, and the same holds for every pair of codewords, so every such error is corrected.
This illustrates unique recovery; the additional separation condition in
Definition~\ref{def:licensedS} must be checked separately.}
\label{fig:pmone}
\end{figure}

%% file: sections/04-problems.tex
\section{Two new candidate problems}\label{sec:problems}

Of the proposals in the rule-guided search, two candidates were retained for detailed study.
Many others reproduced
OPI through a change of notation or encoding, and the rest were excluded
by coding bounds, lacked a decoding guarantee that could certify
an advantage, or were overtaken by classical algorithms. Supplementary
Section~\ref{sec:supp-search} records all proposals and the outcomes
of the checks.

Because OPI already
attains the margin cap for the indicator objective with a
Hamming-licensed decoder (Proposition~\ref{prop:cap}), a new candidate
must change the computational setting or the decoding requirement, not just the code.
AMA changes access to the decoder. A McEliece-type
key gives DQI an efficient Hamming decoder, and with the key
the problem has a trace-structured OPI representation. We compare DQI
with classical solvers given only the public instance
and also report searches given the key.
MPI keeps the decoder public and changes the
objective. The cosine's Fourier support restricts the required errors to
$\{\pm1\}$, so the analysis of
Section~\ref{sec:rulesalphabet} applies without a matching
Hamming-distance guarantee.
Corollary~\ref{cor:mckey} and Theorem~\ref{thm:logcauchy} give
the decoding and DQI guarantees, and
supplementary Section~\ref{sec:supp-candidates} gives further construction details, proofs and attacks.

\paragraph{What the comparisons establish.}
Theorem~\ref{thm:converse} compares DQI's exact expected quality with a
specified number of Prange completions whose information sets are
chosen independently of the accepted sets or centers. The experiments
compare that expectation with the scores reached by named classical
algorithms at fixed budgets. Assumptions (A), (B), and (C) concern
broader classical hardness at stated parameters and computational
budgets. The experimental comparisons stand independently of those
assumptions.

\phantomsection\label{sec:comparators}%
Our generic classical \emph{portfolio} uses $B$ and the accepted sets
or objective centers. It consists of Prange restarts, line search
along the columns of a systematic basis of $\optcode$, simulated
annealing~\cite{kirkpatrick83} with the same moves, and, where stated,
CP-SAT~\cite{ortools} warm-started from the best assignment found.
We supplement this portfolio with attacks that exploit the structure
of each candidate: polynomial searches using the secret key for AMA
and lattice reduction for MPI.

The portfolio runs are limited by wall-clock time. Supplementary
Section~\ref{sec:supp-scaling} repeats line search and simulated
annealing with budgets counted in moves, up to $10^7$ moves per chain.
To compare different lengths, we express each score's improvement
over $\Qpr$ in units of the standard deviation of one Prange completion.
The tested searches exceed the Gaussian reference for the best of
$N$ independent completions by at most about $1.4$ of these units,
where $N$ is the number of codewords scored. This reference
grows like $\sqrt{2\ln N}$, whereas DQI's improvement grows like
$\sqrt m$ on the same scale. Below, we use this observed relationship
to estimate how the classical effort depends on the instance length.

\phantomsection\label{sec:opi}%
OPI provides a reference benchmark~\cite{jordan24dqi}. Its decoder code
is MDS, so at density $\tfrac12$ it attains the margin cap
$\kappa(\alpha)$ of Proposition~\ref{prop:cap}. Using $\alpha=0.1$,
as in Jordan et al., the fixed-time portfolio beats DQI's exact finite
value at the tested lengths up to $m=210$ and falls below it
from $m=400$ onward (Table~\ref{tab:experiments} and supplementary Section~\ref{sec:solvers}).
With $10^6$ moves per chain the same searches match or exceed DQI's
value at $m=400$, but at the longer lengths tested, up to
$m=4000$, they remain far below it
(supplementary Section~\ref{sec:scalingconsequences}).
An asymptotic margin alone therefore does not determine the outcome at a
particular length and budget, which is why the exact finite law matters. Soft-decoding methods can further improve DQI's
performance on OPI~\cite{chailloux24rd,opisoft25}.

\subsection{Alternant max-agreement on a McEliece public key}\label{sec:mckey}

\paragraph{The optimization problem.}
The public input consists of a parity-check matrix
$\Hpub\in\F_p^{n\times m}$ of an alternant code, presented as a
Niederreiter- or McEliece-type public key~\cite{niederreiter86,mceliece78}, where $p$ is
prime, and nonempty proper accepted sets $F_i\subset\F_p$ for
$i=1,\ldots,m$. The \emph{alternant max-agreement problem} asks
for an assignment $x\in\F_p^n$ maximizing
\[
  Q(x)=\frac1m\sum_{i=1}^{m}
  \mathbf{1}\!\left\{(\Hpub^{\mathsf T}x)_i\in F_i\right\}.
\]
Equivalently, it asks for a vector $c=\Hpub^{\mathsf T}x$ in the
code spanned by the rows of $\Hpub$ that belongs to the prescribed
accepted set on as many coordinates as possible. Thus the feasible
vectors lie in the dual of the cryptosystem's code $\ker\Hpub$.
We use accepted sets of a common size, approximately half the
field in the reported instances. The matrix and accepted sets
fully specify both the feasible vectors and their scores.
The secret key supplies auxiliary decoding information to DQI.
The access assumptions for the classical comparisons are specified below.

\paragraph{The decoding structure.}
Set
\[
  B=\Hpub^{\mathsf T}.
\]
Since
\[
  \deccode=\ker(\Hpub),
\]
the secret decoding algorithm of the cryptosystem is precisely a
decoder for the code DQI needs.

Here $\Hpub$ is a scrambled parity-check matrix of an
\emph{alternant code}, obtained by restricting a generalized
Reed--Solomon code over $\F_{p^s}$ to codewords with entries in
$\F_p$. The underlying Reed--Solomon description has $r$ parity
rows. Knowing that description permits correction of up to
$\lfloor r/2\rfloor$ Hamming errors. For DQI we choose the radius
to satisfy the stronger license of Definition~\ref{def:licensed}.
Classic McEliece~\cite{classicmceliece22} uses binary Goppa codes,
a special case of the alternant construction. Our instances instead
use a prime alphabet and an accepted-set objective.

The standard alternant distance and decoding guarantees give the
following application of the general DQI results.

\begin{corollary}[DQI for alternant max-agreement]\label{cor:mckey}
Let $p$ be prime, and let $\Hpub$ be a public parity-check matrix
of an alternant code over $\F_p$ obtained from a generalized
Reed--Solomon code over $\F_{p^s}$ with $r\ge1$ parity rows.
Suppose $\rank(\Hpub)=sr<m$, and let
$B=\Hpub^{\mathsf T}$. Given the decoding key, set
\[
  \alpha=\frac{sr}{m},
  \qquad
  \ell=\left\lfloor\frac{r-1}{2}\right\rfloor,
  \qquad
  \lambda=\frac{\ell}{m}.
\]
Then the keyed decoder is licensed at radius $\ell$ in the sense
of Definition~\ref{def:licensed}. For accepted sets of common
density $\rho\in(0,1)$, DQI attains the exact finite expected
quality $\Qdqi^{(m)}$ of the indicator quality law.

For independently and uniformly drawn accepted sets, independently
of $B$, whenever
\[
  \mu_m
  =\Qdqi^{(m)}-\bigl(\rho+(1-\rho)\alpha\bigr)>0,
\]
the comparison against Prange restarts in
Theorem~\ref{thm:converse} applies.

Along any admissible family with fixed extension degree $s$ and
limiting parameters $0<\alpha<1$ and $0<\rho<1$, the licensed reach tends
to $\alpha/(2s)$. Its limiting margin over one Prange completion
is positive exactly when
\[
  \frac{\alpha}{2s}>\lambdacrit(\alpha,\rho).
\]
\end{corollary}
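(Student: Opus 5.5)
The plan has three parts. I would first certify the license from standard alternant-code facts, then invoke the earlier quality law and Theorem~\ref{thm:converse}, and finally take limits using Theorem~\ref{thm:master}.

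\emph{Step 1: distance and decoder.} Let $C=\ker\Hpub$. It is the subfield subcode $\mathrm{GRS}^\perp\cap\F_p^m$ of a generalized Reed--Solomon code over $\F_{p^s}$ with $r$ parity rows. The GRS code is MDS with minimum distance $r+1$. Restricting to $\F_p$ cannot create lower-weight nonzero words, so the designed distance gives $d\ge r+1$. The decoding key yields the GRS parity-check description $H_{\mathrm{sec}}$ over $\F_{p^s}$. From the public syndrome $\Hpub y$ one recovers $H_{\mathrm{sec}}y$ by applying the secret scrambling inverse and expanding the $s$ coordinates over an $\F_p$-basis. Berlekamp--Massey or Euclid-type key-equation decoding then recovers every error of weight at most $\lfloor r/2\rfloor$ in polynomial time, and so in particular every error of weight at most $\ell$.

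\emph{Step 2: the license.} With $\ell=\lfloor (r-1)/2\rfloor$ we have $2\ell+1\le r<r+1\le d$, which is exactly the condition $2\ell+1<d$ of Definition~\ref{def:licensed}. I need to check the identification $\deccode=\ker B^{\mathsf T}=\ker\Hpub$ and $\rank(B)=sr$, which gives $\alpha=sr/m$. I expect this rank hypothesis to be the one place needing care: $\Hpub$ is presented with $n$ rows, and the rank may drop below $sr$ for subfield subcodes. The statement assumes full rank, so I would simply use that assumption. The licensed decoder then feeds into Lemma~9.2 of~\cite{jordan24dqi}, which gives the exact finite quality $\Qdqi^{(m)}$ over the prime field $\F_p$. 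No extension-field assumption is needed, because $B$ has entries in $\F_p$. The Prange-restart claim then follows directly from Theorem~\ref{thm:converse} with $R=1$, since its hypotheses are exactly the license plus $\mu_m>0$.

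\emph{Step 3: asymptotics.} Along a family with fixed $s$ and $\alpha=sr/m\to\alpha$, we have $r=\alpha m/s+o(m)$. Hence $\lambda=\lfloor (r-1)/2\rfloor/m\to\alpha/(2s)$. Because $\Qdqi(\cdot,\rho)$ in \eqref{eq:qdqi} is continuous, the limiting margin is $\mu=\Qdqi(\alpha/(2s),\rho)-\Qpr(\alpha,\rho)$. Theorem~\ref{thm:master} says this is positive if and only if $\alpha/(2s)>\lambdacrit(\alpha,\rho)$.

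The step I expect to need the most justification is the limit statement. It is a statement about the asymptotic law, and I would cite the semicircle law for families, which holds here over prime fields, to identify the limit of $\Qdqi^{(m)}$ with $\Qdqi(\lambda,\rho)$.
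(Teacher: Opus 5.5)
Your proposal is correct and follows the paper's proof. Designed distance $d\ge r+1$ together with the keyed alternant decoder up to $\lfloor r/2\rfloor$ gives $2\ell+1\le r<d$, and the finite quality law, Theorem~\ref{thm:converse} and Theorem~\ref{thm:master} (applied with $\lambda\to\alpha/(2s)$) then give the remaining claims, exactly as in the paper's short argument. Your added details are accurate elaborations of steps the paper leaves implicit: recovering the $\F_{p^s}$ syndrome from the public one, and noting that the exact finite law needs only the prime field $\F_p$.
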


\begin{proof}
The alternant code has designed distance at least $r+1$, and
the keyed decoder corrects up to $\lfloor r/2\rfloor$ errors.
The chosen radius satisfies
\[
  2\ell+1<r+1\le d,
\]
establishing the license. The quality and comparison statements
follow from the finite DQI law, Theorem~\ref{thm:master}, and
Theorem~\ref{thm:converse}.
\end{proof}

The asymptotic statement allows the base field to grow, but the
finite-length conclusions apply directly at fixed field parameters.

\paragraph{Parameter choice.}
At density $\tfrac12$, the asymptotic margin supplied by
Corollary~\ref{cor:mckey} is
\[
  \mu=\sqrt{\lambda(1-\lambda)}-\frac{\alpha}{2},
  \qquad
  \lambda=\frac{\alpha}{2s}.
\]
For fixed $s$, this expression is maximized at
\[
  \alpha^*
  =s\left(1-\frac{s}{\sqrt{1+s^2}}\right).
\]
This calculation guides our choice of parameters. We take $p=11$,
$s=6$, $\rho=5/11$, and nominal rate $\alpha=0.08$, close to
$\alpha^*=0.0816$. The main instance length is $m=14000$, with
$r=187$ parity rows and $n=sr=1122$ columns.

\paragraph{The role of the key.}
The key reveals the support points $a_i$ and multipliers $v_i$ of
the alternant representation. With this information, the optimization
problem can be written as
\[
  \max_{\deg f<r}
  \sum_{i=1}^{m}
  \mathbf{1}\{\Tr(v_i f(a_i))\in F_i\},
\]
where $f$ is a polynomial over $\F_{p^s}$ and $\Tr$ is the field
trace onto $\F_p$. The key holder therefore faces OPI with
trace-defined accepted sets.

This representation explains both the connection to OPI and the
question posed by the construction. DQI uses the key to decode
errors for $\deccode$, and the optimization task searches for a
high-quality word in its dual $\optcode$
(Section~\ref{sec:instances}). Efficient decoding of the former
does not immediately supply an efficient optimizer for the latter.

\paragraph{Hardness assumptions without the secret key.}
\phantomsection\label{sec:mchardness}%
We formulate two assumptions at
$(p,s,\alpha)=(11,6,0.08)$ and $m=40000$, with
$\varepsilon=0.01$ and a budget of $2^{80}$ bit operations.
At this length $r=533$, $n=3198$, $\ell=266$ and
$\Qdqi^{(m)}\simeq0.534$. This length is larger than those tested:
the search results below motivate extrapolating to $m=40000$, but
we have not generated an instance of that length.
\begin{itemize}
  \item \emph{Assumption (A).}
    Given $\Hpub$ as a uniformly random basis of its row space,
    rather than in systematic form, no classical algorithm within
    the budget distinguishes it from a uniformly random matrix of
    the same shape with advantage at least $\varepsilon$.

  \item \emph{Assumption (B).}
    For a uniformly random matrix with independently drawn accepted
    sets of size $5$, no classical algorithm within the budget
    achieves expected quality above
    $\Qdqi^{(m)}-\varepsilon\simeq0.524$.
\end{itemize}

Together, these assumptions imply that no classical algorithm without
the secret key reaches DQI's expected quality of approximately $0.534$
within the budget, after accounting for the reduction overhead.
Indeed, running such an optimizer on a challenge matrix and testing
a uniformly chosen row of its output would distinguish the
public-key distribution from the random-matrix distribution: the
acceptance probability of that test is exactly the optimizer's
expected quality. This argument applies to algorithms given only
the public matrix and accepted sets. It does not establish hardness for a solver
supplied with the secret key.

We state these assumptions at a larger length because extrapolating
the tested searches suggests that these searches could reach the target of
Assumption (B) within its budget at the experimental lengths. At
$m=14000$, the best line-search chains reach $0.5189$ to $0.5210$
on the three seeds, with about $10^6$ moves per seed,
against $\Qdqi^{(m)}-\varepsilon\simeq0.522$
(supplementary Table~\ref{tab:scaling}). On the scale of
supplementary Section~\ref{sec:scalinglaw}, $\Qdqi^{(m)}-\varepsilon$
lies $5.9$ standard deviations above $\Qpr$ at $m=14000$. The
Gaussian reference estimates that reaching this level requires about
$2^{29}$ independent completions, and the tested searches perform
close to that benchmark. At
$m=40000$ the same quantity lies $10.8$ standard deviations above
$\Qpr$, about $2^{89}$ independent completions. A search that kept the largest improvement
over the Gaussian reference seen in our runs would still need about $2^{69}$. In both cases the estimated cost exceeds
$2^{80}$ field operations, and hence $2^{80}$ bit operations
(supplementary Section~\ref{sec:scalingconsequences}). These cost estimates
extrapolate the measured behavior of the tested searches and do
not bound other algorithms.

On a uniformly random matrix of the key's shape at $m=20000$, our portfolio reaches $0.5111$. This lies within the portfolio's range on the three keys, $0.5092$ to $0.5128$, consistent with Assumption (A).

Known structural attacks inform the parameter choice.
Ghoshal, Ishai, Jain, and Sun give an unconditional quasipolynomial
distinguisher for codes contained in generalized Reed--Solomon
codes, including alternant codes~\cite{ghoshal26quasipoly}.
Porting their cost model to $\F_{11}$ gives an estimate of about
$2^{123}$ binary operations for the theorem-backed variant and about
$2^{106}$ for the heuristic variant at $m=14000$, and about $2^{121}$
and $2^{117}$ at $m=40000$. The additional
binary-field optimizations do not supply a separate $\F_{11}$ estimate
(supplementary Section~\ref{sec:mcattacks}).
These estimates concern that attack only and do not
establish Assumption (A). The cost of Vedenev's key-recovery
method~\cite{vedenev26keyrecovery} has not been estimated at either
length. The estimated distinguisher cost increases with the
extension degree, motivating our choice $s=6$
(supplementary Section~\ref{sec:mcattacks}).

\paragraph{Decoder guarantee and experiments.}
\phantomsection\label{sec:mcexperiments}%
At $m=14000$, Corollary~\ref{cor:mckey} gives the licensed radius
$\ell=93$, with $2\ell+1<d$ certified by $d\ge188$. The finite
DQI margin is positive, and the instance satisfies the length
condition of Theorem~\ref{thm:converse} for the Prange-restart budget
in Table~\ref{tab:candrules}.

The generic portfolio remains below DQI's exact expected quality
on all three seeds at $m=14000$ and at $m=20000$
(Table~\ref{tab:experiments}). The longer searches with budgets counted
in moves also remain below DQI at these lengths. In supplementary
Table~\ref{tab:scaling}, the best chains reach $0.521$ at $m=14000$
and $0.516$ at $m=20000$, and the annealing means are $0.519$ and
$0.514$. At $m=2000$ and $m=6000$
the same searches exceed DQI's value. We also supplied the secret
key to line search and simulated annealing along extension-field
lines of message polynomials. Starting from the portfolio's best
assignments, neither method improved those assignments on the three
tested seeds at $m=14000$, even when run for
$10^4$ moves (supplementary Section~\ref{sec:scalingkey}).
At $m=20000$ the same searches, run for $1800$~s on each of the three seeds, also found no improvement (supplementary Section~\ref{sec:mckeyholder}).
These runs compare DQI with specific algorithms that
exploit the key. Whether a different classical key holder can
match DQI remains open.

\subsection{Multiplicative polynomial intersection}\label{sec:logcauchy}

MPI applies the objective-dependent framework
of Section~\ref{sec:rulesalphabet} with a public decoder for signed
errors.

Fix a field $\F_{2^h}$ for which $M=2^h-1$ is a Mersenne prime,
and choose a generator $g$ of its multiplicative group. Let
$t_1,\dots,t_m,\zeta_1,\dots,\zeta_n$ be distinct field elements.
An assignment consists of exponents $x\in\mathbb Z_M^n$. At each
point $t_i$, it produces the nonzero field value
\[
  \prod_{j=1}^{n}(t_i+\zeta_j)^{x_j}.
\]
The objective rewards its discrete logarithm for being close,
modulo $M$, to a prescribed center $c_i$:
\begin{equation}\label{eq:mpi}
  \text{maximize}\quad
  \sum_{i=1}^{m}
  \cos\left(
    \frac{2\pi}{M}\bigl((Bx)_i-c_i\bigr)
  \right),
  \qquad
  B_{ij}=\log_g(t_i+\zeta_j)\in\F_M.
\end{equation}
The centers are drawn independently and uniformly, independently
of $B$. Taking logarithms converts the multiplicative expressions
into linear forms over $\F_M$, so \eqref{eq:mpi} is an instance
of the real-valued linear objective studied in
Lemma~\ref{lem:objlaw}. We call $B$ the \emph{log-Cauchy matrix}.

The discrete logarithms of shifted field elements are related to the
Chor--Rivest construction~\cite{chor88knapsack}, and the matrix occurs in the
polynomial lattices of Li, Ling, Xing, and
Yeo~\cite{li17polylattice}. Lapiha gives a rational-reconstruction
decoder for these lattices~\cite{lapiha21polylattice}.
Attacks on the Chor--Rivest cryptosystem either recover its secret field representation, generator and permutation, partly through subfields~\cite{vaudenay98chorrivest}, or solve its knapsack by lattice reduction~\cite{schnorr95chorrivest}. Here every parameter is public, and $\F_{2^{h}}$ with $h$ prime has no subfield beyond $\F_2$. Only the lattice attack therefore applies, and our lattice study below covers it. Our contribution is to pair this structure with the cosine objective and analyze the
resulting DQI guarantee.

\paragraph{Why signed errors can be decoded.}
For an error pattern $y\in\{0,\pm1\}^m$, exponentiating its
syndrome gives
\[
  g^{(B^{\mathsf T}y)_j}
  =\prod_{i=1}^{m}(t_i+\zeta_j)^{y_i}
  =\frac{P_y(\zeta_j)}{Q_y(\zeta_j)},
\]
where
\[
  P_y(T)=\prod_{i:y_i=1}(T+t_i),
  \qquad
  Q_y(T)=\prod_{i:y_i=-1}(T+t_i).
\]
Thus the syndrome supplies evaluations of a rational function whose
numerator and denominator record the positive and negative error
positions. When their degrees are sufficiently small, rational
reconstruction recovers the function, and its roots identify the
errors.

This mechanism depends on the error values being $\pm1$. It does
not require a decoder for arbitrary Hamming errors. The additional
separation needed for DQI is established under the parity condition
in the following theorem.

\begin{theorem}[multiplicative polynomial intersection]\label{thm:logcauchy}
Let $M=2^h-1>3$ be prime, let $m+n\le2^h$, and set
$\ell=\lfloor(n-1)/2\rfloor$. Let $B$ be the log-Cauchy matrix
of \eqref{eq:mpi}, with distinct points and rank $n$, so that
$\alpha=n/m$.
\begin{enumerate}
  \item[(i)] \emph{Worst-case signed-error decoding.}
    The syndrome map is injective on $\{0,\pm1\}$ patterns of
    weight at most $\ell$. It can be inverted in
    $O(n^3+m\ell+nh)$ field operations by exponentiating the
    syndrome, reconstructing a rational function from its $n$
    values, dividing out the numerator--denominator gcd, and
    scanning the $t_i$ for roots.

    When $n$ is even, the decoder is licensed for $\{\pm1\}$ in
    the sense of Definition~\ref{def:licensedS}. Its reach
    approaches $\alpha/2$ along an unbounded sequence with
    $n/m\to\alpha$.

  \item[(ii)] \emph{Decoding random signs.}
    For a pattern of weight $1\le k\le2\ell$ with independent
    uniform signs, the same decoder recovers the pattern whenever
    both the numerator and denominator degrees are at most
    $\ell$. The probability that this condition fails is
    \[
      \varepsilon_k
      =\Pr\left[
        \left|\mathrm{Bin}(k,\tfrac12)-\frac{k}{2}\right|
        >\ell-\frac{k}{2}
      \right]
      \le
      2\exp\left(
        -\frac{2(\ell-k/2)^2}{k}
      \right).
    \]
    The zero pattern is recovered exactly.
    Consequently, the decoder is \emph{licensed in measure} for signed
    patterns at every reach $\lambda<\alpha$: the fraction of
    failures at each allowed weight vanishes along an unbounded
    admissible sequence.

  \item[(iii)] \emph{DQI quality.}
    Let $n$ be even, so that $n\ge2\ell+2$.
    Under the cosine objective, DQI using the worst-case license
    in (i) achieves expected mean cosine
    \[
      \sqrt{2\lambda(1-\lambda)}-o(1)
    \]
    at reach $\lambda=\ell/m\to\alpha/2$.
\end{enumerate}
\end{theorem}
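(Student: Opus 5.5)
The plan is to work throughout with the exponentiated syndrome. Since $g$ has order $M$, two syndromes agree in $\F_M$ exactly when their exponentials $g^{(B^{\mathsf T}y)_j}$ agree in $\F_{2^h}$. For any integer vector $y$ with small entries, the exponential at $\zeta_j$ is the value of the rational function $R_y(T)=\prod_i(T+t_i)^{y_i}$. Because the characteristic is $2$ and all points are distinct, $t_i+\zeta_j\neq0$, so these values are well defined and nonzero. After cancellation, $R_y$ is a quotient of coprime monic polynomials whose roots are among the distinct $t_i$, with multiplicities $|y_i|$. Unique factorization in $\F_{2^h}[T]$ then recovers the integer vector $y$ from $R_y$ as a rational function.

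\emph{Step 1 (the degree lemma).} Suppose $y,y'$ are integer vectors with $\sum_i\max(y_i,0)\le a$, $\sum_i\max(-y_i,0)\le b$, and likewise for $y'$ with bounds $a',b'$. Suppose also that $R_y(\zeta_j)=R_{y'}(\zeta_j)$ for all $n$ points and $a+b'\le n-1$, $a'+b\le n-1$. Then $P_yQ_{y'}-P_{y'}Q_y$ has degree at most $n-1$ and $n$ roots, so it is identically zero. Hence $R_y=R_{y'}$, and therefore $y=y'$ as integer vectors, which implies equality mod $M$.

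For part (i), take weights at most $\ell$ and entries in $\{0,\pm1\}$. The cross degree is then at most $2\ell\le n-1$, which gives injectivity.

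For the separation clause of Definition~\ref{def:licensedS}, compare $y'$ with $y+ae_i$. The latter has total multiplicity at most $\ell+1$, with a possible entry $\pm2$ or a cancellation to $0$. The cross degree is at most $2\ell+1$, and this is $\le n-1$ precisely when $n$ is even, since then $\ell=(n-2)/2$. This yields $B^{\mathsf T}y'\neq B^{\mathsf T}(y+ae_i)$ unless $y'=y+ae_i$. I expect this multiplicity and parity bookkeeping to be the main point needing care. In particular, a vector such as $y+ae_i$ with a $\pm2$ entry is not a signed pattern, but the degree lemma is stated for integer vectors precisely so that it covers this case.

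\emph{Step 2 (the algorithm).} First compute $w_j=g^{s_j}$, which costs $O(nh)$. Then solve the homogeneous linear system $P(\zeta_j)=w_jQ(\zeta_j)$ with $\deg P,\deg Q\le\ell$. This has $2\ell+2\le n+1$ unknowns and $n$ equations, so a nonzero solution exists and is found in $O(n^3)$. Any nonzero solution satisfies $P Q_y=P_yQ$ by the same degree count, so $Q\neq0$ and $P/Q=R_y$. Dividing by $\gcd(P,Q)$ and normalizing gives $P_y,Q_y$. Finally, scan the $m$ points $t_i$ as roots (recall $-t_i=t_i$), with cost $O(m\ell)$. For the reach, take $n$ even with $n/m\to\alpha$; then $\ell/m=(n-2)/(2m)\to\alpha/2$.

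\emph{Step 3 (part (ii)).} For a pattern of weight $k\le2\ell$ with $N_+$ plus signs, the linear system in Step 2 still recovers the pattern whenever $N_+\le\ell$ and $k-N_+\le\ell$. In that case the cross degree is at most $2\ell<n$ and the argument is unchanged. Failure therefore means $|N_+-k/2|>\ell-k/2$, with $N_+\sim\mathrm{Bin}(k,\tfrac12)$, and Hoeffding's inequality gives the stated bound. For $k\le\lambda m$ with $\lambda<\alpha$, we have $\ell-k/2\ge(\alpha-\lambda)m/2-O(1)$. The bound then decays as $\exp(-\Omega(m))$ uniformly in $k$, which is exactly licensed in measure.

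\emph{Step 4 (part (iii)).} Apply Lemma~\ref{lem:objlaw} with $q=M$ prime and $F(v)=\cos(2\pi v/M)$. For $M>3$ this $F$ has mean $0$, variance $\tfrac12$, skewness $0$, and Fourier support $\{\pm1\}$. Step 1 supplies the license required by the lemma. The limit \eqref{eq:objlaw} becomes $\tfrac1{\sqrt2}\max_{t\le\lambda}2\sqrt{t(1-t)}=\sqrt{2\lambda(1-\lambda)}$, since $\lambda\to\alpha/2\le\tfrac12$ and $\sqrt{t(1-t)}$ is increasing on $[0,\tfrac12]$. The finite-$m$ correction is absorbed into $o(1)$, consistent with Proposition~\ref{prop:alphabet}(ii).
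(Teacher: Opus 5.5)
Your proposal is correct and follows essentially the paper's route. You exponentiate the syndrome to get values of $\prod_i(T+t_i)^{y_i}$, force the cross polynomial to vanish by counting its degree against the $n$ roots $\zeta_j$, conclude by unique factorization, reconstruct via the homogeneous linear system, and finish with Hoeffding and Lemma~\ref{lem:objlaw}. Your signed-mass degree lemma simply bundles the paper's three separate arguments: injectivity, the $\pm2$ separation when $n\ge2\ell+2$, and balanced signs in (ii).

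One justification needs fixing. For even $n$ the system has $2\ell+2=n$ unknowns and $n$ equations, so counting does not give a nonzero solution. A nonzero solution exists instead because $(P_y,Q_y)$ is itself a solution, which is how the paper argues it. Also, in Step~3 the slack is $(\alpha-\lambda)m/2-o(m)$ rather than $-O(1)$ unless $n-\alpha m$ stays bounded; this does not affect the conclusion.
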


Part (iii) follows by applying the cosine quality law of
Proposition~\ref{prop:alphabet} to the license in part (i).
Part (ii) gives an additional average-case decoding guarantee,
but the stated DQI quality uses only the worst-case one.
Supplementary Section~\ref{sec:multiplicative} gives the proof, the provenance of the construction and the lattice
analysis.

The asymptotic statements are conditional on an unbounded sequence
of admissible Mersenne-prime parameters. Whether infinitely many
Mersenne primes exist is open. At the finite parameters used below,
the decoder license and the exact quality from
Lemma~\ref{lem:objlaw} apply directly.

\paragraph{Application of the rules.}
The construction supplies signed-error reach approaching
$\alpha/2$, which exceeds the cosine threshold at the parameters
in Table~\ref{tab:candrules}. A Hamming license at the same finite
radius would require a decoder-code distance near the Singleton
bound, and that distance is unknown. We therefore evaluate
rules \gate{1} and \gate{2} using their restricted-alphabet forms
from Proposition~\ref{prop:alphabet}.

The rational-reconstruction decoder, whose data classical algorithms also receive, solves a sparse syndrome
problem for $\deccode$. A classical optimizer would instead need
to find a high-quality word in $\optcode$, or equivalently round
the target toward that code under the cosine objective.
We have no method converting the supplied decoder into such an
optimizer at DQI's quality.

\paragraph{Classical attacks and hardness assumption.}
In the exponent representation, the objective can be viewed as
optimization over an $M$-ary lattice. We attack it using lattice
reduction on randomly selected subsets of rows, with LLL and its
stronger block variant BKZ, at block size $20$ in the runs of
Table~\ref{tab:experiments} and at block sizes up to $60$ in
supplementary Section~\ref{sec:scalinglattice}.

At $h=13$ and $m=600$, the best score obtained with BKZ-20
exceeds DQI's expected quality on all three tested seeds. On seed
$3101$, the mean score over subsets also exceeds it at the tested block sizes of $30$ and above
(supplementary Table~\ref{tab:scalinglattice}). Lattice reduction therefore
succeeds at that size, and longer runs of the generic searches
also exceed DQI there. At $m=2000$ and $m=6000$, every tested solver
remains below DQI. In the fixed-time runs the best scores stay
at least $0.09$ and $0.13$ below it, respectively
(Table~\ref{tab:experiments}). With budgets counted in moves, the best
line-search chains reach $0.169$ to $0.174$ at $m=2000$
($3\times10^6$ moves per chain), against DQI's $0.2433$, and
$0.126$ to $0.129$ at $m=6000$ ($10^6$ moves per chain),
against $0.2518$
(supplementary Table~\ref{tab:scaling}). At $h=17$ and $m=20000$ ($M=131071$, $n=1400$), the decoder recovers every pattern we tried at the licensed radius $\ell=699$, and the restart threshold of Theorem~\ref{thm:converse} is $1648$. There the portfolio, with $600$~s per search, stays below DQI's $0.2562$ by at least $0.158$ on three seeds.

At the additional length $m=1000$, individual subsets reduced at
block sizes $50$ and $60$ yield scores above DQI's value (up to
$0.240$ against $0.2338$), although the mean score over subsets
remains below it (at most $0.204$). We also model how this mean
score varies with block size, comparing the model with measurements
at block sizes up to $60$. It predicts a mean matching DQI at block
size about $125$ for $m=1000$ and, more roughly, about $440$ for
$m=2000$. For $m\ge4000$, it predicts no such match within its range
of validity (supplementary Section~\ref{sec:scalinglattice}). This model
describes only the mean score over subsets for this attack. A second model tracks the best subset and is calibrated to over-predict every run we made, which makes it more favorable to the attack. It puts DQI's value within reach at block size about $150$ for $m=2000$, a core-SVP cost of about $2^{44}$, and needs block size about $750$ at $m=6000$. At $h=17$ and $m=20000$, no block size up to $2000$ suffices. These are order-of-magnitude estimates, and the
experiments leave the hardness of the larger instances open.

The broader claim is stated separately as \emph{Assumption (C)}:
at $h=13$ and $m=6000$, no classical algorithm using at most
$2^{80}$ operations achieves expected mean cosine at least
$\Qdqi^{(m)}-0.01\simeq0.242$ on the candidate distribution. The same is assumed at $h=17$ and $m=20000$, with target $\Qdqi^{(m)}-0.01\simeq0.246$.
Like DQI, the solver is given the public decoding structure as well as the
optimization instance.
The tested attacks are far from this target at $m=6000$. On the
scale of supplementary Section~\ref{sec:scalinglaw} the target lies
$19.5$ standard deviations above $\Qpr$. The Gaussian reference
estimates that reaching this level requires about $2^{280}$ independent completions
(supplementary Section~\ref{sec:scalingconsequences}). The modeled
mean score of the lattice attack is $0.13$ at block size $274$,
where the core-SVP estimate for one shortest-vector oracle call is
$2^{80}$, and $0.227$ at block size about
$1500$, the largest for which the model applies. These are again
extrapolations of the tested attacks.

\subsection{Comparison of the candidates}

Table~\ref{tab:experiments} reports DQI's exact finite expected
quality and the best assignments found by the classical algorithms
within fixed wall-clock budgets. Supplementary
Table~\ref{tab:scaling} gives the higher scores reached with budgets
counted in moves.
For OPI and AMA, quality is the fraction of
satisfied rows, and for MPI it is
the mean cosine. The classical values describe the tested
implementations and budgets, so they are achievable scores, not
upper bounds on classical optimization.

\begin{table}[t]
\centering
\small
\begin{tabular}{lrrlrr}
\toprule
family & $m$ & $\Qpr$ & classical, fixed time & DQI (finite) & classical $-$ DQI\\
\midrule
OPI
  & 400 & 0.549 & 0.6700--0.6725 & 0.6856 & $-0.016$ to $-0.013$\\
OPI
  & 4000 & 0.550 & 0.5850--0.5900 & 0.7109 & $-0.126$ to $-0.121$\\
AMA ($s=6$)
  & 14000 & 0.498 & 0.5144--0.5171 & 0.5321 & $-0.018$ to $-0.015$\\
AMA ($s=6$)
  & 20000 & 0.498 & 0.5092--0.5128 & 0.5329 & $-0.024$ to $-0.020$\\
MPI, lattice
  & 600 & 0.070 & 0.2569--0.2873 & 0.2235 & $+0.033$ to $+0.064$\\
MPI
  & 2000 & 0.070 & 0.1493--0.1521 & 0.2433 & $-0.094$ to $-0.091$\\
MPI
  & 6000 & 0.070 & 0.1146--0.1153 & 0.2518 & $-0.137$\\
MPI, $h=17$ & 20000 & 0.070 & 0.0966--0.0981 & 0.2562 & $-0.160$ to $-0.158$\\
\bottomrule
\end{tabular}
\caption{Finite-length comparisons with classical algorithms.
OPI, AMA and MPI denote optimal polynomial intersection, alternant
max-agreement and multiplicative polynomial intersection.
The classical column gives the range, over three seeds, of the
best scores found within the fixed wall-clock budgets of
supplementary Table~\ref{tab:solver-budgets}.
The row marked lattice reports the attack of
Section~\ref{sec:logcauchy}, using the best score over the subsets examined.
The remaining rows report the generic portfolio. Searches with
larger budgets, counted in moves, are reported in supplementary
Table~\ref{tab:scaling}. The DQI column is the exact expected quality from
Lemma~9.2 of~\cite{jordan24dqi} or Lemma~\ref{lem:objlaw}.
A negative final column means that the tested classical score
is below DQI's expectation.}
\label{tab:experiments}
\end{table}

The classical budgets of Table~\ref{tab:experiments} are fixed in
wall-clock time, so increasing
the instance size also reduces the amount of search that can be
performed within a run. As a budget sensitivity check, tripling
the line-search and annealing budgets on one AMA seed improved
the classical score by $19$ rows ($0.0014$), leaving a gap of
$0.016$. On one MPI seed at $m=6000$, the
corresponding improvement was $0.0023$ in mean cosine, leaving a
gap of $0.135$. The runs with budgets counted in moves in supplementary
Section~\ref{sec:supp-scaling} show similarly slow improvement over up to
five orders of magnitude in the number of moves: about $0.002$ per
tenfold increase on the AMA instance at $m=14000$. Among the
rows of Table~\ref{tab:experiments}, they change the sign of the
comparison only for OPI at $m=400$.

Table~\ref{tab:candrules} connects these instances to the screening
framework. Both constructions have explicit decoder licenses and
positive margins over Prange completion. Neither is excluded by
the evaluated reach bounds, and the instances specified in
Table~\ref{tab:candrules} satisfy the Prange-restart threshold. Rule \gate{3} remains unresolved
for both: no encoder attaining the relevant existential line is
known. Since the optimization alphabets are prime, rule \gate{4}
supplies no stronger proper-subfield baseline.

\begin{table}[t]
\centering
\small
\begin{tabular}{l L{0.37\textwidth} L{0.37\textwidth}}
\toprule
check
  & alternant max-agreement\newline
    ($p=11$, $s=6$, $\alpha\simeq0.08$, $\rho=5/11$, $m=14000$)
  & multiplicative polynomial intersection\newline
    ($M=8191$, $\alpha=0.07$, cosine,
    limiting reach $\lambda=\alpha/2$)\\
\midrule
\gate{1}(a)
  & Distance condition satisfied:
    $\delta\ge\alpha/s\simeq0.0134
      >2\lambdacrit\simeq0.0038$.
  & Licensed signed-error reach approaches
    $\alpha/2=0.035>\lambdacrit^{\cos}\simeq0.0025$.\\
\gate{1}(b)
  & Packing bound permits sufficient reach:
    $H_p^{-1}(\alpha)\simeq0.028
      >\lambdacrit\simeq0.0019$.
  & Signed-pattern packing bound permits reach
    $0.196>0.0025$.\\
\gate{1}(c)
  & $\rho=5/11>\rhocrit\simeq0.0105$;
    Plotkin reach bound
    $(10/11)\alpha/2\simeq0.036>0.0019$.
  & Hamming-based restriction does not apply.\\
\gate{2}
  & Licensed decoder available with the key:
    $\ell=93$, $2\ell+1<d$, $d\ge188$
    (Corollary~\ref{cor:mckey}).
  & Public $\{\pm1\}$ license
    (Theorem~\ref{thm:logcauchy}).\\
\gate{3}
  & Unresolved: no encoder attaining $\Qstar$ is known.
  & Unresolved: no encoder attaining
    $\Qstar\simeq0.726$ is known.\\
\gate{4}
  & No stronger proper-subfield baseline.
  & No stronger proper-subfield baseline.\\
\midrule
margin
  & $+0.041$ at $\rho=\tfrac12$;
    $+0.038$ at $\rho=5/11$;
    $+0.034$ exact at $m=14000$.
  & $+0.19$ asymptotically;
    $+0.18$ exact at $m=6000$.\\
regime
  & Unsatisfiable:
    $\Qstar\simeq0.76$, $\regfac\simeq0.55$.
  & Unsatisfiable.\\
Prange restarts
  & $14000\ge12373$.
  & $6000\ge1729$.\\
\bottomrule
\end{tabular}
\caption{Evaluation of the screening conditions.
For MPI, rules \gate{1} and
\gate{2} use Proposition~\ref{prop:alphabet}.
The last row compares the instance length with the threshold
of Theorem~\ref{thm:converse} at $N=2^{40}$ and $\eta=0.05$.
Displayed numerical values are rounded.
At fixed field parameters, the constructions have bounded lengths:
$m\le p^s=1{,}771{,}561$ and $m+n\le2^h$, respectively.
The asymptotic expressions therefore describe the chosen parameter
regime. The finite licenses, margins and restart comparisons
apply directly to the stated instances.
At $m=20000$, the AMA instance has $\ell=133$, $d\ge268$,
finite margin $+0.035$, and restart threshold $20000\ge11770$.
At $m=40000$, the length at which Assumptions (A) and (B) are stated,
$\ell=266$, $d\ge534$, the finite margin is $+0.036$, and the restart
threshold is $40000\ge10991$.}
\label{tab:candrules}\label{tab:mcrules}\label{tab:mpirules}
\end{table}

%% file: sections/05-discussion.tex
\section{Discussion}\label{sec:discussion}\label{sec:scope}

The structure of DQI makes it possible to separate questions that are often addressed
together when proposing a quantum optimization algorithm: what quality a decoder would
provide, whether the code permits the required decoding reach, and whether a classical
algorithm already attains comparable quality. Theorem~\ref{thm:master} makes the first
comparison explicit, and the soundness rules turn coding bounds and classical
algorithms into conditions that candidate problems must satisfy.

\paragraph{What the rules exclude.}
Within the indicator objective and worst-case Hamming-decoding setting, the instance
parameters alone exclude some regimes. A vanishing decoder-code relative distance, or an
accepted-set density at most $\rhocrit(\alpha)$ whatever the code, precludes a positive
asymptotic margin over Prange completion (Corollary~\ref{cor:rules}). Proposition~\ref{prop:cap}
also limits the size of any advantage, with OPI attaining the margin cap. Whole
parameter regimes can therefore be screened before implementing a decoder or a classical attack.

The distinction between a decoding guarantee and an upper bound on decoding reach is
essential when applying these tests. Expander codes, for example, admit worst-case
decoders for a positive fraction of errors~\cite{sipser96expander}. For the expander
and Tanner decoders examined in our search, the established radii lie below
$\lambdacrit$, so they do not certify an advantage. Because these radii come from specific
decoders rather than from distance bounds, they exclude neither better decoders for the same codes
nor sparse codes as a class. The guarantees used
for other LDPC ensembles, and for the polar and locally recoverable codes examined
in the search, concern average-case decoding and require a different analysis
(supplementary Table~\ref{tab:search}).

\paragraph{Beyond Hamming decoding.}
The objective-dependent analysis changes the decoding
problem itself. Because the Fourier support of the objective fixes which errors must be decoded
(Lemma~\ref{lem:objlaw}), a restricted
support can admit a useful decoding guarantee even when the corresponding
Hamming guarantee is unavailable. MPI realizes
this mechanism by pairing the cosine objective with rational reconstruction for
signed errors (Theorem~\ref{thm:logcauchy}).

Other extensions arise from changing the type of decoding guarantee.
List, soft, and average-case decoders can support DQI beyond the reach covered by
our worst-case licenses~\cite{chailloux24rd,gao26soft}. A construction excluded under
the Hamming-license hypotheses may therefore remain useful under a different decoding
analysis.

\paragraph{Interpreting the two candidates.}
Corollary~\ref{cor:mckey} applies the established alternant decoder to AMA
and supplies a precise DQI guarantee. With the key, the problem becomes
trace-structured OPI, so the open question is how algorithms exploit that representation.
The key supplies the decoder DQI needs, but whether it also helps classical
optimization is an algorithmic question that our key-holder searches do not settle.
For solvers without the key, the hardness argument rests on Assumptions (A) and (B),
stated at $m=40000$ because the generic searches come within about $0.01$ of DQI
at $m=14000$ (supplementary Section~\ref{sec:scalingconsequences}).

For MPI, both quantum and classical algorithms
receive the same information, including the decoding structure.
The signed-error license establishes DQI's quality, but lattice methods give classical
algorithms a separate route to good assignments, and at the smallest tested lengths they
beat DQI (Section~\ref{sec:logcauchy}). A valid DQI guarantee can therefore coexist with
a stronger classical algorithm. At the larger tested lengths DQI stays ahead, and the
scaling study suggests that the effort the tested attacks need to match it grows with
length, although this extrapolates beyond the measured
budgets (supplementary Section~\ref{sec:supp-scaling}).
Assumption (C) states the broader hardness claim that these experiments leave open.

For both constructions, the evidence comes in three strengths. The comparison with Prange
restarts is a theorem, the comparisons with our tested solvers are experiments, and broader
classical hardness is stated as an assumption. One question also remains open. An efficient
classical encoder that reaches the existential line would remove the advantage, and none is
known (rule \gate{3}). Meeting the screening conditions therefore makes each problem a
candidate, not a proven quantum advantage.

\paragraph{Quality and computational cost.}
The guarantees concern expected optimization quality under the stated decoding
licenses. The Prange comparisons and counting bounds use random accepted sets or
random centers, and equation~\eqref{eq:reversemarkov} relates DQI's expectation to
the probability of reaching a lower target on an individual run.
The general extension-field formulation assumes the corresponding semicircle law
(supplementary Section~\ref{sec:proofstatus}). Both candidate objectives are over
prime fields, even though their constructions use auxiliary extension fields.

Turning a quality advantage into a runtime advantage requires accounting for
state preparation, coherent decoding, and the number of samples needed to obtain
the target score. Patamawisut et al.\ estimate gate counts growing roughly as
$m^2$ for their DQI circuit construction~\cite{patamawisut25circuit}.
Resource estimates for the present decoders and for preparation of the
cosine-objective state would make the comparisons more informative about
computational advantage.

\paragraph{Implications for automated discovery.}
The agent-guided search shows how the rules can organize exploration.
A proposal can be excluded by a certified bound, kept because it has a suitable
decoder and survives the tested attacks, or left unresolved because a required
guarantee is missing. These outcomes should stay distinct. Evaluating a scalar
inequality is cheap, but obtaining the distance bounds, decoding certificates
or classical algorithms that justify it can be the real research task.

In our search, decoder access and objective design proved productive directions,
though they are not an exhaustive classification of possible DQI applications.
Further progress could come from objectives with useful moments and restricted
Fourier support, from stronger decoding guarantees, or from classical algorithms that show
which apparent advantages survive closer scrutiny. The framework gives these investigations
a common language and names the specific mathematical or
algorithmic question that each candidate leaves open.

%% file: supplement/code-availability-snippet.tex
\section*{Code and data availability}
The scripts behind every numerical claim, the library they use and the stored results are available
at \url{https://github.com/magic-state-labs/dqi-screening-code}. Script $k$ of supplementary
Table~\ref{tab:checks} is the file in \texttt{scripts/} whose name begins with $k$, the results are
under \texttt{results/}, and the file \texttt{REPRODUCE.md} gives the command for each row of that
table.

\section*{Use of AI Tools}
AI tools were used extensively in generating and developing the core ideas of this work, as well as in drafting, revising, and
editing the manuscript. Computer-aided discovery was a central
component of the research: AI agents contributed to proposing
candidate problems and code constructions, evaluating them against
the soundness rules, and developing implementations and classical
attacks. This process helped shape the problem formulations and
research directions pursued in the paper, under human guidance
and selection. The search methodology is described in
Section~\ref{sec:intro} and supplementary
Section~\ref{sec:search}. The authors take responsibility for
the final manuscript and its scientific claims.

%% file: supplement/S0-notation.tex
\section*{Notation}

\begin{table}[h]
\centering
\footnotesize
\begin{tabular}{L{0.18\textwidth} L{0.60\textwidth} l}
\toprule
symbol & meaning & defined in\\
\midrule
$\F_q$, $q$ & the field of the instance and its size, $q=p^{b}$ in rule \gate{4} & Def.~\ref{def:instance}\\
$m$, $n$ & rows and columns of $B$, and $m$ is the length of both codes & Def.~\ref{def:instance}\\
$B$ & the instance matrix in $\F_q^{m\times n}$ & Def.~\ref{def:instance}\\
$b_i$ & the rows of $B$ & Lemma~\ref{lem:cosinelaw}\\
$F_i$, $\rho$ & the accepted set of row $i$ and the density $|F_i|/q$ & Def.~\ref{def:instance}\\
$\optcode$, $\deccode$ & the optimizer code $\{Bx\}$ and the decoder code that DQI decodes & \S\ref{sec:instances}\\
$\alpha$ & the rate $\rank(B)/m$ of the optimizer code & \S\ref{sec:instances}\\
$d$, $\delta$ & the minimum distance of the decoder code and $\delta=d/m$ & \S\ref{sec:instances}\\
$\ell$, $\lambda$ & the decoding radius and the reach $\ell/m$, licensed when $2\ell+1<d$ & Def.~\ref{def:licensed}\\
$\Qdqi(\lambda,\rho)$, $\Qdqi^{(m)}$ & the semicircle law and DQI's exact value at length $m$ & \eqref{eq:qdqi}\\
$\Qpr(\alpha,\rho)$ & one Prange completion, $\rho+(1-\rho)\alpha$ & \eqref{eq:qpr}\\
$\mu$, $\mu_m$ & the margin $\Qdqi-\Qpr$ and its value at length $m$ & \eqref{eq:margin}\\
$\mu^{+}$ & the margin against the baselines of \gate{3} and \gate{4} & \eqref{eq:muplus}\\
$\lambdacrit$, $\lambdacrit^{\cos}$ & the reach threshold for an indicator and for the cosine & \eqref{eq:lambdacrit}, Prop.~\ref{prop:alphabet}\\
$\rhocrit(\alpha)$ & the density threshold of \gate{1}(c) & \eqref{eq:rhocrit}\\
$\kappa(\alpha)$ & the margin cap & \eqref{eq:kappa}\\
$\Qstar$, $\regfac$ & the existential line and the regime factor & \eqref{eq:qstar}\\
$\KL{Q}{\rho}$ & the binary divergence, in nats & \eqref{eq:qstar}\\
$H_q$, $H_2$, $\dGV$ & the $q$-ary entropy, the binary entropy in bits, the Gilbert--Varshamov relative distance & \S\ref{sec:ruleproofs}\\
$b$, $c$, $\asub$ & the degree of $\F_q$ over $\F_p$, the subfield codimension and the rate seen over $\F_p$ & \S\ref{sec:subfield}\\
$\beta$ & the BKZ block size & \S\ref{sec:latticestudy}\\
$F$, $c_i$ & the per-row objective $f_i(v)=F(v-c_i)$ and its centers & Lemma~\ref{lem:objlaw}\\
$\bar F$, $\sigma$, $\gamma$ & the mean, standard deviation and skewness of $F$ under the uniform distribution & Lemma~\ref{lem:objlaw}\\
$S$ & the Fourier support $\{a\neq0:\hat F(a)\neq0\}$ of $F$ & Lemma~\ref{lem:objlaw}\\
$R$ & the length of an interval holding the values of $F$ & Thm.~\ref{thm:converse}\\
$N$, $\eta$ & the number of Prange restarts and the failure probability & Thm.~\ref{thm:converse}\\
$p$, $s$, $r$ & alternant max-agreement (AMA): the base prime, the extension degree and the number of parity rows & \S\ref{sec:mckey}\\
$M$, $h$, $t_i$, $\zeta_j$ & multiplicative polynomial intersection (MPI): the Mersenne prime $2^{h}-1$ and the points of the log-Cauchy matrix & \eqref{eq:mpi}\\
$N_{\mathrm{mv}}$ & the number of moves per search chain & \S\ref{sec:scaling}\\
$z(Q)$ & the number of Prange standard deviations by which $Q$ exceeds $\Qpr$ & \eqref{eq:zscore}\\
$\hat z(N)$, $N^*$ & the Gaussian reference level for $N$ completions, and the $N$ with $\hat z(N)=z(\Qdqi^{(m)})$ & \eqref{eq:samplinglevel}\\
$m'$, $m''$, $\delta_0$ & lattice attack: the subset size, the number of Gram--Schmidt norms below $M$, and the root-Hermite factor & \S\ref{sec:latticestudy}, \S\ref{sec:scalinglattice}\\
\bottomrule
\end{tabular}
\caption{Notation. The last column points to the definition in the main text or below.}
\label{tab:notation}
\end{table}

%% file: supplement/S1-proofs.tex
\section{Proofs of the rules, the objective law and the comparison with Prange restarts}
\label{sec:supp-proofs}

This section supplies the proofs for
Sections~\ref{sec:problem} and~\ref{sec:rulesalphabet}.
We first derive the reach threshold and the coding-theoretic
restrictions, then prove the sharp margin cap.
The next two parts establish the classical comparisons based on
source coding and subfield linear algebra. We finish with the
general-objective quality law and the finite-length comparison
against Prange restarts.

\subsection{Scope and assumptions}
\label{sec:rulesreading}
\label{sec:proofstatus}

The rules address three different questions. Rule \gate{1} asks
whether the code permits enough decoding reach to exceed the
Prange baseline. Rule \gate{2} asks whether an efficient decoder
with the required guarantee is available. Rules \gate{3} and
\gate{4} ask whether a stronger classical algorithm already reaches
DQI's quality. An upper bound on reach
can exclude an advantage under the stated quality law. The
absence of a known decoder does not establish an impossibility.

The three parts of rule \gate{1} use complementary information.
Part (a) compares the required reach with half the decoder-code
distance. Part (b) counts error patterns and syndromes. Part (c)
uses rate--distance bounds, so it can be evaluated without knowing
the particular code's distance. The sharp margin cap uses the
same rate constraint, but maximizes the resulting margin over
the accepted-set density and then over the rate.

Throughout the indicator-objective analysis, $\Qdqi$ denotes the
asymptotic quality function in \eqref{eq:qdqi}, and the exact
finite-length value is denoted by $\Qdqi^{(m)}$.
An inequality obtained by substituting finite parameters into
$\Qdqi$ is therefore an inequality for that function, rather
than automatically an inequality for $\Qdqi^{(m)}$.
The reach threshold, rule \gate{1}, and the sharp margin cap
use the field assumption stated in
Section~\ref{sec:qualitylaws}. The general-objective argument
below is proved over prime fields.

The classical baselines also require a probability model.
For the usual Prange comparison, the accepted sets are independent
of $B$ and of one another, and each is uniform among subsets of
its prescribed size. An information set is chosen without inspecting
these sets, and its completion uses only the accepted sets on
that information set. For shifted real objectives, the corresponding
assumption is that the centers are independent and uniform.
Rule \gate{4} allows a more general translation-invariant law,
subject to the selection and rank conditions stated in
Proposition~\ref{prop:remaining}.

The finite spectral identity requires the separation condition
in Definition~\ref{def:licensedS}. Efficient implementation
additionally requires a suitable decoder and state-preparation
procedure. The identity itself does not depend on their running
times.

\subsection{The reach threshold and rule \gate{1}}
\label{sec:ruleproofs}

\subsubsection{Inverting the quality law}

For fixed $\rho$, the quality function $\Qdqi(\lambda,\rho)$
increases from $\rho$ to $1$ as $\lambda$ increases from $0$
to $1-\rho$, and remains equal to $1$ thereafter. The threshold
is therefore obtained by finding where this increasing branch
crosses the Prange baseline.

\begin{proof}[Proof of Theorem~\ref{thm:master}]
Since $0<\alpha<1$ and $0<\rho<1$,
\[
  \rho<\Qpr=\rho+(1-\rho)\alpha<1.
\]
Write
\[
  \rho=\sin^2\phi,
  \qquad
  \Qpr=\sin^2\psi,
  \qquad
  0<\phi<\psi<\frac{\pi}{2}.
\]
On the increasing branch, put $\lambda=\sin^2\theta$ with
$0\le\theta\le\pi/2-\phi$. Then
\[
  \Qdqi(\lambda,\rho)
  =
  \bigl(\sin\theta\cos\phi+\cos\theta\sin\phi\bigr)^2
  =
  \sin^2(\theta+\phi).
\]
Consequently, the unique crossing occurs at
$\theta=\psi-\phi$, giving
\[
  \lambda^*
  =
  \sin^2(\psi-\phi)
  =
  \left(
    \sqrt{(1-\rho)\Qpr}
    -
    \sqrt{\rho(1-\Qpr)}
  \right)^2
  =
  \lambdacrit(\alpha,\rho).
\]
This proves both $0<\lambdacrit<1-\rho$ and
\[
  \Qdqi(\lambda,\rho)>\Qpr(\alpha,\rho)
  \quad\Longleftrightarrow\quad
  \lambda>\lambdacrit(\alpha,\rho).
\]

At $\rho=\tfrac12$, the baseline is $(1+\alpha)/2$, so
\[
  \lambdacrit
  =
  \frac12
  \left(1-2\sqrt{\Qpr(1-\Qpr)}\right)
  =
  \frac{1-\sqrt{1-\alpha^2}}{2}.
\]
Finally, if a family's limiting reach is at most
$U\le\lambdacrit$, continuity of the quality law gives
$\mu\le0$.
\end{proof}

\subsubsection{The distance restriction: rule \gate{1}(a)}

A worst-case decoding guarantee requires distinct correctable
patterns to have distinct syndromes. The Hamming license imposes
the stronger condition $2\ell+1<d$, which also prevents the
collisions relevant to the expected-quality calculation.

\begin{proof}[Proof of rule \gate{1}(a)]
At each finite length, licensing gives
\[
  \frac{\ell}{m}<\frac{d}{2m}.
\]
The inequality may become non-strict in the limit, so the
limiting parameters satisfy $\lambda\le\delta/2$.
If $\bar\delta$ is any upper bound on the limiting relative
distance, then
\[
  \lambda\le\frac{\delta}{2}\le\frac{\bar\delta}{2}.
\]
Thus $\bar\delta\le2\lambdacrit$ implies
$\lambda\le\lambdacrit$, and Theorem~\ref{thm:master}
gives $\mu\le0$. A positive margin therefore requires
$\bar\delta>2\lambdacrit$.

At density $\tfrac12$, substituting the threshold from
Theorem~\ref{thm:master} gives
\[
  \delta>1-\sqrt{1-\alpha^2}.
\]
If instead the relative distance tends to zero, the licensed
reach also tends to zero. Since $\Qdqi(0,\rho)=\rho$,
\[
  \mu\longrightarrow
  \rho-\Qpr(\alpha,\rho)
  =-(1-\rho)\alpha.
\]

Conversely, if $\delta>2\lambdacrit$, any available decoder
whose licensed limiting reach lies in
$(\lambdacrit,\delta/2)$ gives a positive margin.
The distance inequality permits such a reach, but it does not
by itself supply the decoder.
\end{proof}

\subsubsection{The packing restriction: rule \gate{1}(b)}

The packing argument counts the patterns that a decoder must
distinguish. It applies independently of how the decoder is
implemented.

We use the $q$-ary entropy
\[
  H_q(x)
  =
  x\log_q(q-1)
  -x\log_q x
  -(1-x)\log_q(1-x),
  \qquad
  0\le x\le1-\frac1q.
\]
On this interval, $H_q$ is continuous and strictly increasing
from $0$ to $1$.

\begin{lemma}[syndrome counting]\label{lem:realizability}
Let the decoder code have length $m$ and rate $1-\alpha$.
Suppose every error pattern of weight at most $\ell$ can be
recovered from its syndrome, and put $\lambda=\ell/m$.
If $\lambda\le1-1/q$, then
\[
  \alpha\ge H_q(\lambda)-o(1)
  \qquad (m\to\infty).
\]
\end{lemma}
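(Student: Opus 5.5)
The argument is a syndrome-counting injection followed by the standard entropy estimate for the volume of a $q$-ary Hamming ball.

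\emph{Injectivity.} Let $V_q(m,\ell)=\sum_{k=0}^{\ell}\binom{m}{k}(q-1)^k$ be the number of patterns of weight at most $\ell$. The decoder code $\deccode=\ker B^{\mathsf T}$ has rate $1-\alpha$, so $B^{\mathsf T}$ has rank $\alpha m$. Its image therefore has exactly $q^{\alpha m}$ elements, and this is the number of possible syndromes. Every pattern of weight at most $\ell$ is recovered from its syndrome, so the syndrome map is injective on the ball. Hence
\[
  V_q(m,\ell)\le q^{\alpha m},
  \qquad\text{that is,}\qquad
  \alpha\ge\frac{\log_q V_q(m,\ell)}{m}.
\]

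\emph{Volume estimate.} It remains to show that $\frac1m\log_q V_q(m,\ell)\ge H_q(\lambda)-o(1)$. I would use the single largest term, $V_q(m,\ell)\ge\binom{m}{\ell}(q-1)^{\ell}$, together with the standard estimate
\[
  \binom{m}{\ell}\ge\frac{1}{m+1}\,2^{mH_2(\ell/m)}
\]
(or Stirling's formula). This gives
\[
  \frac1m\log_q V_q\ge H_q(\ell/m)-\frac{\log_q(m+1)}{m}.
\]
The error term tends to $0$ uniformly in $\ell$. The restriction $\lambda\le1-1/q$ keeps $\lambda$ in the domain where $H_q$ is defined, and no further monotonicity argument is needed because we use $\binom{m}{\ell}$ directly at $\ell=\lambda m$.

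\emph{The $q$-dependence.} The main point needing care is that $q$ may vary along a family. The bound $\log_q(m+1)/m\le\log_2(m+1)/m$ holds for every $q\ge2$, so the $o(1)$ term is uniform in $q$. There is nothing else to control, since the term $\lambda\log_q(q-1)$ appears exactly in $H_q$.

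Combining the injectivity bound with the volume estimate yields $\alpha\ge H_q(\lambda)-o(1)$, which proves the lemma.
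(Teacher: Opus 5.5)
Your proof is correct and follows the paper's argument. Injectivity of the syndrome map on the radius-$\ell$ ball gives $V_q(m,\ell)\le q^{\alpha m}$. Retaining the $j=\ell$ term together with $\binom{m}{\ell}\ge 2^{mH_2(\lambda)}/(m+1)$ then yields $\alpha\ge H_q(\lambda)-\log_q(m+1)/m$, uniformly in $q\ge2$. The paper also records the matching upper bound on $V_q(m,\ell)$, which is where its monotonicity remark enters, but as you observe, only the lower bound is needed for this lemma.
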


\begin{proof}
The syndrome space has $q^{\alpha m}$ elements.
Unique recovery therefore requires
\[
  V_q(m,\ell)
  :=
  \sum_{j=0}^{\ell}\binom mj(q-1)^j
  \le q^{\alpha m}.
\]
Equivalently, the Hamming balls of radius $\ell$ around the
decoder-code words are disjoint.

For $\lambda=\ell/m\le1-1/q$, the elementary binomial bounds give
\[
  \frac{q^{mH_q(\lambda)}}{m+1}
  \le
  V_q(m,\ell)
  \le
  (m+1)q^{mH_q(\lambda)}.
\]
For the lower bound, retain the term at $j=\ell$.
For the upper bound, bound each term using $H_q(j/m)$ and
use monotonicity on the stated interval.
Taking logarithms yields
\[
  \alpha
  \ge
  H_q(\lambda)-\frac{\log_q(m+1)}{m},
\]
which proves the claim. The error tends to zero uniformly
over $q\ge2$.
\end{proof}

\begin{corollary}\label{cor:gvceiling}
Under the hypotheses of Lemma~\ref{lem:realizability},
\[
  \lambda\le H_q^{-1}(\alpha)+o(1).
\]
The quantity $H_q^{-1}(\alpha)$ is also the
Gilbert--Varshamov relative-distance threshold $\dGV$
for codes of rate $1-\alpha$.
\end{corollary}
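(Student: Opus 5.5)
The corollary is Lemma~\ref{lem:realizability} inverted, followed by an identification of the resulting threshold with the Gilbert--Varshamov bound. I will keep the proof at that level.

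\emph{Inverting the bound.} By Lemma~\ref{lem:realizability}, $H_q(\lambda)\le\alpha+\varepsilon_m$, where $\varepsilon_m=\log_q(m+1)/m\to0$ uniformly in $q$. There are two cases.
\begin{itemize}
\item If $\alpha+\varepsilon_m\ge1$, the claimed bound is trivial. The lemma only applies for $\lambda\le1-1/q$, which means $\lambda\le1-1/q=H_q^{-1}(1)$. Since $\alpha\to$ its limit, the right-hand side $H_q^{-1}(\alpha)+o(1)$ differs from $H_q^{-1}(1)$ by a vanishing amount.
\item Otherwise, $H_q$ is continuous and strictly increasing on $[0,1-1/q]$, so $\lambda\le H_q^{-1}(\alpha+\varepsilon_m)$.
\end{itemize}
It then remains to show $H_q^{-1}(\alpha+\varepsilon_m)-H_q^{-1}(\alpha)\to0$.

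\emph{The main obstacle.} This continuity must hold uniformly, because $q$ may vary along the family. Near $x=0$ the derivative $H_q'(x)=\log_q\bigl((q-1)(1-x)/x\bigr)$ blows up, so the inverse is actually flatter there, which helps. The difficulty is at the other end: near $x=1-1/q$ the derivative vanishes, so $H_q^{-1}$ is steep as its argument approaches $1$.

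I would handle this with concavity. $H_q$ is concave with $H_q(0)=0$ and $H_q(1-1/q)=1$, so it lies above the chord: $H_q(x)\ge x/(1-1/q)$. This pins $H_q^{-1}$ between $0$ and $1$, but the chord alone does not give a uniform modulus of continuity near the top. To get one, I will use the explicit expansion of $1-H_q(x)$ near $1-1/q$. It is quadratic, of order $(x-(1-1/q))^2 q/\ln q$ in a suitable scaling, and this shows $H_q^{-1}(y+\varepsilon)-H_q^{-1}(y)=O(\sqrt{\varepsilon})$ uniformly in $q$.

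Along a family with limiting $\alpha<1$, it is enough to work on a compact range $y\le\alpha'<1$. There the uniform bound follows from a lower bound on $H_q'$ over $x\le H_q^{-1}(\alpha')$, and this lower bound is uniform in $q$ because $H_q'(x)\ge \log_q\bigl((q-1)/ (q x/(q-1)\cdot\ldots)\bigr)$ stays bounded away from $0$ once $y$ is bounded away from $1$. This is the step I expect to need the most care. If it resists, I would simply state the $o(1)$ at fixed $q$, or along families where $H_q^{-1}(\alpha)$ converges.

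\emph{Gilbert--Varshamov identification.} The decoder code has rate $1-\alpha$. The asymptotic GV bound asserts that codes of rate $R$ exist with relative distance $\delta$ satisfying $R\ge1-H_q(\delta)$. For $R=1-\alpha$ this reads $H_q(\delta)=\alpha$, so $\dGV=H_q^{-1}(\alpha)$ by definition, which completes the identification.
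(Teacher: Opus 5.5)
Your outline is the paper's: invert Lemma~\ref{lem:realizability}, show that $H_q^{-1}(\alpha+\varepsilon_m)-H_q^{-1}(\alpha)\to0$ uniformly in $q$, and read off the Gilbert--Varshamov threshold from $1-\alpha=1-H_q(\delta)$. You also identify the right difficulty. The argument you give for the uniform step, however, does not establish it.

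\begin{itemize}
\item The expansion $1-H_q(x)\approx\frac{q^2}{2(q-1)\ln q}(1-1/q-x)^2$ holds only within distance $O(1/q)$ of the top. You need a lower bound on $1-H_q(1-1/q-t)$ for every $t$.
\item Away from that window the curvature $|H_q''(x)|=1/(x(1-x)\ln q)$ is only guaranteed to be at least $4/\ln q$. This gives $t\le\sqrt{\varepsilon\ln q/2}$, which is not a $q$-uniform $O(\sqrt{\varepsilon})$.
\item The $q$-uniform claim is true, but proving it needs the behaviour of $H_q$ as $q\to\infty$, which the local expansion does not contain.
\item Your alternative needs the same information. The displayed lower bound on $H_q'$ is incomplete, and showing $\inf_q H_q'(H_q^{-1}(\alpha'))>0$ requires locating $H_q^{-1}(\alpha')$ for large $q$.
\item Your case $\alpha+\varepsilon_m\ge1$ is not trivial. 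It is the same continuity question at the top of the range.
\item Retreating to fixed $q$ would prove less than the statement, because a family (Definition~\ref{def:family}) may let $q$ vary.
\end{itemize}

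The missing ingredient is a one-line uniform estimate. Write $h(x)=-x\ln x-(1-x)\ln(1-x)\in[0,\ln2]$. Then
\[
  H_q(x)-x=\frac{x\ln(1-1/q)+h(x)}{\ln q}.
\]
For $q\ge2$ the two terms in the numerator have opposite signs and absolute values at most $\ln2$. Hence $|H_q(x)-x|\le\ln2/\ln q$ on $[0,1-1/q]$, and so $|H_q^{-1}(y)-y|\le\ln2/\ln q$ on $[0,1]$. It follows that
\[
  H_q^{-1}(\min\{1,\alpha+\varepsilon_m\})\le H_q^{-1}(\alpha)+\varepsilon_m+\frac{2\ln2}{\ln q}.
\]
Fix any threshold $Q$. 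This bound handles $q\ge Q$. For $q<Q$ only finitely many fields occur, and each $H_q^{-1}$ is uniformly continuous on $[0,1]$. Letting $Q\to\infty$ gives a modulus of continuity uniform over all $q\ge2$. That covers arbitrary sequences of field sizes and disposes of your first case as well.

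This is the paper's argument: continuity at fixed $q$, and $H_q(x)=x+O(1/\ln q)$ for large $q$. Your Gilbert--Varshamov identification is fine as written.
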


\begin{proof}
For fixed $q$, the conclusion follows by continuity of
$H_q^{-1}$ on $[0,1]$.

The conclusion also holds when the field size varies.
Indeed, as $q\to\infty$,
\[
  H_q(x)=x+O\!\left(\frac1{\ln q}\right)
\]
uniformly in $x\in[0,1-1/q]$. Hence
$H_q^{-1}(\alpha)=\alpha+o(1)$ uniformly in
$\alpha\in[0,1]$, and the conclusion follows from the lemma.
Together with the fixed-$q$ argument, this covers arbitrary
sequences of field sizes.

The Gilbert--Varshamov threshold satisfies
$1-\alpha=1-H_q(\dGV)$, giving
$\dGV=H_q^{-1}(\alpha)$.
The two thresholds share an entropy expression. The bound on
decoding reach here follows from packing.
\end{proof}

For binary instances with $\rank B=n$, this recovers the
entropy restriction on $\ell/m$ used by Jordan et al.\
for max-XOR-SAT~\cite{jordan24dqi}.

\begin{proof}[Proof of rule \gate{1}(b)]
A Hamming-licensed radius has $\lambda<1/2$ at finite
length, and $1/2\le1-1/q$. Corollary~\ref{cor:gvceiling}
therefore applies. If the resulting limiting upper bound
on reach is at most $\lambdacrit$, then
Theorem~\ref{thm:master} gives $\mu\le0$.
At fixed $q$, a positive limiting margin consequently requires
\[
  H_q^{-1}(\alpha)>\lambdacrit(\alpha,\rho).
\]
When $q$ varies, the same comparison is made with the limiting
packing bound.
\end{proof}

\subsubsection{Rate--distance restrictions: rule \gate{1}(c)}

The Singleton bound replaces the code-specific distance by
a bound depending only on the optimizer-code rate.
The following calculation identifies the densities at which
even this largest permitted reach cannot beat Prange.

\begin{lemma}\label{lem:densitywindow}
For $0<\rho<1$, $0<\alpha<1$, and
$0<\lambda\le1-\rho$,
\begin{equation}\label{eq:general-window}
  \Qdqi(\lambda,\rho)>\Qpr(\alpha,\rho)
  \quad\Longleftrightarrow\quad
  \lambda-2\lambda\rho
  +2\sqrt{\lambda\rho(1-\lambda)(1-\rho)}
  >\alpha(1-\rho).
\end{equation}
If $\lambda=\alpha/2\le1-\rho$, this is equivalent to
\begin{equation}\label{eq:mds-window}
  \rho(1-\rho)>
  \frac{\alpha}{8(1-\alpha/2)}.
\end{equation}
\end{lemma}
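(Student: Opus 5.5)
The plan is to work directly with the unsaturated branch of \eqref{eq:qdqi}, which applies because $0<\lambda\le1-\rho$. Expanding the square gives
\[
  \Qdqi(\lambda,\rho)=\lambda(1-\rho)+\rho(1-\lambda)+2\sqrt{\lambda\rho(1-\lambda)(1-\rho)}
  =\rho+\lambda-2\lambda\rho+2\sqrt{\lambda\rho(1-\lambda)(1-\rho)}.
\]
Subtracting $\Qpr=\rho+(1-\rho)\alpha$ cancels the $\rho$ term. What remains is exactly the inequality in \eqref{eq:general-window}, so the first equivalence is a one-line rearrangement with no squaring needed.

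For \eqref{eq:mds-window}, I will substitute $\lambda=\alpha/2$. The right-hand side $\alpha(1-\rho)$ and the term $\lambda-2\lambda\rho=\tfrac{\alpha}{2}(1-2\rho)$ combine, after moving the latter across, to
\[
  2\sqrt{\tfrac{\alpha}{2}\rho\bigl(1-\tfrac{\alpha}{2}\bigr)(1-\rho)}>\alpha(1-\rho)-\tfrac{\alpha}{2}(1-2\rho)=\tfrac{\alpha}{2}.
\]
Both sides are positive, so squaring is an equivalence. This yields $2\alpha\rho(1-\rho)(1-\alpha/2)>\alpha^2/4$. Dividing by the positive quantity $2\alpha(1-\alpha/2)$ gives $\rho(1-\rho)>\alpha/(8(1-\alpha/2))$.

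The only point needing care is the sign check before squaring. In the general form \eqref{eq:general-window} the right-hand side minus $\lambda(1-2\rho)$ could be negative, but I avoid squaring there entirely. In the specialized case the right-hand side reduces to the strictly positive $\alpha/2$, which is what makes the squaring step reversible. I would also note that the hypothesis $\alpha/2\le1-\rho$ is exactly what places $\lambda$ on the unsaturated branch, so the first part applies. This closed form in $\rho(1-\rho)$ is what later yields $\rhocrit$ in \eqref{eq:rhocrit}, by solving the corresponding quadratic in $\rho$.
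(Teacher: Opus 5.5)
Your proof is correct and matches the paper's argument step for step. You expand the unsaturated branch of \eqref{eq:qdqi}, subtract $\Qpr$ to obtain \eqref{eq:general-window} without squaring, then set $\lambda=\alpha/2$, reduce the right-hand side to the positive quantity $\alpha/2$, and square and rearrange to reach \eqref{eq:mds-window}.
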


\begin{proof}
On the increasing branch,
\[
  \Qdqi(\lambda,\rho)
  =
  \lambda+\rho-2\lambda\rho
  +2\sqrt{\lambda\rho(1-\lambda)(1-\rho)}.
\]
Subtracting $\Qpr=\rho+\alpha(1-\rho)$ gives
\eqref{eq:general-window}.

At $\lambda=\alpha/2$, the comparison reduces to
\[
  2\sqrt{
    \frac{\alpha}{2}
    \left(1-\frac{\alpha}{2}\right)\rho(1-\rho)
  }
  >
  \frac{\alpha}{2}.
\]
Both sides are positive. Squaring and rearranging gives
\eqref{eq:mds-window}.
\end{proof}

\begin{proof}[Proof of rule \gate{1}(c)]
The decoder code has dimension $(1-\alpha)m$.
The Singleton bound~\cite{singleton64} gives
$d\le\alpha m+1$, so the licensing condition implies
\[
  2\ell+1<d\le\alpha m+1,
  \qquad
  \lambda=\frac{\ell}{m}<\frac{\alpha}{2}.
\]
It remains to determine when the limiting Singleton reach
$\alpha/2$ exceeds $\lambdacrit$.

If $\alpha/2>1-\rho$, then
$\Qdqi(\alpha/2,\rho)=1>\Qpr$.
In this case
$\rho>1-\alpha/2>1/2>\rhocrit(\alpha)$.

Now suppose $\alpha/2\le1-\rho$.
By Lemma~\ref{lem:densitywindow},
\[
  \Qdqi(\alpha/2,\rho)\le\Qpr
  \quad\Longleftrightarrow\quad
  \rho(1-\rho)\le
  \frac{\alpha}{4(2-\alpha)}.
\]
The two roots of the corresponding equality are
\[
  \rho_\pm
  =
  \frac12
  \left(
    1\pm\sqrt{\frac{2(1-\alpha)}{2-\alpha}}
  \right).
\]
The lower root is $\rhocrit(\alpha)$.
The upper root lies outside the branch under consideration,
because
\[
  \rho_+>1-\frac{\alpha}{2}.
\]
To verify this, square the equivalent inequality
\[
  \sqrt{\frac{2(1-\alpha)}{2-\alpha}}>1-\alpha;
\]
after division by $1-\alpha>0$, it becomes
$2>(1-\alpha)(2-\alpha)$, which holds for $0<\alpha<1$.
Thus
\[
  \Qdqi(\alpha/2,\rho)\le\Qpr
  \quad\Longleftrightarrow\quad
  \rho\le\rhocrit(\alpha).
\]
Theorem~\ref{thm:master} gives the Singleton form of the rule.

For fixed $q$, the asymptotic rate--distance form of the
Plotkin bound gives
\[
  \delta\le(1-1/q)\alpha+o(1).
\]
The limiting licensed reach is therefore at most
$(1-1/q)\alpha/2$. Comparing this value with
$\lambdacrit$ gives the Plotkin form.
\end{proof}

\begin{corollary}\label{cor:exact}
Suppose each accepted set contains one value, so $\rho=1/q$.
For each fixed $\alpha\in(0,1)$, the Singleton form of
rule \gate{1}(c) excludes a positive asymptotic margin whenever
\[
  q\ge\frac1{\rhocrit(\alpha)}.
\]
\end{corollary}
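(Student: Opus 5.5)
The plan is to read Corollary~\ref{cor:exact} as a direct specialization of the Singleton form of rule \gate{1}(c), which we have just proved. That form says a positive asymptotic margin with a Hamming-licensed decoder requires $\rho>\rhocrit(\alpha)$. With $\rho=1/q$, the exclusion condition $\rho\le\rhocrit(\alpha)$ becomes $1/q\le\rhocrit(\alpha)$, which is the same as $q\ge 1/\rhocrit(\alpha)$. The steps are as follows.

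First, check that $\rhocrit(\alpha)$ lies in $(0,\tfrac12)$ for $0<\alpha<1$, so that $1/\rhocrit(\alpha)$ is a finite threshold. This follows from $0<\sqrt{2(1-\alpha)/(2-\alpha)}<1$, since $2(1-\alpha)<2-\alpha$.

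Second, fix $\alpha$ and consider a family with $\rho=1/q$ and $q\ge1/\rhocrit(\alpha)$. If $q$ is constant along the family, then $\rho=1/q\le\rhocrit(\alpha)$. The proof of rule \gate{1}(c) then gives $\Qdqi(\alpha/2,\rho)\le\Qpr$. Here we use the branch split from that proof: the saturated case $\alpha/2>1-\rho$ would need $\rho>1-\alpha/2>\tfrac12$, which is impossible because $\rho\le\rhocrit<\tfrac12$.

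Third, combine this with the Singleton bound $\lambda<\alpha/2$ and the monotonicity of $\Qdqi$ in $\lambda$. The limiting reach $\lambda\le\alpha/2$ then gives $\Qdqi(\lambda,\rho)\le\Qdqi(\alpha/2,\rho)\le\Qpr$, so $\mu\le0$. Equivalently, $\alpha/2\le\lambdacrit(\alpha,\rho)$, and Theorem~\ref{thm:master} applies.

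The only real subtlety, and the step most likely to need care, is that Definition~\ref{def:family} lets $q$ vary along the family. If $q\to\infty$, then $\rho\to0$. The limiting density would fall outside the hypothesis $0<\rho<1$ of Corollary~\ref{cor:rules}. In that case I would argue directly: $\Qdqi(\lambda,\rho)\to\lambda\le\alpha/2$ while $\Qpr\to\alpha$, so the limiting margin is at most $-\alpha/2<0$. That case is excluded as well.

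If $q$ varies but stays bounded, then along a subsequence it is eventually constant, and the fixed-$q$ argument applies. I would state the corollary for each fixed $q$ and remark on the limiting case separately. The algebra itself is immediate.
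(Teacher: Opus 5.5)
Your proposal is correct and follows the paper's argument. The paper's proof notes that $\rhocrit(\alpha)>0$ for $0<\alpha<1$ and that $q\ge 1/\rhocrit(\alpha)$ is exactly $\rho=1/q\le\rhocrit(\alpha)$, the exclusion condition of the Singleton form of rule \gate{1}(c). Your extra treatment of families with varying $q$ is a sound refinement that the paper leaves implicit: the direct limit $\mu\le-\alpha/2$ when $q\to\infty$, and eventual constancy of $q$ otherwise. That eventual constancy actually follows from convergence of $\rho=1/q$ under Definition~\ref{def:family}, not only along a subsequence.
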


\begin{proof}
For $0<\alpha<1$, the quantity $\rhocrit(\alpha)$ is positive.
The stated condition is exactly
$\rho=1/q\le\rhocrit(\alpha)$.
\end{proof}

\subsection{The sharp margin cap}
\label{sec:capproof}

The same Singleton restriction also bounds how large an
advantage can be. We first optimize over density at a fixed
rate, then optimize over the rate. The argument below
bounds the asymptotic quality function $\Qdqi$, not $\Qdqi^{(m)}$, and the resulting
limiting margin.

\begin{proof}[Proof of Proposition~\ref{prop:cap}]
Let $0<\alpha<1$ and put $u=\alpha/2$.
At every finite length the licensed reach satisfies $\lambda<u$.

Suppose first that $u\le1-\rho$.
Strict monotonicity on the increasing branch gives
\[
  \Qdqi(\lambda,\rho)-\Qpr(\alpha,\rho)
  <
  \Qdqi(u,\rho)-\Qpr(\alpha,\rho).
\]
Since $\Qpr=\rho+2u(1-\rho)$, expansion yields
\[
  \Qdqi(u,\rho)-\Qpr(\alpha,\rho)
  =
  2\sqrt{u(1-u)\rho(1-\rho)}-u.
\]
The inequality $\rho(1-\rho)\le1/4$ therefore gives
\[
  \Qdqi(\lambda,\rho)-\Qpr(\alpha,\rho)
  <
  \sqrt{u(1-u)}-u
  =
  \kappa(\alpha).
\]

If $u>1-\rho$, use $\Qdqi\le1$ instead:
\[
  \Qdqi(\lambda,\rho)-\Qpr(\alpha,\rho)
  \le
  (1-\rho)(1-2u)
  <
  u(1-2u).
\]
This is also at most $\kappa(\alpha)$, because
\[
  \kappa(\alpha)-u(1-2u)
  =
  \sqrt{u(1-u)}
  \left(1-2\sqrt{u(1-u)}\right)
  \ge0.
\]
Thus the comparison with $\kappa(\alpha)$ holds at every
density. Passing to a family limit gives
\[
  \mu\le\kappa(\alpha).
\]
The inequality is no longer necessarily strict, since
$\lambda$ may approach $u$.

To maximize over $\alpha$, regard
$\sqrt{u(1-u)}-u$ as a function of $u\in[0,1/2]$.
It is strictly concave in the interior, and its derivative
vanishes when
\[
  \frac{1-2u}{2\sqrt{u(1-u)}}=1.
\]
The relevant solution is
\[
  u=\frac{2-\sqrt2}{4},
  \qquad
  \alpha=1-\frac1{\sqrt2}.
\]
At this point $u(1-u)=1/8$, giving
\[
  \max_\alpha\kappa(\alpha)
  =
  \frac{\sqrt2-1}{2}.
\]

We next identify the conditions for attaining this value
in the limit. The saturated branch satisfies
\[
  \mu\le u(1-2u)\le\frac18
  <
  \frac{\sqrt2-1}{2},
\]
so it cannot attain the global maximum.
On the other branch, the expression
\[
  2\sqrt{u(1-u)\rho(1-\rho)}-u
\]
has its unique global maximum at
$\rho=1/2$ and $u=(2-\sqrt2)/4$.
A sequence attaining the cap must therefore have
\[
  \rho\to\frac12,
  \qquad
  \alpha\to1-\frac1{\sqrt2}.
\]
It must also have $\lambda-u\to0$.
Otherwise, a subsequence with $\lambda\le u-\varepsilon$
would retain a strictly positive gap in the continuous,
strictly increasing quality law near these limiting parameters.

Licensing and Singleton now force $\delta\to\alpha$.
For fixed $q$, the Plotkin bound prevents this, since
$\delta\le(1-1/q)\alpha+o(1)$.
The same obstruction applies to any subsequence with bounded
field size, so attaining the cap requires $q\to\infty$.

Finally, OPI supplies an attaining sequence.
Its Reed--Solomon decoder code has $d=n+1$, where
$\alpha=n/m$. Taking the largest licensed radius
$\ell=\lfloor(n-1)/2\rfloor$ gives $\ell/m\to\alpha/2$.
With $q\to\infty$ and accepted-set density tending to $1/2$,
its limiting margin is $\kappa(\alpha)$.
Choosing $\alpha\to1-1/\sqrt2$ attains the global cap,
using the established OPI quality law~\cite{jordan24dqi}.
\end{proof}

\subsection{Counting high-quality codewords}
\label{sec:counting}

The existential line comes from counting how many codewords
are expected to reach a given score.
This argument needs independence across the accepted sets,
but not independence between the scores of different codewords.

We use natural logarithms in relative entropy:
\[
  \KL{Q}{\rho}
  =
  Q\ln\frac Q\rho
  +(1-Q)\ln\frac{1-Q}{1-\rho},
\]
with the usual convention $0\ln0=0$.
Binary entropy $H_2$ is measured in bits.

\begin{lemma}\label{lem:firstmoment}
Let the accepted sets be independent, uniform subsets of
$\F_q$ of density $\rho$, independent of an optimizer code
with $q^{\alpha m}$ codewords.
Let $N_Q$ count codewords of quality at least $Q$, where
$\rho<Q\le1$, and put $j=\lceil Qm\rceil$.
Then
\[
  \frac{
    \exp\!\left(m\alpha\ln q-m\KL{j/m}{\rho}\right)
  }{m+1}
  \le
  \mathbb E[N_Q]
  \le
  \exp\!\left(m\alpha\ln q-m\KL{Q}{\rho}\right).
\]
In particular, for fixed $\rho\in(0,1)$,
\[
  \mathbb E[N_Q]
  =
  \exp\!\left(
    m\bigl(\alpha\ln q-\KL{Q}{\rho}+o(1)\bigr)
  \right),
\]
uniformly for $Q$ in compact subsets of $(\rho,1]$.

Moreover:
\begin{enumerate}
  \item[(i)]
  The expected number of perfect codewords is
  $(\regfac)^m$.

  \item[(ii)]
  If $\regfac<1$, then every fixed quality strictly above
  the limiting existential line is absent with probability
  tending to one. More explicitly, using the finite-parameter
  line $\Qstar$, for $\varepsilon>0$ with
  $\Qstar+\varepsilon\le1$,
  \[
    \Pr[N_{\Qstar+\varepsilon}\ge1]
    \le e^{-2m\varepsilon^2}.
  \]

  \item[(iii)]
  At $\rho=1/2$, the condition $\regfac>1$ is equivalent to
  $\alpha\log_2q>1$.
  If $\alpha\log_2q<1$, then
  \[
    H_2(\Qstar)=1-\alpha\log_2q,
    \qquad \Qstar>\frac12.
  \]
\end{enumerate}
\end{lemma}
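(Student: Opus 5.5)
The plan is a first-moment (union-bound) argument over codewords. Fix a codeword $c\in\optcode$. Because the accepted sets are independent uniform subsets of density $\rho$, independent of the code, the indicators $\mathbf 1\{c_i\in F_i\}$ are independent Bernoulli$(\rho)$ variables. This holds for every fixed $c$, whatever its coordinates are, since each $F_i$ is a uniform subset of fixed size and so contains any given value with probability $\rho$. Hence the score of $c$ is exactly $\mathrm{Bin}(m,\rho)$. By linearity of expectation,
\[
\mathbb E[N_Q]=q^{\alpha m}\Pr[\mathrm{Bin}(m,\rho)\ge j].
\]
No independence between codewords is needed. The two-sided estimate then reduces to standard binomial tail bounds. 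For the upper bound I would use the Chernoff bound $\Pr[\mathrm{Bin}(m,\rho)\ge Qm]\le e^{-m\KL{Q}{\rho}}$ for $Q>\rho$, together with $j\ge Qm$. For the lower bound I would keep only the term $k=j$ and use the method-of-types estimate $\binom mj\rho^j(1-\rho)^{m-j}\ge e^{-m\KL{j/m}{\rho}}/(m+1)$. The asymptotic form follows because $j/m-Q=O(1/m)$ and $\KL{\cdot}{\rho}$ is uniformly continuous on compact subsets of $(\rho,1]$, including the endpoint $1$, where it equals $\ln(1/\rho)$.

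For the three items I would argue as follows.

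\begin{enumerate}
\item[(i)] Take $Q=1$. Then $j=m$, and a codeword is perfect with probability $\rho^m$, which gives $\mathbb E[N_1]=q^{\alpha m}\rho^m$. This is the quantity the statement writes as $(\regfac)^m$, namely $(q^\alpha\rho)^m$.

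\item[(ii)] Use Markov's inequality, $\Pr[N_Q\ge1]\le\mathbb E[N_Q]\le\exp\bigl(m(\alpha\ln q-\KL{Q}{\rho})\bigr)$, with $Q=\Qstar+\varepsilon$. Since $\KL{\Qstar}{\rho}=\alpha\ln q$, it suffices to show $\KL{\Qstar+\varepsilon}{\rho}-\KL{\Qstar}{\rho}\ge2\varepsilon^2$. For this I would use the second-order expansion: $\partial_Q\KL{Q}{\rho}=\ln\frac{Q(1-\rho)}{\rho(1-Q)}>0$ for $Q>\rho$, and $\partial_Q^2\KL{Q}{\rho}=1/(Q(1-Q))\ge4$. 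Integrating twice from $\Qstar$ gives the increment bound. The qualitative statement for the limiting line follows by continuity.

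\item[(iii)] At $\rho=\tfrac12$ we have $\regfac=q^\alpha/2$, which exceeds $1$ exactly when $\alpha\log_2q>1$. In the same case $\KL{Q}{1/2}=\ln2\,(1-H_2(Q))$, so the defining equation of $\Qstar$ becomes $H_2(\Qstar)=1-\alpha\log_2 q$, and $\Qstar>\tfrac12$ holds by definition of the existential line.
\end{enumerate}

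The main point needing care is the $2\varepsilon^2$ increment in (ii). The clean way is the curvature bound $1/(Q(1-Q))\ge4$ combined with a nonnegative first derivative at $\Qstar>\rho$. A secondary check is the boundary case $\Qstar+\varepsilon=1$, where $\KL{\cdot}{\rho}$ is still finite and continuous, so the argument goes through unchanged.
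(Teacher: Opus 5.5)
Your proposal is correct and follows the paper's proof essentially step for step. The shared steps are: linearity of expectation with each codeword's score distributed as $\mathrm{Bin}(m,\rho)$; the Chernoff bound for the upper estimate; the single term $k=j$ with the method-of-types estimate for the lower estimate; Markov's inequality together with monotonicity and $\partial_Q^2\KL{Q}{\rho}\ge4$ for the $2\varepsilon^2$ increment in (ii); and direct substitution of $\KL{Q}{1/2}=(1-H_2(Q))\ln2$ in (iii). Your separate check of the endpoint $\Qstar+\varepsilon=1$ by continuity of the divergence covers a detail the paper leaves implicit.
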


\begin{proof}
Fix a codeword $c$.
The events $c_i\in F_i$ are independent, each with probability
$\rho$. Its number of satisfied rows therefore has distribution
$\mathrm{Bin}(m,\rho)$.
Linearity of expectation gives
\[
  \mathbb E[N_Q]
  =
  q^{\alpha m}
  \Pr[\mathrm{Bin}(m,\rho)\ge Qm].
\]
The Chernoff bound gives the displayed upper bound.
For the lower bound, retain the binomial term at $j$ and use
\[
  \binom mj
  \ge
  \frac{\exp(mh(j/m))}{m+1},
\]
where $h$ is binary entropy in nats.
The asymptotic expression follows by continuity of
$\KL{\cdot}{\rho}$, including at $Q=1$.

For (i), a fixed codeword satisfies all rows with probability
$\rho^m$, so
\[
  \mathbb E[N_1]=q^{\alpha m}\rho^m=(\regfac)^m.
\]

For (ii), the defining equation is
$\KL{\Qstar}{\rho}=\alpha\ln q$.
The divergence is increasing for $Q>\rho$, and
\[
  \frac{\partial^2}{\partial Q^2}\KL{Q}{\rho}
  =
  \frac1{Q(1-Q)}
  \ge4.
\]
Hence
\[
  \KL{\Qstar+\varepsilon}{\rho}
  -
  \KL{\Qstar}{\rho}
  \ge2\varepsilon^2.
\]
The expectation bound and Markov's inequality give the result.
If $\Qstar+\varepsilon>1$, the event is empty.

For (iii), $\regfac=q^\alpha/2$, and
\[
  \KL{Q}{1/2}
  =
  \bigl(1-H_2(Q)\bigr)\ln2.
\]
Substitution into the equation for $\Qstar$ completes the proof.
\end{proof}

The preceding result is an upper bound on the optimum for a prescribed code.
A large expected count does not, by itself, prove that suitable
codewords exist with high probability. In particular, the term
``satisfiable regime'' refers to the first-moment condition
$\regfac\ge1$. It is not an existence theorem for every
optimizer code. An existence result requires an additional
argument.

\subsection{Rule \gate{3} and lossy source coding}
\label{sec:rd}

Max-agreement can also be viewed as lossy compression.
The source is the sequence of accepted sets, and a reconstruction
is a word in the optimizer code. A coordinate incurs distortion
one when the reconstructed value lies outside its accepted set.
The connection with the rate--distortion limit is discussed by
Chailloux and Tillich~\cite{chailloux24rd}. We derive the
particular form needed here.

\begin{lemma}\label{lem:setmemrd}
Let the source emit independent, uniformly random subsets
$F\subseteq\F_q$ of size $\rho q$, and use distortion
$\mathbf1[y\notin F]$.
For $0\le D\le1-\rho$, the rate--distortion function,
in nats per coordinate, is
\[
  R_\rho(D)
  =
  \ln q
  -(1-D)\ln(\rho q)
  -D\ln((1-\rho)q)
  -H_2(D)\ln2
  =
  \KL{1-D}{\rho}.
\]
For $D\ge1-\rho$, it is zero.
For singleton accepted sets, this reduces to
\[
  R_{1/q}(D)
  =
  \ln q-D\ln(q-1)-H_2(D)\ln2
\]
on the nonzero branch.
\end{lemma}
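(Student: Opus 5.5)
The plan is to compute the rate--distortion function of this memoryless source through the Shannon lower bound, with a matching achievability construction.

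\emph{Lower bound.} For a test channel $P_{Y|F}$, write $I(F;Y)=H(Y)-H(Y\mid F)$ and use $H(Y)\le\ln q$. The distortion constraint is $\Pr[Y\notin F]\le D$. Condition on the event $E=\mathbf1[Y\notin F]$, a deterministic function of $(F,Y)$. This gives
\[
  H(Y\mid F)\le H(E\mid F)+H(Y\mid F,E)\le H_2(D')\ln2+(1-D')\ln(\rho q)+D'\ln((1-\rho)q),
\]
with $D'=\Pr[E=1]\le D$. The bound holds because, given $F$ and $E=0$, the variable $Y$ ranges over $\rho q$ values, and given $E=1$ it ranges over $(1-\rho)q$ values. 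We then need monotonicity of the right-hand side in $D'$ on $[0,1-\rho]$. Its derivative is $\ln\frac{1-D'}{D'}+\ln\frac{1-\rho}{\rho}$, which is nonnegative exactly when $D'\le1-\rho$. So the bound is maximized at $D'=D$, giving $I(F;Y)\ge R_\rho(D)$ as displayed. For single-letterization I will invoke the standard rate--distortion theorem for i.i.d.\ sources with bounded distortion, so only the single-letter minimization is needed.

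\emph{Achievability.} I choose $Y$ uniform on $\F_q$ and a backward channel. Given $Y=y$, draw $F$ conditioned to contain $y$ with probability $1-D$, and otherwise conditioned to exclude $y$. The marginal of $F$ must be uniform over $\rho q$-subsets, so I need $\Pr[F\mid Y=y]$ summed over $y$ to equal $1/\binom{q}{\rho q}$. Each subset $F$ contains $\rho q$ values $y$, each contributing $\frac1q\cdot\frac{1-D}{\binom{q-1}{\rho q-1}}$. It also excludes $(1-\rho)q$ values, each contributing $\frac1q\cdot\frac{D}{\binom{q-1}{\rho q}}$. Since $\binom{q-1}{\rho q-1}=\rho\binom q{\rho q}$ and $\binom{q-1}{\rho q}=(1-\rho)\binom q{\rho q}$, these sum to $\frac{(1-D)+D}{\binom q{\rho q}}$, so the marginal is correct. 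For this channel, $H(Y)=\ln q$ and the conditioning inequality is an equality: $E$ is $\mathrm{Bern}(D)$ and independent of $F$ by symmetry, and $Y$ is uniform on the relevant side given $(F,E)$. So the rate $R_\rho(D)$ is attained. For $D\ge1-\rho$, the choice of $Y$ uniform and independent of $F$ gives distortion $1-\rho$ and rate zero.

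\emph{The two identities.} To identify the result with $\KL{1-D}{\rho}$, expand
\[
  \KL{1-D}{\rho}=(1-D)\ln\frac{1-D}{\rho}+D\ln\frac{D}{1-\rho}.
\]
This equals $-H_2(D)\ln2-(1-D)\ln\rho-D\ln(1-\rho)$. The $\ln q$ terms cancel, since $\ln q-(1-D)\ln q-D\ln q=0$, so the two expressions agree. For singleton sets, setting $\rho q=1$ and $(1-\rho)q=q-1$ gives the stated reduction.

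\emph{Main obstacle.} The delicate step is the lower bound, namely justifying the maximization over $D'\le D$ and checking the sign of the derivative on the correct branch. Achievability reduces to the binomial identity above.
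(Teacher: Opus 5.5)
\textbf{There is a genuine gap in your converse.} Your achievability construction is correct. The marginal check, the forward-channel form, and the value $R_\rho(D)$ at that channel all check out, as do the identification with $\KL{1-D}{\rho}$ and the singleton reduction.

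The problem is the lower bound. From $I(F;Y)=H(Y)-H(Y\mid F)$, the inequality $H(Y)\le\ln q$ gives $I(F;Y)\le\ln q-H(Y\mid F)$, which is an \emph{upper} bound on the mutual information. Combining it with an upper bound on $H(Y\mid F)$ says nothing about the minimum of $I(F;Y)$. The step your argument actually needs, $I(F;Y)\ge\ln q-H(Y\mid F)$, fails for every channel whose output is not uniform, since then $H(Y)<\ln q$. For example, a constant output has $I=0$ but $\ln q-H(Y\mid F)=\ln q$, and it is feasible at $D=1-\rho$. As written, you have minimized only over channels with uniform output marginal, which is not the full minimization defining $R_\rho(D)$. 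The step you flagged as delicate, the monotonicity in $D'$, is fine; the real obstacle is elsewhere.

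The paper closes exactly this hole by symmetrizing first. Apply the same permutation of $\F_q$ to $F$ and $y$ and average any feasible channel over all such permutations. This preserves the source law and the distortion, and by convexity of mutual information in the channel it does not increase $I$. The averaged channel is permutation-equivariant, so its output is uniform and it has the two-level form you used for achievability. Then $H(Y)=\ln q$ holds exactly and your computation goes through.

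Alternatively, you can keep your entropy argument but decompose on the source side, whose entropy is fixed. Write $I(F;Y)=H(F)-H(F\mid Y)$ with $H(F)=\ln\binom{q}{\rho q}$. Since $E$ is determined by $(F,Y)$,
\[
H(F\mid Y)=H(E\mid Y)+H(F\mid Y,E)\le H_2(D')\ln2+(1-D')\ln\binom{q-1}{\rho q-1}+D'\ln\binom{q-1}{\rho q}.
\]
Your own identities $\binom{q-1}{\rho q-1}=\rho\binom{q}{\rho q}$ and $\binom{q-1}{\rho q}=(1-\rho)\binom{q}{\rho q}$ then give $I(F;Y)\ge\KL{1-D'}{\rho}$ for every feasible channel. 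This is at least $\KL{1-D}{\rho}$, because $Q\mapsto\KL{Q}{\rho}$ increases on $[\rho,1]$ and $1-D'\ge1-D\ge\rho$. This route avoids the symmetrization step entirely.
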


\begin{proof}
The source law and distortion are unchanged by applying the
same permutation of $\F_q$ to $F$ and to $y$.
Average any feasible reconstruction channel over these
permutations. Its expected distortion is unchanged, and
convexity of mutual information in the channel ensures that
its mutual information does not increase.

We may therefore use a permutation-invariant channel.
For some $D'\le D$, it assigns total probability $1-D'$
uniformly to $F$ and total probability $D'$ uniformly to
the complement. Its output is uniform on $\F_q$, and
\[
  H(Y\mid F)
  =
  (1-D')\ln(\rho q)
  +D'\ln((1-\rho)q)
  +H_2(D')\ln2.
\]
Thus
\[
  I(F;Y)=\KL{1-D'}{\rho}.
\]
This is decreasing in $D'$ on $[0,1-\rho]$, so the optimum
under $D'\le D$ occurs at $D'=D$.

For $D\ge1-\rho$, a reconstruction independent of $F$
already has expected distortion $1-\rho$ and requires
zero rate.
\end{proof}

\begin{corollary}\label{cor:rdisexistential}
Fix $q$ and $\rho$, and let $\Qstar$ be the line in
Definition~\ref{def:regime}.
Among arbitrary reconstruction codebooks of asymptotic rate
$\alpha$, the optimal expected distortion tends to
$1-\Qstar$.

At finite length, any codebook with at most $q^{\alpha m}$
words, and any encoder taking values in that codebook,
has expected distortion at least $1-\Qstar$, with $\Qstar$
evaluated at those finite parameters.
\end{corollary}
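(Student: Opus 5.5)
The plan is to combine Lemma~\ref{lem:setmemrd} with the standard rate--distortion converse and achievability, and then identify the resulting distortion with $1-\Qstar$ through Definition~\ref{def:regime}. The key observation is already contained in Lemma~\ref{lem:setmemrd}: on $0\le D\le1-\rho$ we have $R_\rho(D)=\KL{1-D}{\rho}$. This function is continuous and strictly decreasing in $D$ on that interval, going from $-\ln\rho$ at $D=0$ to $0$ at $D=1-\rho$. The distortion--rate function at rate $\alpha\ln q$ (nats) is therefore the unique $D^*$ with $\KL{1-D^*}{\rho}=\alpha\ln q$ when $\alpha\ln q<-\ln\rho$, that is, when $q^\alpha\rho<1$. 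Comparing with \eqref{eq:qstar} gives $D^*=1-\Qstar$. When $q^\alpha\rho\ge1$, the rate exceeds $R_\rho(0)$, so $D^*=0=1-\Qstar$, consistent with setting $\Qstar=1$.

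For the finite-length statement I will use the single-letter converse. Let a codebook $\mathcal C$ with $|\mathcal C|\le q^{\alpha m}$ and an encoder $Y^m=g(F^m)\in\mathcal C$ be given. Write $D_i=\Pr[Y_i\notin F_i]$ and $\bar D=\frac1m\sum_i D_i$. The standard chain is
\[
\alpha m\ln q\ge H(Y^m)\ge I(F^m;Y^m)\ge\sum_i I(F_i;Y_i)\ge\sum_i R_\rho(D_i)\ge mR_\rho(\bar D).
\]
It uses independence of the $F_i$ for the superadditivity step, and convexity of $R_\rho$ (a general property of rate--distortion functions, or a direct check here since $\KL{\cdot}{\rho}$ is convex) for the last step. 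Because $R_\rho$ is nonincreasing and strictly decreasing on $[0,1-\rho]$, the inequality $R_\rho(\bar D)\le\alpha\ln q$ forces $\bar D\ge D^*=1-\Qstar$. If $\bar D\ge1-\rho$ the claim is trivial, since $\Qstar>\rho$. Note that this bound holds for every $m$ with $\Qstar$ evaluated at the finite parameters. No $o(1)$ term appears, because the converse has none.

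For the asymptotic statement I will invoke the rate--distortion achievability theorem for the i.i.d.\ source of random $\rho q$-subsets, which has a finite alphabet, with bounded distortion. For every $\varepsilon>0$ and all large $m$ it gives codebooks of rate at most $\alpha\ln q$ with expected distortion at most $D(\alpha\ln q-\varepsilon')+\varepsilon$. Continuity of $D(\cdot)$, which follows from continuity of $\KL{\cdot}{\rho}$, then sends this to $1-\Qstar$. Combining achievability with the converse shows that the optimal expected distortion tends to $1-\Qstar$.

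The main obstacle is bookkeeping rather than anything deep. The achievability theorem is stated for rates strictly above $R(D)$, so in the satisfiable boundary case ($q^\alpha\rho=1$) the continuity argument has to be applied at the endpoint $D=0$. I expect that to go through because $\KL{\cdot}{\rho}$ is continuous at $Q=1$. The other point to handle is that the codebook size $q^{\alpha m}$ may be non-integral; I will take the floor, which changes the rate only by $o(1)$.
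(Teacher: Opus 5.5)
Your proposal is correct and follows the paper's proof essentially step for step. The finite converse uses the same single-letter chain $mR_\rho(\bar D)\le\sum_i R_\rho(D_i)\le\sum_i I(F_i;Y_i)\le I(F^m;Y^m)\le H(Y^m)\le\alpha m\ln q$ together with Lemma~\ref{lem:setmemrd}, and the limit uses the rate--distortion achievability theorem, split into the regimes $q^\alpha\rho<1$ and $q^\alpha\rho\ge1$. You need not restrict to a deterministic encoder $Y^m=g(F^m)$: the chain holds verbatim for randomized encoders, and the paper applies it that way to DQI's output in Proposition~\ref{prop:rdgeneral}.
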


\begin{proof}
Let $F^m=(F_1,\ldots,F_m)$ be the source and let
$Y^m$ be the encoder's output.
Write $D_i=\Pr[Y_i\notin F_i]$ and
$D=m^{-1}\sum_iD_i$.
Independence of the source coordinates and convexity of
$R_\rho$ give
\[
  mR_\rho(D)
  \le
  \sum_{i=1}^m R_\rho(D_i)
  \le
  \sum_{i=1}^m I(F_i;Y_i)
  \le
  I(F^m;Y^m)
  \le
  H(Y^m)
  \le
  \alpha m\ln q.
\]
Therefore $R_\rho(D)\le\alpha\ln q$.
By Lemma~\ref{lem:setmemrd}, the least such $D$ is
$1-\Qstar$, proving the finite converse.

The achievability part of the rate--distortion theorem
provides codebooks approaching this distortion.
If $\regfac<1$, the boundary is determined by
$\KL{1-D}{\rho}=\alpha\ln q$.
If $\regfac\ge1$, then
$R_\rho(0)=\ln(1/\rho)\le\alpha\ln q$, and the limiting
distortion is zero.
\end{proof}

The achievability statement concerns the choice of codebook.
It does not establish that a prescribed optimizer code reaches
the line, nor does it provide an efficient encoder for that code.
Rule \gate{3} rests on this distinction.

\begin{proposition}[rule \gate{3}]\label{prop:rdgeneral}
Let the accepted sets be independent, uniformly random of
density $\rho$, and independent of $B$.
For an optimizer code of rate $\alpha$, every algorithm
returning a word in that code has expected quality at most
the finite-parameter value $\Qstar$.

Consequently, whenever the stated asymptotic quality law
applies,
\[
  \Qdqi(\lambda,\rho)\le\Qstar.
\]
If the optimizer-code family has a polynomial-time encoder
whose expected quality tends to $\Qstar$, then
$\mu^+\le0$.
\end{proposition}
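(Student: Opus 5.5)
The plan is to derive all three claims from Corollary~\ref{cor:rdisexistential}, read as a converse bound on expected quality.

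\textbf{Step 1: first claim.} Take any algorithm, possibly randomized, that sees $B$ and the accepted sets and returns a word $Y^m$ of $\optcode$. Condition on $B$. Since the accepted sets are independent of $B$, they form an i.i.d.\ source with the law of Lemma~\ref{lem:setmemrd}. The map from accepted sets to output is an encoder whose values lie in a fixed codebook of at most $q^{\alpha m}$ words. The finite converse of Corollary~\ref{cor:rdisexistential} then gives expected distortion at least $1-\Qstar$. Expected quality equals one minus expected distortion, so it is at most $\Qstar$. Averaging over $B$ preserves this bound.

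The one point needing care is internal randomness of the algorithm. The chain $I(F^m;Y^m)\le H(Y^m)\le\alpha m\ln q$ used in the corollary already covers any channel, so this causes no difficulty. If $\regfac\ge1$, then $\Qstar=1$ and the claim is trivial.

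\textbf{Step 2: DQI bound.} DQI's output is an assignment $x$, and its quality depends only on $Bx\in\optcode$. So DQI is itself such an algorithm, and its expected quality at length $m$ is at most $\Qstar$ at the finite parameters. Along a family where the asymptotic law applies, this expected quality converges to $\Qdqi(\lambda,\rho)$. The finite-parameter $\Qstar$ converges to the limiting existential line, by continuity of the solution of \eqref{eq:qstar} in $(\alpha,\ln q,\rho)$; the saturated case $\Qstar=1$ is handled by inspection. Passing to the limit gives $\Qdqi(\lambda,\rho)\le\Qstar$.

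If $q\to\infty$, I would argue directly with $\Qstar$ at each length. Either $\Qstar\to1$, in which case the bound is trivial, or it converges along subsequences.

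\textbf{Step 3: consequence for $\mu^+$.} If a polynomial-time encoder attains $\Qstar$ asymptotically, then $\Qstar$ enters the maximum in \eqref{eq:muplus}. Hence $\mu^+\le\Qdqi-\Qstar\le0$ by Step~2.

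\textbf{Main obstacle.} I expect the main obstacle to be the limiting step in Step~2. It requires the finite-length DQI quality to converge to the semicircle law, which is the stated assumption, and uniform control of $\Qstar$ when $q$ varies. I would handle the latter via the monotonicity and strict convexity of $\KL{\cdot}{\rho}$ already used in Lemma~\ref{lem:firstmoment}.
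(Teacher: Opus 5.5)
Your proposal is correct and follows the paper's own route. Like the paper, you condition on $B$, apply the finite converse of Corollary~\ref{cor:rdisexistential} to any possibly randomized output in $\optcode$ (DQI's measurement outcome included), pass to the limit, and then insert the efficient encoder into the maximum defining $\mu^+$. Your remarks on internal randomness, via $I(F^m;Y^m)\le H(Y^m)$, and on continuity of $\Qstar$ in the limit spell out steps that the paper states in one line.
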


\begin{proof}
Condition on $B$.
The optimizer code is a codebook of size $q^{\alpha m}$
independent of the accepted sets, so the finite converse in
Corollary~\ref{cor:rdisexistential} applies.
It also applies to randomized outputs, including the
measurement outcome of DQI.

Passing to the limit gives the inequality for $\Qdqi$.
An efficient encoder attaining $\Qstar-o(1)$ then supplies
a classical baseline with the same limiting upper bound,
and hence
\[
  \mu^+
  \le
  \Qdqi-\Qstar
  \le0.
\]
\end{proof}

Polar codes give examples where efficient source coding
is possible. For a binary symmetric source with Hamming
distortion, they attain the rate--distortion limit using
an $O(m\log m)$ encoder~\cite{korada09polar}.
At a fixed prime alphabet, polar constructions attain the
symmetric rate--distortion bound, defined by restricting
to test channels with uniform reconstruction
marginal~\cite{karzand10qary}.
The optimal channel in Lemma~\ref{lem:setmemrd} has this
uniform marginal, so the symmetric and unrestricted bounds
coincide for the source considered here.
Translation invariance also allows a good affine codebook
to be translated to its underlying linear code without
changing expected distortion.

These examples already violate the distance condition for
a positive DQI margin. To see this, let $m=2^t$ and
\[
  G=
  \begin{pmatrix}
    1&0\\
    1&1
  \end{pmatrix}.
\]
If the optimizer code is spanned by rows of $G^{\otimes t}$
indexed by $A$, its dual is spanned by the complementary
rows of
\[
  (G^{-{\mathsf T}})^{\otimes t}
  =
  \begin{pmatrix}
    1&-1\\
    0&1
  \end{pmatrix}^{\otimes t}.
\]
The row indexed by $i$ has weight
$2^{t-\mathrm{wt}(i)}$.
At a rate bounded away from one, the complementary index set
contains a positive fraction of all indices, and therefore
contains an index with
$\mathrm{wt}(i)\ge t/2-O(\sqrt t)$.
The dual distance is consequently at most
\[
  2^{t/2+O(\sqrt t)}
  =
  m^{1/2+o(1)},
\]
so its relative distance vanishes.
Polar source codes thus illustrate efficient optimal
encoding, but they do not supply such an encoder for an
arbitrary candidate optimizer code with positive dual distance.

\subsection{Rule \gate{4}: linear algebra over a subfield}
\label{sec:subfield}

When $q=p^b$, each field element is a vector of dimension
$b$ over $\F_p$. An assignment in $\F_q^n$ therefore
contains $bn$ prime-field coordinates.
Because multiplication by a fixed element of $\F_q$ is
$\F_p$-linear, every $(Bx)_i$ is a linear function of
these coordinates.

Suppose an accepted set contains an affine subspace $a+V$,
where $V$ has codimension $c$ over $\F_p$.
Choose independent linear forms
$\psi_1,\ldots,\psi_c$ whose common kernel is $V$.
Then
\[
  y\in a+V
  \quad\Longleftrightarrow\quad
  \psi_j(y)=\psi_j(a)
  \quad\text{for }j=1,\ldots,c.
\]
Thus satisfying this row costs $c$ prime-field equations,
whereas fixing its value completely costs $b$.
If the resulting equations are independent, this can satisfy
more rows than an information set over $\F_q$.

A set of density $\rho$ can contain such a coset only if
$p^{b-c}\le\rho p^b$, so
\[
  c\ge\log_p(1/\rho).
\]
At density $1/2$ over $\F_{2^b}$, equality $c=1$
means that the accepted set itself is a hyperplane coset.
Over a prime field, every proper nonempty affine subspace
is a point, so $c=b=1$ and the construction reduces to
the ordinary Prange baseline.

\begin{proposition}[information sets over the prime field]
\label{prop:subfield}
Let $q=p^b$ and fix $B$.
Suppose the accepted sets are independent, each has size
$\rho q$, and each has a translation-invariant distribution.
Let $c\ge1$ and put
\[
  \asub=\min\left(1,\frac{b\alpha}{c}\right).
\]
Choose a set $J$ of $\lfloor\asub m\rfloor$ rows, using
only $B$, independent randomness, and translation-invariant
data from the accepted sets.
For every $i\in J$, suppose a coset
$a_i+V_i\subseteq F_i$ of codimension $c$ is available,
with its selection depending only on the selected rows'
sets and the data already used to choose $J$.

If the resulting $c|J|$ equations are linearly independent
over $\F_p$, Gaussian elimination finds an assignment
satisfying every row in $J$.
A completion using only the selected-row data has expected
quality
\[
  \rho+(1-\rho)\frac{|J|}{m}
  =
  \Qpr(|J|/m,\rho)
  =
  \Qpr(\asub,\rho)-O(1/m).
\]
The procedure is polynomial time when the cosets and their
defining equations can be obtained in polynomial time.
\end{proposition}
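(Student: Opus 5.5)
The proof has three parts: the linear-algebra step, the expected-quality computation, and the running-time claim.

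\emph{Linear algebra.} Identify $\F_q$ with $\F_p^b$ through a fixed basis. The assignment $x\in\F_q^n$ then becomes a vector of $bn$ prime-field unknowns, and each $(Bx)_i$ becomes an $\F_p$-linear map of these unknowns, because multiplication by $B_{ij}$ is $\F_p$-linear. For each selected row $i\in J$, choose forms $\psi_{i,1},\dots,\psi_{i,c}$ with common kernel $V_i$. The condition $(Bx)_i\in a_i+V_i$ is then the $c$ equations $\psi_{i,k}((Bx)_i)=\psi_{i,k}(a_i)$. Stacking these gives a system with $c|J|$ equations in $bn$ unknowns. Since $|J|\le\asub m\le b\alpha m/c$, we have $c|J|\le b\,\rank(B)$. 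This counts independent constraints against the dimension of $\optcode$ over $\F_p$, which is $b\alpha m$. The hypothesis that the equations are linearly independent therefore makes the system consistent, so Gaussian elimination returns an $x$ satisfying every row in $J$. To make this precise, work on the image $\optcode$ viewed as an $\F_p$-space of dimension $b\alpha m$, so that the count compares with that dimension rather than with $bn$.

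\emph{Expected quality.} This is the step I expect to need the most care. I would condition on $B$, on the randomness and translation-invariant data used to choose $J$, and on the selected sets $(F_i)_{i\in J}$. The output $x$, and hence $c:=Bx$, is a function of these quantities only. By independence, each unselected set $F_{i'}$ with $i'\notin J$ remains independent of $c$. The selection used only translation-invariant data, so conditioning on that data preserves the law of $F_{i'}$ as a translation-invariant random subset of size $\rho q$. For a translation-invariant law $\mathcal L$ and any fixed value $v$,
\[
\Pr[v\in F]=\tfrac1q\sum_{t}\Pr[v\in F+t]=\tfrac1q\,\mathbb E|F|=\rho .
\]
Hence each unselected row is satisfied with conditional probability exactly $\rho$, while the rows in $J$ are satisfied with probability one. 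Summing gives expected quality $\bigl(|J|+\rho(m-|J|)\bigr)/m=\rho+(1-\rho)|J|/m$. Since $|J|=\lfloor\asub m\rfloor$, the floor costs at most $1/m$, which gives the $O(1/m)$ difference from $\Qpr(\asub,\rho)$.

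\emph{Running time.} Elimination on a system of $c|J|\le bn$ equations costs polynomially many $\F_p$-operations, and forming the system is also polynomial once the cosets and their defining forms are available.
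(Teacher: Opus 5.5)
Your proof is correct and follows essentially the same route as the paper. You use $c$ defining forms per selected row, full row rank to get consistency with $c|J|\le b\alpha m$, and then condition on $B$, the selection data and the selected sets, so that each unselected $F_i$ keeps a translation-invariant law and is hit with probability $\rho$. Two small fixes: also condition on any auxiliary randomness used to choose a solution from the affine solution space, which your claim that ``$x$ is a function of these quantities only'' leaves out, and rename $Bx$ so it does not clash with the codimension $c$.
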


\begin{proof}
Each selected row contributes
\[
  \psi_{i,j}\bigl((Bx)_i\bigr)=\psi_{i,j}(a_i),
  \qquad j=1,\ldots,c.
\]
These are linear equations over $\F_p$.
The image of $x\mapsto Bx$ has prime-field dimension
$b\alpha m$, and the chosen number of equations satisfies
$c|J|\le b\alpha m$.
Full row rank makes the linear map onto its equation space
surjective, so every right-hand side is consistent.
Elimination therefore returns a solution satisfying all
rows in $J$.

For the other rows, condition first on $B$, the independent
randomness used to select $J$, and all the rowwise
translation-invariant data used in that selection.
Under this conditioning, the accepted sets remain independent,
and their conditional laws remain translation invariant.
Now condition on the selected rows' accepted sets and on the
randomness used to complete the linear system.
These determine $x$, while an unselected $F_i$ retains its
translation-invariant law.

For any fixed $v\in\F_q$, translation invariance and
$|F_i|=\rho q$ imply
\[
  \Pr[v\in F_i]=\rho.
\]
Thus each unselected row is satisfied with conditional
probability $\rho$. Summing the contributions gives
\[
  \mathbb E[Q(x)]
  =
  \frac{|J|+(m-|J|)\rho}{m}.
\]
A uniform draw from the affine solution space is one
valid completion rule, but uniformity is not needed
for this expectation.
\end{proof}

The same argument permits row-dependent codimensions,
provided the selected equations have full row rank.
When the main statement uses the largest codimension $c$,
a coset of smaller codimension can be replaced by a
contained coset of codimension $c$.
The rank condition must still be checked for the chosen
equations.

The selection restriction allows inspection of
translation-invariant features, such as the dimensions of
contained affine subspaces. It does not allow the completion
to use the locations of accepted values in unselected rows.
That restriction is what preserves the baseline expectation.

\begin{proof}[Proof of rule \gate{4}]
Under the hypotheses of Proposition~\ref{prop:remaining},
Proposition~\ref{prop:subfield} supplies a classical score
\[
  \rho+(1-\rho)\frac{\lfloor\asub m\rfloor}{m}
  =
  \Qpr(\asub,\rho)-O(1/m).
\]
Therefore
\[
  \mu^+
  \le
  \Qdqi(\lambda,\rho)-\Qpr(\asub,\rho)+O(1/m).
\]

At $\rho=1/2$, the Singleton restriction gives
\[
  \Qdqi(\lambda,1/2)
  <
  \frac12+\frac12\sqrt{2\alpha-\alpha^2},
\]
while the subfield baseline is
\[
  \frac12+\frac12\min(1,b\alpha/c).
\]
If $b\alpha/c\ge1$, the baseline is one.
Otherwise, it is at least the Singleton expression whenever
\[
  \left(\frac{b\alpha}{c}\right)^2
  \ge
  2\alpha-\alpha^2,
\]
or equivalently
\[
  \alpha\ge\frac{2c^2}{b^2+c^2}.
\]
Since $c\le b$ implies
\[
  \frac{2c^2}{b^2+c^2}\le\frac cb,
\]
this threshold also covers the case where the baseline
has saturated at one. The rounding contributes only
$O(1/m)$.
\end{proof}

For hyperplane accepted sets over $\F_{16}$,
$b=4$ and $c=1$, so the threshold is $\alpha\ge2/17$.
At $\alpha=0.12$, the subfield baseline is $0.74$,
whereas the Singleton-reach expression is approximately
$0.7375$.

\paragraph{Finding contained affine subspaces.}
The accepted-set density alone does not determine the
subfield baseline, but it can sometimes guarantee a useful
contained subspace. For example, every eight-element subset
of $\F_2^5$ contains an affine plane.

Here is a short proof. If an eight-element set contained
no affine plane, translate one of its elements to zero
and place the other seven elements as columns of a
$5\times7$ binary matrix. Its kernel has dimension at least
two. The absence of an affine plane implies that every
nonzero kernel vector has weight at least five:
a relation of weight three gives a plane together with zero,
and a relation of weight four gives a plane directly.
But two distinct binary vectors of length seven and weight
at least five have a nonzero difference of weight at most
four, a contradiction.
The bound is sharp: the seven points
$0,e_1,\ldots,e_5,e_1+\cdots+e_5$ contain no affine plane.
Thus a half-density set in $\F_{32}$ always contains a
two-dimensional coset and has subfield codimension at most
three.

For random accepted sets, the expected number of contained
affine $k$-spaces has a simple exact expression.
If $r=\rho p^b$, it is
\[
  \binom bk_p\,p^{b-k}
  \frac{\binom{p^b-p^k}{r-p^k}}{\binom{p^b}{r}},
\]
where $\binom bk_p$ is the Gaussian binomial coefficient,
and the expression is zero when $r<p^k$.
The first factor counts linear $k$-spaces, the second their
cosets, and the ratio is the probability that all points
of one coset belong to the accepted set.

At half density, these expectations are approximately
$62.8$ and $0.76$ for dimensions two and three in $\F_{32}$.
In $\F_{256}$ they are approximately $29.8$ and
$2\times10^{-5}$ for dimensions four and five.
Markov's inequality therefore bounds the probability of
a contained five-dimensional coset by $2\times10^{-5}$.
An expectation above one, however, does not establish
that a four-dimensional coset is present with high
probability.

\paragraph{Checking the rank condition.}
Contained cosets are useful only if their equations supply
enough independent constraints. Algebraic structure can
prevent this. When the same linear form is applied to
many polynomial evaluations, the resulting equations can
inherit dependencies from the Frobenius map.

For example, in a Reed--Solomon calculation over
$\F_{256}$, $B$ is the transpose of a parity-check matrix and has $240$ rows and
$28$ columns. Applying the same last-coordinate form to
every row gives rank $109$, rather than the available
prime-field dimension $8\cdot28=224$.
The value $109$ counts the exponents in the binary
cyclotomic cosets modulo $255$ that meet
$\{0,\ldots,27\}$, and the three sampled restrictions to
$240$ evaluation points retained this rank.
With random nonzero row multipliers, the tested systems
had rank $224$.

Row multipliers or varying linear forms can remove particular
dependencies, but they do not replace a rank check for the
selected system.

\subsection{The objective alphabet and the general-objective law}
\label{sec:objlawproofs}

We now prove Lemma~\ref{lem:objlaw} and
Proposition~\ref{prop:alphabet}.
The calculation follows the Fourier-shell construction of
Jordan et al.~\cite[\S8.2]{jordan24dqi}.
Its expected objective can be computed
without assuming that the row functions take only two values.

Throughout this subsection, $q$ is prime and the Fourier
transform is unitary:
\[
  \hat h(a)
  =
  \frac1{\sqrt q}
  \sum_{v\in\F_q}h(v)e^{2\pi iav/q}.
\]

\subsubsection{The finite spectral identity}

\begin{lemma}[DQI shells for a real objective]
\label{lem:cosinelaw}
Let $F:\F_q\to\mathbb R$ have uniform mean $\bar F$,
standard deviation $\sigma>0$, and skewness $\gamma$.
Write
\[
  S=\{a\ne0:\hat F(a)\ne0\},
  \qquad
  f_i(v)=F(v-c_i),
  \qquad
  f(x)=\sum_{i=1}^m f_i(b_i\cdot x),
\]
where the $b_i$ are the rows of $B$.

Let $0\le\ell\le m$, and define
\[
  \mathcal Y_\ell(S)
  =
  \{y\in(S\cup\{0\})^m:|y|\le\ell\}.
\]
Assume the separation condition
\[
  B^{\mathsf T}y'
  \ne
  B^{\mathsf T}(y+ae_i)
\]
whenever $y,y'\in\mathcal Y_\ell(S)$, $a\in S$,
and $y'\ne y+ae_i$.
For $S=\F_q^*$, this condition is equivalent to
$2\ell+1<d$.

The DQI shell state with real weights
$w=(w_0,\ldots,w_\ell)$ and $\|w\|_2=1$ has expected objective
\[
  \langle f\rangle
  =
  m\bar F+\sigma w^{\mathsf T}A^{(m,\ell,\gamma)}w,
\]
where
\[
  A_{kk}=k\gamma,
  \qquad
  A_{k-1,k}=A_{k,k-1}=\sqrt{k(m-k+1)}
  \quad (1\le k\le\ell).
\]
Its optimal expected objective is therefore
$m\bar F+\sigma\lambda_{\max}(A^{(m,\ell,\gamma)})$.

If $\ell/m\to\lambda\in(0,1/2]$, then
\[
  \max_w\frac{\langle f\rangle}{m}
  \longrightarrow
  \bar F+
  \sigma\max_{0\le t\le\lambda}
  \left(\gamma t+2\sqrt{t(1-t)}\right).
\]
The maximum occurs at $t=\lambda$ when
\[
  \gamma\ge
  -\frac{1-2\lambda}{\sqrt{\lambda(1-\lambda)}}.
\]
If the field or objective varies with $m$, the same conclusion
holds with the limiting moments, provided
$\bar F_m$, $\sigma_m$, and $\gamma_m$ converge to finite
limits.
\end{lemma}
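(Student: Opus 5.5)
The plan is to reduce the expected objective to a matrix element of a single operator between the orthonormal shell states. We then compute its matrix entries by matching error patterns, and finally treat the spectral asymptotics separately.

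\emph{Normalisation.} Write $g=(F-\bar F)/\sigma$, which is real, has mean zero and variance one. Let $\tilde g(a)=\hat g(a)$ for $a\in S$, so that $\sum_{a\in S}|\tilde g(a)|^2=1$ by Parseval. Expanding $g$ in characters gives
\[
  f(x)=m\bar F+\frac{\sigma}{\sqrt q}\sum_{i=1}^m\sum_{a\in S}\tilde g(a)\,e^{-2\pi i a c_i/q}\,\omega^{a\,b_i\cdot x},
  \qquad \omega=e^{2\pi i/q}.
\]
Following~\cite[\S8.2]{jordan24dqi}, the $k$-th shell state is the quantum Fourier transform of
\[
  \binom mk^{-1/2}\sum_{y\in S^{(k)}}\prod_{i:y_i\neq0}\tilde g(y_i)e^{-2\pi i y_ic_i/q}\,\bigl|B^{\mathsf T}y\bigr\rangle ,
\]
where $S^{(k)}$ is the set of patterns in $(S\cup\{0\})^m$ of weight exactly $k$. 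Distinct patterns in $\mathcal Y_\ell(S)$ have distinct syndromes, which is implied by the separation hypothesis. Hence these shell states are orthonormal, and the DQI state is $\sum_k w_k|P^{(k)}\rangle$.

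\emph{Action of a character.} In the syndrome basis, multiplication by $\omega^{a b_i\cdot x}$ shifts $B^{\mathsf T}y$ to $B^{\mathsf T}(y+ae_i)$. The separation hypothesis says this syndrome equals $B^{\mathsf T}y'$ for $y'\in\mathcal Y_\ell(S)$ only when $y'=y+ae_i$. Therefore $\langle P^{(k')}|f|P^{(k)}\rangle$ is a sum over exact coincidences, which fall into three cases.

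\begin{enumerate}
\item[(1)] If $y_i=0$, the pattern moves up one shell. Summing over $a$ gives $\sum_a|\tilde g(a)|^2=1$, and counting the choices of $i$ together with the binomial normalisations yields $\sqrt{(k+1)(m-k)}$ for the entry between shells $k$ and $k+1$, matching $A_{k-1,k}=\sqrt{k(m-k+1)}$ after reindexing.
\item[(2)] If $y_i=-a$, the pattern moves down one shell and gives the conjugate entry. This is consistent because $g$ is real, so $\tilde g(-a)=\overline{\tilde g(a)}$.
\item[(3)] If $y_i=b\neq0$ and $b+a\neq0$, the pattern stays in shell $k$. Each of the $k$ nonzero coordinates then contributes
\[
  q^{-1/2}\sum_{a,b}\overline{\tilde g(a+b)}\,\tilde g(a)\tilde g(b).
\]
By the convolution theorem this equals $\frac1q\sum_v g(v)^3=\gamma$. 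The phases $e^{-2\pi i(\cdot)c_i/q}$ cancel because $a+b$ appears on both sides.
\end{enumerate}

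The diagonal entry is therefore $k\gamma$ and the result is $w^{\mathsf T}Aw$. For $S=\F_q^*$, a violation of separation produces a nonzero codeword $y'-y-ae_i$ of weight at most $2\ell+1$, and conversely such a codeword produces a violation. This gives the equivalence with $2\ell+1<d$. I expect case (3) to be the main obstacle: one must check that the index conventions and the unitary $1/\sqrt q$ factors really give $\gamma$ exactly and not a conjugate or rescaled version.

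\emph{Asymptotics.} Consider $A/m$, whose entries near $k=tm$ are $\gamma t$ on the diagonal and $\sqrt{t(1-t)}+O(1/m)$ off the diagonal.
\begin{itemize}
\item For the upper bound, apply a Gershgorin row bound to the symmetric tridiagonal matrix, $\lambda_{\max}\le\max_k\bigl(A_{kk}+A_{k-1,k}+A_{k,k+1}\bigr)$. This gives $\max_{t\le\lambda}\bigl(\gamma t+2\sqrt{t(1-t)}\bigr)+o(1)$.
\item For the lower bound, use a flat unit test vector on a window of $L$ consecutive shells around $t^\star m$, with $L\to\infty$ and $L=o(m)$. Its Rayleigh quotient is $\gamma t^\star+2\sqrt{t^\star(1-t^\star)}(1-1/L)+o(1)$.
\end{itemize}
Convergence of the moments transfers the limit to the case where $F$ varies with $m$. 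Finally, $\gamma t+2\sqrt{t(1-t)}$ is concave with derivative $\gamma+(1-2t)/\sqrt{t(1-t)}$, which is decreasing in $t$. The maximum therefore sits at $t=\lambda$ exactly when this derivative is nonnegative at $\lambda$, which is the stated condition on $\gamma$.
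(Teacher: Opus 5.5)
Your argument follows the paper's proof step for step, in slightly different notation. Injectivity follows from separation, which makes the shells orthonormal. The three coincidence cases give $\sqrt{(k+1)(m-k)}$ off the diagonal and the skewness on it, via the convolution theorem. A Gershgorin row bound and a flat window then control $\lambda_{\max}(A)/m$, and a derivative test locates the maximum at $t=\lambda$.

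The one step that does not check out as written is the $\sqrt q$ bookkeeping you flagged.

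\begin{itemize}
\item You normalize so that $\sum_{a\in S}|\tilde g(a)|^2=1$. With $\omega=e^{2\pi i/q}$, this forces $\tilde g(a)=q^{-1}\sum_v g(v)\omega^{-av}$ and hence $g(v)=\sum_a\tilde g(a)\omega^{av}$.
\item The expansion of $f$ therefore has prefactor $\sigma$, not $\sigma/\sqrt q$. With $\sigma/\sqrt q$, your $g$ would have variance $1/q$ rather than one.
\item Case (1) silently drops this extra $q^{-1/2}$ when it reports $\sqrt{(k+1)(m-k)}$.
\item Case (3) keeps the $q^{-1/2}$. But under your normalization $\sum_{a,b}\overline{\tilde g(a+b)}\tilde g(a)\tilde g(b)=q^{-1}\sum_v g(v)^3=\gamma$ already. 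Your displayed diagonal contribution is therefore $\gamma/\sqrt q$, not $\gamma$.
\end{itemize}

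Deleting the $q^{-1/2}$ in both places makes every entry come out as stated. The paper avoids the issue by setting $g_i=(f_i-\bar F)/(\sqrt q\,\sigma)$ and using the unitary transform, so that $\sum_a|\hat g_i(a)|^2=1$ and $\sqrt q\sum_v g_i(v)^3=\gamma$.

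Two smaller points you only assert also deserve a line.
\begin{itemize}
\item For varying moments, note that changing $\gamma$ perturbs $A$ only by $(\gamma_m-\gamma)\operatorname{diag}(0,\dots,\ell)$, whose norm is at most $|\gamma_m-\gamma|m$.
\item The off-diagonal discrepancy with $\sqrt{t(1-t)}$ is $O(m^{-1/2})$ uniformly. It is not $O(1/m)$ near $k=0$, since at $k=0$ the entry $A_{0,1}/m$ equals $m^{-1/2}$. This still suffices for the limit.
\end{itemize}
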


\begin{proof}
We first normalize the shell states, then compute the
matrix of the objective on their span.

\emph{Normalization and Fourier support.}
Put
\[
  g_i(v)=\frac{f_i(v)-\bar F}{\sqrt q\,\sigma}.
\]
These functions satisfy
\[
  \hat g_i(0)=0,
  \qquad
  \sum_v g_i(v)^2
  =
  \sum_a|\hat g_i(a)|^2
  =
  1.
\]
Let $\mathcal E_k$ denote the elementary symmetric
polynomial of degree $k$, and define
\[
  |\Phi_k\rangle
  =
  \frac1{\sqrt{q^{n-k}\binom mk}}
  \sum_{x\in\F_q^n}
  \mathcal E_k
  \bigl(g_1(b_1\cdot x),\ldots,g_m(b_m\cdot x)\bigr)
  |x\rangle.
\]
The state under consideration is
$|\Psi_w\rangle=\sum_{k=0}^{\ell}w_k|\Phi_k\rangle$.
For a general real objective, this definition uses the
elementary-symmetric shells directly and does not require
their sum to be a univariate polynomial in $f(x)$.

Fourier transformation gives
\[
  |k\rangle
  :=
  \widehat{|\Phi_k\rangle}
  =
  \binom mk^{-1/2}
  \sum_{|y|=k}\hat g(y)\,|B^{\mathsf T}y\rangle,
  \qquad
  \hat g(y)=\prod_{i:y_i\ne0}\hat g_i(y_i).
\]
This calculation uses only Fourier inversion and
$\hat g_i(0)=0$, as in the original DQI construction.
The coefficient $\hat g(y)$ vanishes unless every nonzero
entry of $y$ belongs to $S$.

The separation condition implies injectivity of the syndrome
map on $\mathcal Y_\ell(S)$.
Indeed, if distinct $y,y'$ had equal syndromes, interchange
them if necessary so that $y'\ne0$, choose
$i$ with $y'_i\ne0$, and put
$z=y'-y'_ie_i$.
The condition applied to $z$, $y$, and $a=y'_i$ would
exclude the collision.

For a fixed support of size $k$, summing
$|\hat g(y)|^2$ over its nonzero entries gives one.
There are $\binom mk$ such supports, so
\[
  \sum_{|y|=k}|\hat g(y)|^2=\binom mk.
\]
Injectivity now shows that the $|k\rangle$ are orthonormal.
Thus $\|w\|_2=1$ normalizes $|\Psi_w\rangle$.

\emph{Compression of the objective.}
We have
\[
  f(x)=m\bar F+\sqrt q\,\sigma\sum_i g_i(b_i\cdot x).
\]
In the Fourier basis, multiplication by
$\sqrt q\,g_i(b_i\cdot x)$ becomes
\[
  \sum_{a\in S}\hat g_i(a)T_{i,a},
  \qquad
  T_{i,a}|z\rangle=|z+a b_i\rangle.
\]
In particular,
\[
  T_{i,a}|B^{\mathsf T}y\rangle
  =
  |B^{\mathsf T}(y+ae_i)\rangle.
\]
The separation assumption gives
\[
  \langle B^{\mathsf T}y'|
  B^{\mathsf T}(y+ae_i)\rangle
  =
  \mathbf1[y'=y+ae_i]
\]
for supported $y,y'$.
A shift changes at most one coordinate, so the compressed
operator couples only shells of weights $k-1$, $k$, and
$k+1$.
The shell span need not be invariant under the full operator.
Its compression is sufficient to compute the expectation.

If $y_i=0$, the shift raises the weight.
Using
$\hat g(y+ae_i)=\hat g_i(a)\hat g(y)$, its matrix entry is
\[
  \begin{aligned}
  A_{k+1,k}
  &=
  \frac{
    \displaystyle
    \sum_{|y|=k}|\hat g(y)|^2
    \sum_{i:y_i=0}\sum_a|\hat g_i(a)|^2
  }{
    \sqrt{\binom mk\binom m{k+1}}
  }\\
  &=
  \frac{\binom mk(m-k)}
       {\sqrt{\binom mk\binom m{k+1}}}
  =
  \sqrt{(k+1)(m-k)}.
  \end{aligned}
\]
The lowering entry is its adjoint.

For the diagonal entry, the shifted coordinate must be
nonzero both before and after the shift.
Its contribution is
\[
  \sum_{u,u'\in\F_q}
  \overline{\hat g_i(u)}
  \hat g_i(u-u')\hat g_i(u').
\]
Terms with $u=0$, $u'=0$, or $u=u'$ vanish.
The convolution theorem and Parseval's identity give
\[
  \sum_{u,u'}
  \overline{\hat g_i(u)}
  \hat g_i(u-u')\hat g_i(u')
  =
  \sqrt q\sum_v g_i(v)^3
  =
  \gamma.
\]
For each support of size $k$, there are $k$ possible
coordinates contributing this value.
The other coordinates contribute unit norm, so
$A_{kk}=k\gamma$.

We have therefore obtained
\[
  \langle f\rangle
  =
  m\bar F+\sigma w^{\mathsf T}Aw.
\]
Since $A$ is real symmetric, optimizing over real unit
vectors $w$ gives its largest eigenvalue.

\emph{The limiting eigenvalue.}
Define
\[
  G_\gamma(t)=\gamma t+2\sqrt{t(1-t)}.
\]
The row-sum bound for a symmetric matrix with nonnegative
off-diagonal entries gives
\[
  \frac{\lambda_{\max}(A)}m
  \le
  \max_{0\le t\le\ell/m}G_\gamma(t)+o(1).
\]
To see the uniform error, divide each off-diagonal entry
by $m$ and compare it with $\sqrt{t(1-t)}$ at $t=k/m$.
The discrepancy is $O(m^{-1/2})$, uniformly in $k$.
At the endpoints, omitting a missing off-diagonal entry
only improves the upper bound.

For the lower bound, fix $t\in(0,\lambda)$ and choose
integers $\omega_m\to\infty$ with $\omega_m/m\to0$.
Use the unit vector that is constant on $\omega_m$
consecutive indices beginning at $\lfloor tm\rfloor$
and zero elsewhere.
Its Rayleigh quotient, divided by $m$, tends to
\[
  \gamma t+2\sqrt{t(1-t)}.
\]
Indeed, the diagonal entries in this window are
$m\gamma t+o(m)$, the off-diagonal entries are
$m\sqrt{t(1-t)}+o(m)$, and the missing endpoint
couplings contribute a vanishing fraction.
Together with $\lambda_{\max}(A)\ge A_{00}=0$ and
continuity in $t$, this proves
\[
  \frac{\lambda_{\max}(A)}m
  \longrightarrow
  \max_{0\le t\le\lambda}G_\gamma(t).
\]

The derivative
\[
  G_\gamma'(t)
  =
  \gamma+\frac{1-2t}{\sqrt{t(1-t)}}
\]
is decreasing.
Thus the maximum over $[0,\lambda]$ is at $\lambda$
when $G_\gamma'(\lambda)\ge0$, giving the stated condition.

Finally, suppose the moments vary with $m$.
The skewness changes only the diagonal:
\[
  A^{(m,\ell,\gamma_m)}
  -
  A^{(m,\ell,\gamma)}
  =
  (\gamma_m-\gamma)\operatorname{diag}(0,1,\ldots,\ell).
\]
Its operator norm is at most
$|\gamma_m-\gamma|m$.
Hence the scaled largest eigenvalue has the same limit.
Combining this with convergence of $\bar F_m$ and
$\sigma_m$ proves the assertion with varying moments,
including the case where the limiting standard deviation
is zero.
\end{proof}

\begin{proof}[Proof of Lemma~\ref{lem:objlaw}]
Lemma~\ref{lem:cosinelaw} gives the finite identity and
its asymptotic limit.

For the classical baseline, an information set contains
$\alpha m$ coordinates that can be assigned independently.
Set each one to a maximizer of its row objective.
Those rows contribute $\alpha m\max F$.
For every other row, an independent uniform center makes
the expected objective equal to $\bar F$, as proved in
Lemma~\ref{lem:centered}.
The expected baseline per row is therefore
\[
  \alpha\max F+(1-\alpha)\bar F.
\]
\end{proof}

\subsubsection{Indicator and cosine objectives}

\begin{corollary}\label{cor:cosine}
Suppose $q$ is prime and $F$ is the indicator of a set
of size $r$, where $0<r<q$.
Then
\[
  \bar F=\frac rq,
  \qquad
  \sigma=\frac{\sqrt{r(q-r)}}q,
  \qquad
  \gamma=\frac{q-2r}{\sqrt{r(q-r)}},
  \qquad
  S=\F_q^*.
\]
Writing $\rho=r/q$,
\[
  \bar F+\sigma G_\gamma(t)
  =
  \left(
    \sqrt{t(1-\rho)}
    +
    \sqrt{\rho(1-t)}
  \right)^2.
\]
Its maximum over the licensed interval $0\le t\le\lambda$
is the quality law \eqref{eq:qdqi}.

If instead $F(v)=\cos(2\pi v/q)$ with $q>3$, then
\[
  \bar F=0,
  \qquad
  \sigma=\frac1{\sqrt2},
  \qquad
  \gamma=0,
  \qquad
  S=\{\pm1\}.
\]
Under the separation hypothesis and with optimal weights,
the expected mean cosine tends to
\[
  \sqrt{2\lambda(1-\lambda)}.
\]
\end{corollary}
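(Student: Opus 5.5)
The plan is to compute the moments and Fourier support directly, then substitute into Lemma~\ref{lem:cosinelaw}.

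\emph{Indicator.} Let $F=\mathbf 1_{F_0}$ with $|F_0|=r$ and $\rho=r/q$. The mean is $\rho$ and the variance is $\rho(1-\rho)$, so $\sigma=\sqrt{r(q-r)}/q$. For the third central moment, note that $F-\rho$ takes the value $1-\rho$ with probability $\rho$ and $-\rho$ with probability $1-\rho$. Hence
\[
  \mathbb E(F-\rho)^3=\rho(1-\rho)(1-2\rho),
\]
and dividing by $\sigma^3$ gives $\gamma=(1-2\rho)/\sqrt{\rho(1-\rho)}$, which is the stated expression.

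For the support, I will use that $q$ is prime. Then $\hat F(a)=q^{-1/2}\sum_{v\in F_0}\omega^{av}$ with $\omega=e^{2\pi i/q}$. If this vanished for some $a\neq0$, the minimal polynomial $1+x+\dots+x^{q-1}$ of $\omega^a$ would divide a $0/1$ polynomial of degree below $q$ with $r$ terms. That forces $r\in\{0,q\}$, which is excluded. So $S=\F_q^*$.

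For the identity, I will multiply out $\bar F+\sigma G_\gamma(t)$:
\[
  \rho+t(1-2\rho)+2\sqrt{\rho(1-\rho)t(1-t)}
  =\bigl(\sqrt{t(1-\rho)}+\sqrt{\rho(1-t)}\bigr)^2,
\]
where the equality follows by expanding the square. To maximize over $t\in[0,\lambda]$, I will substitute $t=\sin^2\theta$ and $\rho=\sin^2\phi$, which turns the expression into $\sin^2(\theta+\phi)$. This is increasing up to $t=1-\rho$ and then decreasing, so the maximum is the value at $t=\lambda$ when $\lambda\le1-\rho$ and equals $1$ otherwise. That is exactly \eqref{eq:qdqi}.

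\emph{Cosine.} Write $\cos(2\pi v/q)=\tfrac12(\omega^v+\omega^{-v})$. Summing over $v\in\F_q$ gives mean $0$ since $q>1$. Its square is $\tfrac12+\tfrac14(\omega^{2v}+\omega^{-2v})$, which averages to $\tfrac12$ because $2\not\equiv0$; hence $\sigma=1/\sqrt2$. Its cube is $\tfrac18(\omega^{3v}+3\omega^{v}+3\omega^{-v}+\omega^{-3v})$, which averages to $0$ because $\pm3\not\equiv0$ when $q>3$; hence $\gamma=0$. This is the only place $q>3$ is used. The Fourier transform is supported on $\{\pm1\}$ directly from the exponential expansion; the sign convention does not matter because the support is symmetric.

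With $\gamma=0$, the limiting law of Lemma~\ref{lem:cosinelaw} gives
\[
  \tfrac1{\sqrt2}\max_{t\le\lambda}2\sqrt{t(1-t)}.
\]
Since $\lambda\le\tfrac12$ and $\sqrt{t(1-t)}$ is increasing on $[0,\tfrac12]$, the maximum is at $t=\lambda$, giving $\sqrt{2\lambda(1-\lambda)}$. The separation hypothesis is assumed, so it needs no argument here.

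The only nonroutine step is the nonvanishing of $\hat F$ in the indicator case. It could also be avoided by noting that only $S\subseteq\F_q^*$ matters when the separation condition $2\ell+1<d$ holds. Indeed, a smaller support only shrinks $\mathcal Y_\ell(S)$, and the compression computation in Lemma~\ref{lem:cosinelaw} goes through unchanged.
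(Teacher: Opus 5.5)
Your proof is correct and follows the paper's own route. You use the Bernoulli moments for the indicator, the cyclotomic-polynomial argument for $S=\F_q^*$, and direct substitution into Lemma~\ref{lem:cosinelaw}, with the maximum at $t=\min(\lambda,1-\rho)$. For the cosine you show the power sums vanish; your exponential expansion is equivalent to the paper's use of $\cos^3\theta=(3\cos\theta+\cos3\theta)/4$.

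One caveat: your closing remark could not actually replace the support computation, because $S=\F_q^*$ is itself part of the statement. This is not a gap, since you prove it anyway.
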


\begin{proof}
The indicator moments follow directly from the Bernoulli
distribution of $F(v)$ under uniform $v$.
For any nonzero frequency, its Fourier coefficient is a
sum of $r$ distinct $q$-th roots of unity.
When $q$ is prime, such a sum cannot vanish for
$0<r<q$: otherwise its polynomial would be a multiple
of the cyclotomic polynomial
$1+X+\cdots+X^{q-1}$.
Thus the nonzero Fourier support is all of $\F_q^*$.
Substituting the moments gives the displayed quality
function. Its maximum occurs at $t=1-\rho$ if that point
belongs to the interval, and at $t=\lambda$ otherwise.

For the cosine, its two Fourier frequencies are $\pm1$.
The identities
\[
  \sum_{v\in\F_q}\cos^2(2\pi v/q)=\frac q2,
  \qquad
  \sum_{v\in\F_q}\cos^3(2\pi v/q)=0
\]
hold for $q>3$.
The second follows from
$\cos^3\theta=(3\cos\theta+\cos3\theta)/4$.
Lemma~\ref{lem:cosinelaw} then gives the stated limit.
\end{proof}

\subsubsection{The cosine screening conditions}

\begin{proof}[Proof of Proposition~\ref{prop:alphabet}(i)]
A decoder for alphabet $S$ must distinguish at least the
$\binom m\ell|S|^\ell$ supported patterns of weight
exactly $\ell$.
There are only $q^{\alpha m}$ syndromes, so
\[
  \binom m\ell|S|^\ell\le q^{\alpha m}.
\]
With $\lambda=\ell/m$, taking logarithms gives
\[
  H_2(\lambda)+\lambda\log_2|S|
  \le
  \alpha\log_2q+O(\log m/m).
\]
For the cosine, $|S|=2$, yielding
\[
  \alpha
  \ge
  \frac{H_2(\lambda)+\lambda}{\log_2q}-o(1).
\]
This is a necessary counting condition. The additional
separation condition is still needed for the spectral law.
\end{proof}

\begin{proof}[Proof of Proposition~\ref{prop:alphabet}(ii)]
Corollary~\ref{cor:cosine} gives the quality law
$\sqrt{2\lambda(1-\lambda)}$.
The Prange baseline is $\alpha$, since $\max F=1$
and $\bar F=0$.
For $0\le\lambda\le1/2$, a strict advantage is equivalent to
\[
  \lambda(1-\lambda)>\frac{\alpha^2}{2}.
\]
There is no solution when $\alpha\ge1/\sqrt2$.
Otherwise the inequality holds exactly when
\[
  \lambda>
  \frac{1-\sqrt{1-2\alpha^2}}2.
\]
The cosine values lie in $[-1,1]$, so the restart
comparison may use the range bound $R=2$.
\end{proof}

\begin{proof}[Proof of Proposition~\ref{prop:alphabet}(iii)]
Let $U_q$ denote the uniform distribution on $\F_q$,
and define
\[
  \Lambda_q(\vartheta)
  =
  \ln\mathbb E_{U_q}[e^{\vartheta F(v)}],
  \qquad
  \Lambda_q^*(Q)
  =
  \sup_{\vartheta\ge0}
  \{\vartheta Q-\Lambda_q(\vartheta)\}.
\]
For a fixed codeword and independent uniform centers,
the row values have the same distribution as independent
copies of $F(v)$ with $v\sim U_q$.
The Chernoff bound therefore gives the finite inequality
\[
  \mathbb E[N_Q]
  \le
  \exp\!\left(
    m\bigl(\alpha\ln q-\Lambda_q^*(Q)\bigr)
  \right).
\]
This inequality holds at each field size.
At fixed $q$, the corresponding exponential rate is also
the one given by Cram\'er's theorem~\cite{dembo98}.

For $\bar F<Q<\max F$, the Gibbs variational principle gives
\[
  \Lambda_q^*(Q)
  =
  \min_{\nu:\,\mathbb E_\nu F=Q}
  \KL{\nu}{U_q}.
\]
The minimizer is
\[
  \nu_\vartheta(v)
  =
  \frac{e^{\vartheta F(v)}}
       {\sum_{u\in\F_q}e^{\vartheta F(u)}}.
\]
Consequently, the first-moment line is
\[
  \Qstar
  =
  \max_{\nu:\,\KL{\nu}{U_q}\le\alpha\ln q}
  \mathbb E_\nu F.
\]

For the cosine, the mean is zero and its maximum one is
attained only at $v=0$. A distribution of mean one must
therefore be concentrated at zero, giving
\[
  \Lambda_q^*(1)=\ln q.
\]
Since $\alpha<1$, the entropy budget $\alpha\ln q$ is
strictly smaller, so $\Qstar<1$.
For $0<\alpha<1$, the boundary is attained by the unique
positive tilt satisfying
\[
  \KL{\nu_\vartheta}{U_q}
  =
  \alpha\ln q.
\]
This proves the stated characterization.
\end{proof}

The same bound validates the encoder comparison for the
cosine objective. At fixed $q$, every score strictly above
$\Qstar$ has exponentially small expected count, so
Markov's inequality bounds the optimum with high probability.
The conclusion also remains uniform when $q$ varies:
since $F\in[-1,1]$,
\[
  \Lambda_q''(\vartheta)
  =
  \operatorname{Var}_{\nu_\vartheta}(F)
  \le1.
\]
Thus the upper-tail rate function has curvature at least
one in its interior, and
\[
  \Lambda_q^*(\Qstar+\varepsilon)
  -
  \Lambda_q^*(\Qstar)
  \ge\frac{\varepsilon^2}{2}
\]
whenever $\Qstar+\varepsilon\le1$.
No codeword exceeds the line by a fixed positive amount
with nonvanishing probability.
An efficient encoder attaining that line would therefore
remove the proposed limiting quality advantage.
The comparison is therefore sound, but this does
not establish the existence of such an encoder.

For numerical evaluation, the finite-field tilt is obtained
from the sums in $\Lambda_q$.
At fixed $\vartheta$, their large-$q$ limits are the Bessel
expressions
\[
  \mathbb E_{\nu_\vartheta}[F]
  \longrightarrow
  \frac{I_1(\vartheta)}{I_0(\vartheta)},
  \qquad
  \KL{\nu_\vartheta}{U_q}
  \longrightarrow
  \vartheta\frac{I_1(\vartheta)}{I_0(\vartheta)}
  -\ln I_0(\vartheta).
\]
At $\alpha=0.07$ and $q=8191$, the entropy budget is
approximately $0.6308$.
The finite sums give $\vartheta\simeq2.186$ and
$\Qstar\simeq0.726$, agreeing with the Bessel approximation
at the displayed precision.
The corresponding values at $q=2^{17}-1$ and
$q=2^{31}-1$ are approximately $0.803$ and $0.945$.
For fixed $\alpha>0$, the line remains below one at
every finite $q$ and approaches one as $q$ grows.

The spectral identity should be distinguished from efficient
preparation of its state. An implementation needs to prepare
the local Fourier states, load the shell weights, and run
the decoder coherently. For the cosine, the local Fourier
state has only two nonzero amplitudes.
For indicator objectives, preparation depends on how the
accepted sets are supplied. Explicit tables with polynomial
alphabet size and suitable oracle models are treated in
the original DQI construction~\cite{jordan24dqi}.

The small primes require separate formulas.
At $q=3$, the cosine has skewness $1/\sqrt2$, and its
quality law on $0\le\lambda\le1/2$ is
\[
  \frac{\lambda}{2}
  +
  \sqrt{2\lambda(1-\lambda)}.
\]
At $q=2$, the cosine is an affine rescaling of a singleton
indicator, so the indicator analysis applies.

\subsection{The finite-length comparison with Prange restarts}
\label{sec:converseproof}

Theorem~\ref{thm:converse} compares DQI's exact expected
score with the best score among a prescribed number of
classical restarts.
Once the finite DQI value is known, the proof is a
concentration argument for the classical baseline.

The important restriction is what each restart may inspect.
It may use the accepted sets, or centers, on its own
information set. Its choice of information set must not
use those data, and its completion must not use data
from the remaining rows.

\subsubsection{One restart for accepted sets}
\label{sec:isbaseline}

Write $r=\rank B=\alpha m$ and choose an information set
$I$ of the optimizer code, with $|I|=r$.
Restriction to $I$ is a bijection from the optimizer code
to $\F_q^I$.
Consequently, choosing the values $y_i$ for $i\in I$
determines a unique codeword $y\in\optcode$, from which
an assignment $x$ with $Bx=y$ can be recovered.

A restart chooses $I$ using only $B$ and randomness
independent of the accepted sets.
It then chooses $y_i\in F_i$ for $i\in I$, using only
the sets on $I$ and independent auxiliary randomness,
and completes the codeword.
The choice of accepted values may depend on the selected
sets, but the random seed and the selection rule must not
reveal information about unselected sets.

\begin{lemma}\label{lem:isscore}
Suppose the accepted sets are independent of each other
and of $B$, and each is uniform among subsets of size
$\rho q$.
Conditionally on $B$, $I$, and the chosen values on $I$,
the number $X$ of satisfied rows outside $I$ has distribution
\[
  X\sim\mathrm{Bin}(m-r,\rho).
\]
The total score is $W=r+X$, and
\[
  \frac{\mathbb E[W]}m
  =
  \alpha+(1-\alpha)\rho
  =
  \Qpr.
\]
\end{lemma}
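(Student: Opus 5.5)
The plan is to condition step by step, so that everything the restart has looked at becomes fixed. The only remaining randomness is then in the accepted sets outside $I$, and the score outside $I$ reduces to a sum of independent Bernoulli$(\rho)$ indicators.

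First, condition on $B$ and on the auxiliary randomness used to choose $I$. The procedure of Section~\ref{sec:isbaseline} selects $I$ from $B$ and independent randomness alone. Because the accepted sets are independent of $B$ and of this randomness, they keep their joint law after the conditioning: they are mutually independent and each is uniform among subsets of size $\rho q$.

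Next, condition further on the sets $(F_i)_{i\in I}$ and on the auxiliary randomness that picks the values $y_i\in F_i$ for $i\in I$. Together these determine the restriction $y|_I$. Since restriction to an information set is a bijection from $\optcode$ to $\F_q^I$, they also determine the whole codeword $y$, and in particular the values $y_i$ for $i\notin I$.

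Mutual independence now does the main work. The family $(F_i)_{i\notin I}$ is independent of $(F_i)_{i\in I}$, of $B$, and of the auxiliary randomness. So under this conditioning the sets $F_i$ with $i\notin I$ are still independent and each is uniform of size $\rho q$. Meanwhile each $y_i$ with $i\notin I$ is a fixed element $v_i$. For a fixed $v$ and a uniform $\rho q$-subset $F$, exchangeability gives $\Pr[v\in F]=\rho q/q=\rho$. The indicators $\mathbf 1\{v_i\in F_i\}$ for $i\notin I$ are therefore independent Bernoulli$(\rho)$, and their sum is $X\sim\mathrm{Bin}(m-r,\rho)$. The finer conditioning on the chosen values only coarsens to the statement's conditioning, and the conditional law of $X$ does not depend on the conditioned data, so it is the same under the statement's conditioning.

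Every row in $I$ is satisfied by construction, since $y_i\in F_i$ there. Hence $W=r+X$. Taking expectations, $\mathbb E[W]=r+(m-r)\rho$, and dividing by $m$ with $r=\alpha m$ gives $\alpha+(1-\alpha)\rho=\Qpr$.

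The one step that needs care is the independence claim: we must make sure the conditioning reveals nothing about the unselected sets. This is exactly the restriction that neither the seed nor the selection rule may depend on rows outside $I$. Once that restriction is made explicit, the rest of the argument is routine.
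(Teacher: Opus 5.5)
Your proposal is correct and follows the paper's argument. The information-set values fix the whole codeword, the accepted sets outside $I$ stay independent and uniform under the conditioning, and so each unselected row is satisfied independently with probability $\rho$. You spell out two steps the paper leaves implicit: the exchangeability calculation $\Pr[v\in F]=\rho$, and the passage from your finer conditioning to the stated one, which is valid because the conditional law does not depend on the conditioned data.
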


\begin{proof}
The selected coordinates contribute exactly $r$.
Conditioning on their chosen values fixes the completed
codeword, while the accepted sets outside $I$ remain
independent and uniform.
Each fixed unselected coordinate therefore belongs to
its accepted set independently with probability $\rho$.
\end{proof}

\subsubsection{One restart for a bounded real objective}
\label{sec:converserange}

For $f_i(v)=F(v-c_i)$, use an information set chosen
under the same restriction.
Set
\[
  (Bx)_i-c_i\in\arg\max F,
  \qquad i\in I,
\]
and complete the codeword.
Any tie-breaking uses only the selected centers and
independent randomness.
For the cosine, the selected coordinates are simply
$(Bx)_i=c_i$.

\begin{lemma}[the unselected coordinates]
\label{lem:centered}
Suppose the centers are independent, uniform on $\F_q$,
and independent of $B$.
For one restart, condition on $B$, $I$, the centers on $I$,
and the randomness used to choose its completion.
Then the values
\[
  (Bx)_i-c_i,
  \qquad i\notin I,
\]
are independent and uniform on $\F_q$.
Their row objectives have mean $\bar F$ and lie in an
interval of length $R$, where $R$ is any valid range
bound for $F$.
\end{lemma}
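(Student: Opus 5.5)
The plan mirrors the proof of Lemma~\ref{lem:isscore}, with centers in place of accepted sets. Let $I$ be the information set and $\mathcal{G}$ the $\sigma$-algebra generated by $B$, $I$, the centers $(c_i)_{i\in I}$, and the auxiliary randomness used to choose the completion.

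\emph{The codeword is $\mathcal{G}$-measurable.} $I$ is chosen from $B$ and from randomness independent of the centers, so $I$ is independent of the centers. The values $(Bx)_i$ for $i\in I$ are set to $c_i$ plus an element of $\arg\max F$. The tie-breaking rule depends only on the selected centers and independent randomness. Restriction to $I$ is a bijection from $\optcode$ onto $\F_q^I$. Therefore the completed codeword $y=Bx$, including every coordinate $y_i$ with $i\notin I$, is a deterministic function of the data generating $\mathcal{G}$.

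\emph{The unselected centers are unaffected by conditioning.} By hypothesis, the centers are mutually independent and independent of $B$. Every other ingredient of $\mathcal{G}$ (the choice of $I$ and the auxiliary randomness) is independent of the centers. Hence the family $(c_i)_{i\notin I}$ is independent of $\mathcal{G}$, and conditionally on $\mathcal{G}$ it remains i.i.d.\ uniform on $\F_q$. A rigorous version first conditions on $B$ and on a fixed realization $I=I_0$, which occurs with positive probability. On that event, it uses the fact that $(c_i)_{i\notin I_0}$ is independent of $(c_i)_{i\in I_0}$ and of the auxiliary randomness.

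\emph{Distribution of the shifted values.} Fix the conditioning data. Each $(Bx)_i$ with $i\notin I$ is then a constant $y_i$, so $(Bx)_i-c_i=y_i-c_i$. This is a translate of a uniform variable, hence uniform on $\F_q$, and these values are independent across $i$ because the $c_i$ are. The row objective $f_i((Bx)_i)=F(y_i-c_i)$ is then distributed as $F(v)$ with $v$ uniform. Its conditional mean is the uniform mean $\bar F$, and it lies in any interval of length $R$ that contains the range of $F$.

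\emph{Main obstacle.} The only delicate step is the measurability bookkeeping: every quantity that determines the completion must be independent of the unselected centers. If the completion peeked at those centers, the independence argument would fail. This is why the statement restricts what the completion may use, and the proof should make that restriction explicit.
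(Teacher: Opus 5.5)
Your proposal is correct and follows essentially the same argument as the paper: the conditioned data determine the completed codeword, the unselected centers stay independent and uniform, and subtracting them from fixed values preserves uniformity and independence. The paper adds one remark. Because the resulting product distribution does not depend on the conditioned values, the conclusion also holds after averaging over them, and Lemma~\ref{lem:rangescore} uses this unconditional form.
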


\begin{proof}
The conditioned data determine the completed codeword.
Every unselected center remains independent and uniform,
so subtracting it from the fixed value $(Bx)_i$ preserves
uniformity and independence.

The resulting product distribution does not depend on the
conditioned values. The same conclusion therefore holds
after averaging over those values.
\end{proof}

A restart consequently has expected score
\[
  \mathbb E[W]
  =
  r\max F+(m-r)\bar F
  =
  \Qpr m.
\]
For the cosine, $\bar F=0$, $\max F=1$, and one may
take $R=2$. For accepted-set objectives, the analogous
parameters are $\rho$, $1$, and $R=1$.

\begin{lemma}[concentration for one restart]
\label{lem:rangescore}
Under the hypotheses of Lemma~\ref{lem:centered},
suppose $r<m$ and $R>0$.
For every $t>0$,
\[
  \Pr[W\ge\Qpr m+t]
  \le
  \exp\!\left(
    -\frac{2t^2}{(1-\alpha)mR^2}
  \right).
\]
\end{lemma}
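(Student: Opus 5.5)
The plan is to apply Hoeffding's inequality to the unselected rows, conditionally on everything that fixes the completed codeword. First I condition on $B$, the information set $I$, the centers (or accepted sets) on $I$, and the auxiliary randomness used for the completion, exactly as in Lemma~\ref{lem:centered}. Under this conditioning the selected rows contribute the deterministic amount $r\max F$, with $r=\rank B=\alpha m$. By Lemma~\ref{lem:centered}, the unselected contributions $F((Bx)_i-c_i)$ for $i\notin I$ are independent. Each has mean $\bar F$ and lies in an interval of length $R$. For accepted sets I would instead use Lemma~\ref{lem:isscore}, whose indicators are Bernoulli with mean $\rho$ and range $R=1$, so the same argument covers both objectives.

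Next I write $W-\Qpr m=\sum_{i\notin I}\bigl(F((Bx)_i-c_i)-\bar F\bigr)$. This uses the identity $\mathbb E[W]=r\max F+(m-r)\bar F=\Qpr m$ established just before the lemma. This is a sum of $m-r=(1-\alpha)m$ independent, mean-zero terms, each confined to an interval of length $R$. Hoeffding's inequality then gives the conditional bound
\[
  \Pr[W\ge\Qpr m+t\mid\cdots]\le\exp\!\left(-\frac{2t^2}{(m-r)R^2}\right)
  =\exp\!\left(-\frac{2t^2}{(1-\alpha)mR^2}\right).
\]
The hypothesis $r<m$ ensures the sum is nonempty, so the denominator is positive. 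Because the right-hand side does not depend on the conditioned values, taking expectation over them yields the unconditional bound.

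The only delicate point is justifying that the conditioning leaves the unselected terms independent with the stated means. This is exactly where the restriction on what a restart may inspect enters. The choices of $I$, the selected values and the tie-breaking must be independent of the unselected centers or sets, and Lemma~\ref{lem:centered} and Lemma~\ref{lem:isscore} already supply this fact. The remainder of the argument is a direct application of Hoeffding's inequality.
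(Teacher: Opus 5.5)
Your proposal is correct and matches the paper's proof: condition on the data that fix the completed codeword, so the selected rows contribute the deterministic amount $r\max F$. Then apply Hoeffding's inequality to the $m-r=(1-\alpha)m$ independent unselected terms of range $R$, and remove the conditioning because the bound does not depend on the conditioned values.
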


\begin{proof}
The selected rows contribute the deterministic score
$r\max F$.
The remaining $m-r$ row objectives are independent,
each has mean $\bar F$, and each lies in an interval
of length $R$.
Hoeffding's inequality therefore applies with sum of
squared ranges $(m-r)R^2$.
The conditional bound is independent of the selected data,
so it also holds unconditionally.
\end{proof}

\begin{proof}[Proof of Theorem~\ref{thm:converse}]
Let $W_1,\ldots,W_N$ be the restart scores, with each
restart satisfying the information restrictions above.
Put
\[
  \mu_m=\Qdqi^{(m)}-\Qpr>0.
\]
For each restart, Lemma~\ref{lem:rangescore} with
$t=\mu_m m$ gives
\[
  \Pr[W_j\ge\Qdqi^{(m)}m]
  \le
  \exp\!\left(
    -\frac{2\mu_m^2m}{(1-\alpha)R^2}
  \right).
\]
For accepted-set objectives, the same inequality follows
from Lemma~\ref{lem:isscore} and Hoeffding's inequality,
with $R=1$.

A union bound now yields
\[
  \Pr\!\left[
    \max_{1\le j\le N}W_j\ge\Qdqi^{(m)}m
  \right]
  \le
  N\exp\!\left(
    -\frac{2\mu_m^2m}{(1-\alpha)R^2}
  \right).
\]
The right-hand side is at most $\eta$ whenever
\[
  m\ge
  \frac{(1-\alpha)R^2\ln(N/\eta)}{2\mu_m^2}.
\]
With probability at least $1-\eta$, every restart
therefore scores strictly below DQI's exact expected score.
\end{proof}

Independence between restart scores is not needed for
the union bound, and they may be correlated through the common
instance. What matters is that each restart separately
satisfies the information restrictions used in its tail
bound. A procedure that uses earlier scores to adapt its
information sets or that optimizes using all row data
requires a different analysis.

The comparison is with DQI's expectation.
The probability of obtaining a lower target score from
an individual DQI sample is addressed separately by
\eqref{eq:reversemarkov}.

%% file: supplement/S2-candidates.tex
\section{The two candidates}\label{sec:supp-candidates}

This section supplies the construction details, decoding proofs, and classical-attack
analysis behind Section~\ref{sec:problems}. For alternant max-agreement (AMA), the central question is
how access to the hidden code structure affects the optimization problem. For multiplicative
polynomial intersection (MPI), it is how rational reconstruction provides the restricted-error
decoder required by the cosine objective. We treat these questions separately from the
experimental comparisons: the decoding guarantees are mathematical statements, whereas the
classical scores describe the algorithms and budgets tested.

\subsection{Alternant max-agreement}\label{sec:mcattacks}

\subsubsection{Construction and parameter choices}\label{sec:mcconstruction}

Let $K=\F_{p^{s}}$, and choose distinct points $a_1,\ldots,a_m\in K$ and nonzero
multipliers $v_1,\ldots,v_m\in K$. Consider the matrix
\[
  H_{ji}=v_i a_i^{j},
  \qquad 0\le j<r,\quad 1\le i\le m.
\]
Its kernel over $K$ is a generalized Reed--Solomon code with minimum distance $r+1$.
The alternant code consists of the vectors in this kernel whose coordinates lie in the
base field:
\[
  \deccode=\{y\in\F_p^{m}:Hy=0\}.
\]
Expanding the entries of $H$ in an $\F_p$-basis of $K$ gives a parity-check matrix over
$\F_p$. Scrambling its rows and permuting its columns gives the public matrix $\Hpub$,
and the optimization instance uses $B=\Hpub^{\mathsf T}$.

The expanded parity-check matrix has at most $sr$ independent rows. We verify that its
rank is $sr$ for every reported instance, so the optimizer rate is
\[
  \alpha=\frac{sr}{m},
\]
and the decoder code has minimum distance at least $r+1$. Access to the support and
multipliers permits standard alternant decoding of every error pattern of weight at most
$\lfloor r/2\rfloor$.

The license used by the DQI quality law is slightly stronger than ordinary unique
decoding. Definition~\ref{def:licensed} requires $2\ell+1<d$, so the designed-distance
bound certifies
\[
  \ell=\left\lfloor\frac{r-1}{2}\right\rfloor,
  \qquad
  \frac{\ell}{m}=\frac{\alpha}{2s}+O(m^{-1}).
\]
In particular, at $r=187$ we use $\ell=93$ and $d\ge188$. The algebraic decoding
guarantee applies to every pattern within this radius, and the random-pattern checks reported
below test the implementation.

These expressions explain the choice of extension degree. Substituting
$\lambda=\alpha/(2s)$ into the asymptotic quality law at density $\rho=\tfrac12$ gives
\[
  \mu(\alpha,s)
  =
  \sqrt{\frac{\alpha}{2s}
             \left(1-\frac{\alpha}{2s}\right)}
  -\frac{\alpha}{2}.
\]
For fixed $s$, this expression is maximized at
\[
  \alpha^{*}
  =
  s\left(1-\frac{s}{\sqrt{1+s^{2}}}\right),
  \qquad
  \mu(\alpha^{*},s)
  =
  \frac{\sqrt{1+s^{2}}-s}{2}.
\]
At $s=1$ this recovers the sharp margin cap of Proposition~\ref{prop:cap}.
Increasing $s$ reduces the available margin and raises the cost predicted by the
distinguisher model discussed below. We choose $s=6$ and $\alpha\approx0.08$, near
$\alpha^{*}\approx0.0816$. The actual accepted-set density is $\rho=5/11$, and the
reported DQI values use the exact finite-length law at that density.

The base field also controls the regime and the available lengths. For independently
sampled accepted sets, the first-moment unsatisfiable regime is
$p^{\alpha}\rho<1$. Meanwhile, the distinct evaluation points require $m\le p^{s}$.
Our choice $p=11$, $s=6$ allows $m\le1{,}771{,}561$. Thus, at fixed field parameters,
the construction supplies a finite range of instances. The asymptotic expressions above
guide parameter selection, and the comparisons at the reported lengths use their finite
parameters, as in Table~\ref{tab:candrules}.

\subsubsection{What the secret key gives a classical solver}\label{sec:mckeyholder}

The key exposes the optimizer code as a trace code. After undoing the secret coordinate
permutation, its codewords have the form
\[
  \bigl(\Tr(v_i f(a_i))\bigr)_{i=1}^{m},
  \qquad f\in K[z],\quad \deg f<r,
\]
where
\[
  \Tr(z)=z+z^{p}+\cdots+z^{p^{s-1}}
\]
is the trace from $K$ to $\F_p$. Consequently, a classical key holder seeks a polynomial
$f$ satisfying
\[
  \Tr(v_i f(a_i))\in F_i
\]
on as many rows as possible. This is the trace-structured polynomial-intersection
problem described in Section~\ref{sec:mckey}.

One possible use of this representation is Reed--Solomon list recovery. Each accepted
set $F_i$ lifts to the list
\[
  L_i=\{y\in K:\Tr(v_i y)\in F_i\}.
\]
Since $v_i\ne0$ and the trace is surjective, every trace value has $p^{s-1}$ preimages.
Thus $|L_i|=\rho p^{s}$. The following standard guarantee explains why the direct
interpolation approach does not certify a useful solution at our parameters.

\begin{lemma}[Reed--Solomon list-recovery threshold]\label{lem:listrecovery}
Consider a Reed--Solomon code of length $m$ and dimension $k\ge2$, with an input list
of size at most $L$ at each evaluation point. With suitable interpolation parameters,
Guruswami--Sudan list recovery returns every codeword agreeing with at least $t$ lists
whenever
\[
  t>\sqrt{(k-1)mL}.
\]
This sufficient condition is vacuous when $(k-1)L\ge m$.
\end{lemma}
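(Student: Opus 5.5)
The plan is to reduce the statement to the classical Guruswami--Sudan weighted-interpolation argument applied to the list-recovery setting, in which each point carries $L$ candidate values. Let the evaluation points be $a_1,\dots,a_m$ and the lists $L_i\subseteq K$. I will interpolate a nonzero bivariate polynomial $P(z,Y)$ that vanishes with multiplicity at least $w$ at each of the at most $mL$ pairs $(a_i,y)$ with $y\in L_i$. Its $(1,k-1)$-weighted degree will be at most some bound $D$. Existence follows from counting. The number of monomials $z^aY^b$ with $a+(k-1)b\le D$ is at least $D^2/(2(k-1))$, while the multiplicity constraints number at most $mL\binom{w+1}{2}$. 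A nonzero solution therefore exists once
\[
  \frac{D^2}{2(k-1)}>mL\frac{w(w+1)}{2},
\]
that is, once $D$ is slightly above $w\sqrt{(k-1)mL}$.

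The next step is the root-finding lemma. Take any polynomial $f$ with $\deg f<k$ that agrees with at least $t$ lists, meaning $f(a_i)\in L_i$ for at least $t$ indices. Then the univariate polynomial $P(z,f(z))$ has degree at most $D$ and vanishes to order at least $w$ at each such $a_i$. If $tw>D$, the polynomial must vanish identically. Hence $Y-f(z)$ divides $P$, and $f$ is among the $Y$-roots that a standard factorization step recovers in polynomial time.

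It remains to pick the parameters. Given $t>\sqrt{(k-1)mL}$, I will choose the multiplicity $w$ large enough, polynomial in $m$, and set $D=tw-1$. The interpolation inequality then becomes $(tw)^2\gtrsim (k-1)mL\,w(w+1)$. This holds for large $w$ because $t^2>(k-1)mL$ strictly, and the factor $(w+1)/w\to1$ absorbs the slack. The main obstacle is this bookkeeping. The weighted monomial count must be taken with its exact lower bound $\frac{(D+1)(D+k+1)}{2(k-1)}$, or a clean variant of it. One must also ensure $D$ is large enough that $Y$-degrees up to $\lfloor D/(k-1)\rfloor$ are available, and that the strict inequality in $t$ survives rounding. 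Once these are checked, the standard quantitative statement (Guruswami--Sudan, list-recovery version) applies directly, and I would simply cite it for the precise constants.

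The final sentence of the statement is immediate. If $(k-1)L\ge m$, then $\sqrt{(k-1)mL}\ge m$, so the condition would require $t>m$. No codeword agrees with more than $m$ lists, so the guarantee says nothing in this case.
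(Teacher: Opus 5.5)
Your proposal is correct and is the argument the paper gives: interpolate with a large common multiplicity $w$ at the at most $mL$ pairs, use $tw>D$ to force $P(z,f(z))\equiv 0$ so that factorization recovers $f$, and note that $(k-1)L\ge m$ would require $t>m$. The paper delegates the monomial-versus-constraint count to Guruswami--Sudan, and you carry it out; one small slip there is that $\frac{(D+1)(D+k+1)}{2(k-1)}$ is not a lower bound on the number of monomials of $(1,k-1)$-weighted degree at most $D$ (for $k=2$, $D=1$ there are $3$, not $4$), but the correct bound $\frac{(D+1)(D+2)}{2(k-1)}$ still dominates the $\frac{D^2}{2(k-1)}$ you actually use, so your conclusion is unaffected.
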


\begin{proof}
There are at most $mL$ interpolation pairs. Assigning a sufficiently large common
multiplicity to these pairs gives the Guruswami--Sudan threshold
$t>\sqrt{(k-1)mL}$~\cite{guruswami99gs}. Equivalently, the strict inequality allows an
interpolation polynomial whose $(1,k-1)$-weighted degree is smaller than the total
multiplicity contributed by $t$ agreements. Substituting any degree-$(k-1)$ message
polynomial with that many agreements therefore gives the zero polynomial, so the
message is recovered by the factorization step. Since $t\le m$, the stated condition
cannot hold when $(k-1)L\ge m$.
\end{proof}

For the main AMA instance, $k=r=187$ and
\[
  L=\frac{5}{11}\,11^{6}=805{,}255.
\]
The threshold is therefore
\[
  t>\sqrt{186\cdot14000\cdot805255}
    \approx1.4\times10^{6},
\]
well above the number of rows. This calculation concerns the sufficient guarantee of
this list-recovery algorithm. It does not exclude other list-recovery methods or other
classical uses of the key.

We also tested a direct search in the polynomial representation. At $s=6$ and
$m=14000$, line search and simulated annealing along $K$-lines of message polynomials
were each given $600$~s, starting from the best assignment found by the portfolio
without the secret key. Each line contains $11^{6}$ points, and the implementation scores $4096$ randomly
chosen points per line, completing about $180$ moves per run. Across the three seeds,
neither method improved the starting qualities $0.5144$, $0.5151$, and $0.5171$.
Neither method improves these assignments when run for $10^4$ moves either. By comparison,
line search without the key, given the same starting assignments and the same number of
moves, improves two of the three. Each move in the polynomial representation preserves
only $r-1$ row values, compared with $n-1$ for a move of the search without the key.
This helps explain why the former search struggles to improve an already good assignment
(Section~\ref{sec:scalingkey}). At $m=20000$ the same searches ran for $1800$~s each, with $109$ to $168$ moves per run. They again improved none of the starting qualities $0.5116$, $0.5092$ and $0.5128$. Started from random polynomials, the searches with the key reach $0.471$ to $0.478$.

Two further facts limit what the key gives a classical solver. Linear algebra with the key fixes at most $n=sr$ rows, the same number as an information set of the public matrix. The key changes the basis of $\optcode$ but not its dimension over $\F_p$. The key's decoder corrects at most $\ell$ errors, $133$ at $m=20000$, so rounding a target to a codeword with it needs a target within $\ell$ of the code. The best codeword is about $0.47m$ away. DQI uses the decoder differently, on the low-weight error patterns of its own superposition.

As a separate implementation check, the key holder's decoder recovered all $50$ random
error patterns tested within the licensed radius $\ell=93$. The absence of an
improvement from the searches using the key remains an experimental
observation. The hardness assumptions of Section~\ref{sec:mchardness} concern solvers
without the secret key and do not establish hardness when the classical solver also has the key.

\subsubsection{Known attacks and the hardness assumptions}

The most directly relevant cost estimates concern the distinguisher of Ghoshal, Ishai,
Jain and Sun~\cite{ghoshal26quasipoly}. It holds out a column of a generator matrix and
tests whether a low-degree polynomial vanishing on Hamming balls around the remaining
columns also vanishes on the held-out column. Here the input code is the alternant code
$\ker\Hpub$, whose generator matrix is obtained by linear algebra. The row space of
$\Hpub$ is instead the trace code described above.

Our adaptation of the authors' cost model to $\F_{11}$ includes this conversion and
optimizes over shortening. It gives approximately $2^{123}$ binary operations for their
theorem-backed variant and $2^{106}$ for the heuristic variant at $m=14000$. The
corresponding estimates are $2^{127}$ and $2^{110}$ at $m=20000$, and $2^{121}$ and
$2^{117}$ at $m=40000$. The theorem-backed estimate is lower at $m=40000$ because the cheapest attack parameters change: at $m=14000$ and $20000$ it uses polynomials of degree $8$ vanishing to order $7$, while at $m=40000$ degree $7$ and order $6$ suffice. The same model gives a
key-recovery estimate of about $2^{158}$ at $m=14000$ and about $2^{159}$ at $m=40000$. These are estimates for
specified attacks under that model, not lower bounds on the cost of distinguishing or
recovering our keys.

The additional reduction of roughly $20$ bits discussed in Remark~5.1
of~\cite{ghoshal26quasipoly} was estimated for binary Classic McEliece parameters.
It cannot be subtracted directly from the $\F_{11}$ heuristic estimate: part of that
reduction, the nullity prediction, is already included in the heuristic variant.
Accordingly, this adjustment does not supply a separate attack estimate of $2^{86}$ at
$m=14000$ or of $2^{97}$ at $m=40000$.

Other structural attacks remain relevant. Higher-order vanishing methods and their
key-recovery extensions~\cite{hemmert25vanishing,hemmert26keyrecovery}, as well as
Vedenev's method for subfield subcodes of generalized Reed--Solomon
codes~\cite{vedenev26keyrecovery}, require an instance-specific cost analysis here.
The same applies to the quadratic-forms and tangent-space
approaches~\cite{couvreur23quadratic,lemoine25tangent}. A degree restriction from one
version of these attacks does not exclude their extensions: in particular, the
tangent-space analysis also treats generic square-distinguishable alternant codes
with $r>p$. We therefore do not use $r=187>11$ or a simplified dimension criterion
to dismiss these methods. Table~\ref{tab:search} records the other key families
considered in the search.

These qualifications matter for interpreting Assumptions (A) and (B).
Assumption (A) concerns indistinguishability of the public key distribution, and the
cost estimates above describe some known approaches to challenging it. Assumption (B)
is a separate hypothesis about max-agreement on random matrices. It does not follow
from the usual hardness assumptions for syndrome decoding. The budget-scaling runs of
Section~\ref{sec:scalingconsequences} help assess this assumption. The experiments use
public-key matrices, but the generic searches do not use the secret decoding structure,
and the Gaussian reference $\hat z(N)$ does not depend on the matrix.
Extrapolating these results suggests that the generic searches would exceed
$\Qdqi^{(m)}-0.01$ well within $2^{80}$ operations at $m=14000$,
which is why the assumptions are stated at $m=40000$. Together, Assumptions (A) and (B) support the
conditional argument in Section~\ref{sec:mchardness}, whereas the experiments in
Table~\ref{tab:experiments} establish only the performance of the solvers tested.

\subsection{Multiplicative polynomial intersection}\label{sec:multiplicative}

Throughout this subsection, let $K=\F_{2^{h}}$ and let $M=2^{h}-1>3$ be prime.
Fix a generator $g$ of $K^{*}$ and pairwise distinct points
$t_1,\ldots,t_m,\zeta_1,\ldots,\zeta_n\in K$. The matrix of
Section~\ref{sec:logcauchy} is
\[
  B_{ij}=\log_g(t_i+\zeta_j)\in\F_M.
\]
We assume $\rank_{\F_M}(B)=n$ when identifying the optimizer rate with $\alpha=n/m$.
The support points and the field representation are public.

\subsubsection{Relation to earlier constructions}
\label{sec:mpiprovenance}\label{sec:mpistructure}

The use of discrete logarithms of shifted field elements is related to the
Chor--Rivest knapsack construction~\cite{chor88knapsack}. The more direct connection
for our decoder is to the polynomial lattices of Li, Ling, Xing and
Yeo~\cite{li17polylattice} and their rational-reconstruction decoding studied by
Lapiha~\cite{lapiha21polylattice}.

To see this connection, put
\[
  P(z)=\prod_{j=1}^{n}(z+\zeta_j)
\]
and consider the integer lattice
\[
  \Lambda_{\mathrm{dec}}
  =
  \left\{
    u\in\mathbb Z^{m}:
    \prod_{i=1}^{m}(z+t_i)^{u_i}\equiv1\pmod{P(z)}
  \right\}.
\]
Negative exponents are interpreted in the unit group of $K[z]/(P)$. Every
$z+t_i$ is a unit because the two sets of support points are disjoint.
By the Chinese remainder theorem, the defining congruence is equivalent to
\[
  \prod_{i=1}^{m}(t_i+\zeta_j)^{u_i}=1
  \quad\text{for every }j.
\]
Taking discrete logarithms gives
\[
  \Lambda_{\mathrm{dec}}
  =
  \{u\in\mathbb Z^{m}:B^{\mathsf T}u=0\pmod M\}.
\]
Its reduction modulo $M$ is therefore exactly the decoder code. The rational
decoder below uses this established algebraic construction. Its role in our candidate
is to supply the error alphabet required by the cosine objective, leading to the
DQI guarantee in Theorem~\ref{thm:logcauchy}(iii).

This restricted-error guarantee should be distinguished from a Hamming-distance
guarantee. At $m=6000$ and $n=420$, the decoder code has dimension $5580$ over
$\F_{8191}$. A Hamming license at $\ell=209$ would require $d\ge420$, whereas the
Singleton bound is $d\le421$. The actual Hamming distance is unknown. Our proof
instead establishes the separation property for signed errors required by
Definition~\ref{def:licensedS}.

A separate structural check also distinguishes one tested optimizer code from a
Reed--Solomon code of the same parameters. The \emph{Schur square} of a code is the
span of all componentwise products of pairs of codewords. At $m=600$ and $n=60$ (a separate probe instance with $\alpha=0.1$ and build seed $40$,
not one of the reported $\alpha=0.07$ instances), the tested optimizer code has Schur-square dimension $600$, whereas a generalized
Reed--Solomon code of dimension $60$ has square dimension $119$ at this length.
This observation concerns that instance and its unshortened code and does not
exclude attacks exploiting other algebraic structure.

The asymptotic statements in Theorem~\ref{thm:logcauchy} apply along any unbounded
sequence of admissible Mersenne primes and parameters. The existence of infinitely
many Mersenne primes is open. At the fixed field sizes $h=13$ and $h=17$, all reported DQI
values use the exact finite law of Lemma~\ref{lem:objlaw}.

\subsubsection{Rational reconstruction and unique decoding}
\label{sec:logcauchyproof}

We first prove the decoding statement in Theorem~\ref{thm:logcauchy}(i).
Let
\[
  \ell=\left\lfloor\frac{n-1}{2}\right\rfloor,
\]
so $n>2\ell$. For a signed pattern $y\in\{0,\pm1\}^{m}$, define
\[
  U_y(z)=\prod_{y_i=1}(z+t_i),
  \qquad
  V_y(z)=\prod_{y_i=-1}(z+t_i).
\]
The positive entries of $y$ become roots of the numerator, and the negative
entries become roots of the denominator. If $\mathrm{wt}(y)\le\ell$, both
polynomials have degree at most $\ell$ and are coprime.

Given the syndrome $s=B^{\mathsf T}y$, exponentiation converts its linear
equations over $\F_M$ into evaluations of a rational function over $K$:
\[
  \Phi_j:=g^{s_j}
  =
  \prod_i(t_i+\zeta_j)^{y_i}
  =
  \frac{U_y(\zeta_j)}{V_y(\zeta_j)}.
\]
The denominators are nonzero because no $\zeta_j$ equals a $t_i$.

\begin{proof}[Proof of Theorem~\ref{thm:logcauchy}(i)]
Suppose two patterns $y,y'$ of weight at most $\ell$ have the same syndrome.
Then
\[
  U_y(\zeta_j)V_{y'}(\zeta_j)
  -
  U_{y'}(\zeta_j)V_y(\zeta_j)=0
  \qquad\text{for all }j.
\]
The polynomial on the left has degree at most $2\ell<n$, but has the $n$
distinct roots $\zeta_1,\ldots,\zeta_n$. It is therefore identically zero.
The two reduced rational functions coincide, and unique factorization in $K[z]$
gives $y=y'$. Thus the syndrome map is injective on the stated error set.

To invert the map, solve the homogeneous linear system
\[
  U(\zeta_j)-\Phi_j V(\zeta_j)=0,
  \qquad j=1,\ldots,n,
  \qquad \deg U,\deg V\le\ell.
\]
A nonzero solution exists because $(U_y,V_y)$ is one. Every solution satisfies
\[
  UV_y-U_yV=0:
\]
as above, this polynomial has degree at most $2\ell$ and vanishes at all $n$
evaluation points. Since $U_y$ and $V_y$ are coprime, every nonzero solution is
of the form
\[
  (U,V)=(hU_y,hV_y)
\]
for a polynomial $h$. Dividing by the polynomial gcd and normalizing therefore
recovers $(U_y,V_y)$. Scanning the support points $t_i$ for roots identifies
the positive and negative coordinates.

Exponentiating the syndrome costs $O(nh)$ field multiplications. Gaussian
elimination costs $O(n^3)$ field operations, and evaluating the reconstructed
polynomials at the $m$ support points costs $O(m\ell)$. Polynomial gcd computation
is absorbed by these bounds, giving
\[
  O(n^3+m\ell+nh)
\]
field operations.
\end{proof}

The decoder can be defined on arbitrary syndromes by checking that the reduced
numerator and denominator have the required normalization, have only simple roots
among the $t_i$, and reproduce the supplied evaluations. It returns failure if these
checks fail. We retain separate degree bounds on the numerator and denominator,
rather than imposing a total-weight bound on the output. This is what permits the
additional recovery guarantee in part (ii).

\subsubsection{Recovery beyond the worst-case radius}
\label{sec:inmeasure}\label{sec:inmeasureproof}

The same rational decoder can recover patterns with more than $\ell$ nonzero entries
when their positive and negative signs are sufficiently balanced. This is an
average-case statement about signs, distinct from the worst-case license.

For clarity, say that the decoder is \emph{licensed in measure at reach $\lambda$}
if, for each weight $k\le\lfloor\lambda m\rfloor$, its failure probability
$\varepsilon_k$ on uniformly sampled signed patterns satisfies
\[
  \max_{0\le k\le\lfloor\lambda m\rfloor}\varepsilon_k\longrightarrow0.
\]
Only Theorem~\ref{thm:logcauchy}(ii) uses this notion.

\begin{proof}[Proof of Theorem~\ref{thm:logcauchy}(ii)]
Fix a support of size $k\le2\ell$, and let $k_+$ be the number of positive signs.
Then
\[
  \deg U_y=k_+,\qquad \deg V_y=k-k_+.
\]
The reconstruction proof above applies whenever both degrees are at most $\ell$.
It used these separate degree bounds, not the stronger condition
$\mathrm{wt}(y)\le\ell$.

With independent uniform signs, $k_+\sim\mathrm{Bin}(k,\tfrac12)$. The original
pattern is therefore recovered exactly when
\[
  k_+\le\ell
  \quad\text{and}\quad
  k-k_+\le\ell.
\]
If either inequality fails, the reduced rational function of the original pattern
lies outside the decoder's degree bounds, so that pattern cannot be its output.
Counting both detected failures and incorrect outputs as failures gives, for
$1\le k\le2\ell$,
\[
  \varepsilon_k
  =
  \Pr\left[
    \left|\mathrm{Bin}(k,\tfrac12)-\frac{k}{2}\right|
    >
    \ell-\frac{k}{2}
  \right]
  \le
  \min\left\{
    1,\,
    2\exp\left(-\frac{2(\ell-k/2)^2}{k}\right)
  \right\},
\]
with $\varepsilon_0=0$.

For any fixed $\xi\in(0,1)$ and $k\le(1-\xi)2\ell$, the exponential bound is at most
\[
  2\exp\left(-\frac{\xi^2\ell}{1-\xi}\right).
\]
Since $2\ell/m\to\alpha$, every fixed $\lambda<\alpha$ eventually satisfies
$\lambda m\le(1-\xi)2\ell$ for some fixed $\xi>0$. The failure probabilities then
vanish uniformly over all weights up to $\lambda m$.
\end{proof}

The strict inequality $\lambda<\alpha$ is essential. At $k=2\ell$, recovery requires
exactly $\ell$ positive and $\ell$ negative signs, which occurs with probability
\[
  \binom{2\ell}{\ell}2^{-2\ell}
  \sim\frac{1}{\sqrt{\pi\ell}}.
\]
Thus the endpoint itself has a vanishing recovery probability.

This extension concerns the decoder's recovery probability. It does not establish
the stronger separation hypothesis used by the exact DQI quality law.
The performance guarantee in Theorem~\ref{thm:logcauchy}(iii) uses the worst-case
signed-error radius, whose separation property we prove next.

\subsubsection{Separation and the DQI guarantee}\label{sec:logcauchysep}

Unique decoding compares two patterns in the prepared error set. The objective-law
license also compares such a pattern with one obtained by adding a Fourier frequency
to another. For the cosine objective this means comparing $y'$ with $y+ae_i$,
where $a\in\{\pm1\}$. The latter vector may contain an entry $\pm2$, so injectivity
on $\{0,\pm1\}$ patterns alone is insufficient.

\begin{proof}[Verification of the separation hypothesis]
Assume $n\ge2\ell+2$, and let $y,y'\in\{0,\pm1\}^{m}$ have weight at most $\ell$.
For $a\in\{\pm1\}$, write $u=y+ae_i$ using integer coordinates in
$\{-2,-1,0,1,2\}$. Since $M>3$, these coordinates have unambiguous reductions in
$\F_M$. Define
\[
  U_u(z)=\prod_{u_i>0}(z+t_i)^{u_i},
  \qquad
  V_u(z)=\prod_{u_i<0}(z+t_i)^{-u_i}.
\]
These polynomials are monic and coprime, and
\[
  \deg U_u+\deg V_u=\|u\|_1\le\ell+1.
\]

Suppose $B^{\mathsf T}u=B^{\mathsf T}y'$. Exponentiating the syndromes shows that
\[
  U_uV_{y'}-U_{y'}V_u
\]
vanishes at every $\zeta_j$. Its degree is at most
\[
  \|u\|_1+\|y'\|_1\le2\ell+1<n,
\]
so it is identically zero. Equality of the reduced monic fractions then gives
$u=y'$ by unique factorization. Consequently,
\[
  B^{\mathsf T}y'\ne B^{\mathsf T}(y+ae_i)
  \qquad\text{whenever }y'\ne y+ae_i,
\]
as required by Definition~\ref{def:licensedS}.
\end{proof}

When $n$ is even and $\ell=\lfloor(n-1)/2\rfloor$, we have $n=2\ell+2$.
The decoder is therefore licensed for $\{\pm1\}$ at this radius. Applying
Lemma~\ref{lem:objlaw}, with the separation condition of Lemma~\ref{lem:cosinelaw}
now established, gives the exact finite DQI expectation. Along an admissible sequence
with $\ell/m\to\lambda=\alpha/2$, its expected mean cosine tends to
\[
  \sqrt{2\lambda(1-\lambda)}.
\]
This proves Theorem~\ref{thm:logcauchy}(iii).

\subsubsection{Rank and computational resources}\label{sec:mpiresources}

Distinct support points do not by themselves imply $\rank(B)=n$. For example, over
\[
  \F_8=\F_2[X]/(X^3+X+1),\qquad g=X,
\]
the choices
\[
  t=(X,X^2,X^2+X),\qquad \zeta=(0,1)
\]
give the two columns $(1,2,4)$ and $(3,6,5)$, which are proportional modulo $7$.
If the rank is smaller than $n$, the optimizer rate is $\rank(B)/m$, not $n/m$.
We therefore check the rank of every reported instance by Gaussian elimination
over $\F_M$.

The decoding bound in Theorem~\ref{thm:logcauchy}(i) counts operations in $K$.
Using quadratic-time arithmetic on $h$-bit field elements gives
\[
  O\bigl((n^3+m\ell+nh)h^2\bigr)
\]
bit operations, before the overhead of a reversible implementation.

Constructing $B$ is a separate cost because it requires $mn$ discrete logarithms.
A complete logarithm table has $O(2^h)$ entries and supplies these values by lookup.
This construction is polynomial in $m$ when $2^h\le m^C$ for a fixed constant $C$,
for example when $m$ is a fixed fraction of $2^h$. The condition
$m+n\le2^h$ alone does not imply this upper bound on the field size. The table cost
must therefore be included explicitly along other parameter sequences.
The decoding estimate also does not include a precision or gate-count analysis
for preparing the cosine state.

\subsubsection{The lattice attack}\label{sec:latticestudy}

A classical solver works in the exponent domain: it seeks $x\in\F_M^n$ for which
$(Bx)_i-c_i$ is small modulo $M$ on many rows. The attack tested here approximates
this task by a closest-vector problem on a subset of rows.

Choose a subset $J$ of $m'$ rows with $\rank(B_J)=n$, and form the lattice
\[
  \Lambda_J
  =
  \{v\in\mathbb Z^{m'}:
      v\bmod M\in\operatorname{im}(B_J)\}.
\]
The rank condition is important. It implies that the image has $M^n$ elements and
hence that
\[
  \det\Lambda_J=M^{m'-n}.
\]
Without it, the exponent is $m'-\rank(B_J)$.

Starting from a systematic basis, the attack applies LLL, optionally followed by
BKZ with block size $\beta$, and then uses Babai's nearest-plane algorithm to find
a lattice vector near an integer representative of $c_J$. This Euclidean
closest-vector objective is a surrogate for the cosine objective. An invertible
$n\times n$ submatrix of $B_J$ recovers the corresponding $x$, which is scored on
all $m$ rows. The run repeats this procedure on random subsets and keeps the best
assignment found within its budget.

For any one candidate constructed using only $B$ and $c_J$, the expected cosine on
a row outside $J$ is zero, because that row's center remains independent and uniform.
This statement applies before choosing among candidates by their full-instance
scores. Such selection can favor candidates that happen to score well outside
their construction subsets.

\begin{table}[t]
\centering
\small
\begin{tabular}{rrrrrrr}
\toprule
$m$ & $n$ & DQI (finite) & $m'$ & $\beta$ & subsets & best mean cosine\\
\midrule
600  & 42  & 0.2235 & 100 & 20 & 50 & 0.2172\\
600  & 42  & 0.2235 & 150 & 20 & 27 & 0.2873\\
2000 & 140 & 0.2433 & 180 & 2  & 50 & 0.1213\\
2000 & 140 & 0.2433 & 180 & 20 & 9  & 0.1112\\
2000 & 140 & 0.2433 & 220 & 2  & 50 & 0.1521\\
2000 & 140 & 0.2433 & 220 & 20 & 50 & 0.1390\\
2000 & 140 & 0.2433 & 260 & 2  & 8  & 0.0816\\
2000 & 140 & 0.2433 & 260 & 20 & 2  & 0.0677\\
6000 & 420 & 0.2518 & 470 & 2  & 50 & 0.0942\\
6000 & 420 & 0.2518 & 520 & 2  & 5  & 0.0848\\
\bottomrule
\end{tabular}
\caption{Row-subset lattice attacks on seed $3101$ at each size, within the time
budgets given in the text. Here $m'$ is the subset size, $\beta=2$ denotes LLL alone,
``subsets'' counts the completed subsets, and the last column is the best score over
those subsets. Every reported score is evaluated on the full instance.
DQI values are exact finite expected mean cosines. Means over fixed numbers of subsets,
at block sizes up to $60$, are given in Table~\ref{tab:scalinglattice}.}
\label{tab:lattice}
\end{table}

Table~\ref{tab:lattice} reports the completed runs on seed $3101$.
At $m=600$, each configuration ran for about $300$~s. At $m=2000$, each ran for
about $1200$~s, except BKZ-20 on $260$ rows, which took $1596$~s.
At $m=6000$, the two LLL runs took $1288$~s and $4060$~s.
These budgets completed different numbers of subsets, so the scores reflect
both the quality of each reduction and the number of candidates examined.

At $m=600$, BKZ-20 beats DQI's $0.2235$ on all three seeds. On seed $3102$
the best tested mean cosine is $0.257$ using $100$ rows, and the $150$-row run also beats DQI,
reaching $0.246$. On seed $3103$ the best is
$0.265$ using $150$ rows.
The row-subset lattice attack is the strongest classical attack tested at that size.
The values above are the best scores over the subsets examined. In the rerun of
Section~\ref{sec:scalinglattice} on seed $3101$, individual subsets give scores up to $0.307$ at block sizes
$50$ and $60$, and the mean score over subsets exceeds DQI's value at the tested block sizes of $30$ and above
(Table~\ref{tab:scalinglattice}). Longer runs of the generic searches exceed it as well
(Table~\ref{tab:scaling}).

At $m=2000$, LLL on $220$ rows reaches $0.1521$ on seed $3101$, slightly above the
generic portfolio's score of about $0.149$. Runs on seeds $3102$ and $3103$
returned scores $0.142$ and $0.126$ after LLL, before later stages failed in
\texttt{fplll}. Those scores are below the corresponding portfolio values of
about $0.152$ and $0.151$. The completed lattice runs remain below DQI's
$0.2433$, but do not establish the performance of stronger or longer reductions.

At $m=6000$, the LLL runs on seed $3101$ reach at most $0.0942$. BKZ-20 failed in
double precision on all three seeds. A multiprecision rerun on seed $3101$
completed one subset of $520$ rows, scoring $0.0696$ in $2432$~s, and one subset
of $700$ rows, scoring $0.0192$ in $2600$~s. These limited completed runs are below
DQI's $0.2518$, and the numerical failures themselves provide no evidence of hardness.
The rerun of Section~\ref{sec:scalinglattice}, which uses $212$-bit mantissas at this
size and fixed numbers of subsets, does not encounter these failures: BKZ-20 to
BKZ-40 complete, while runs at block size $50$ reach the time limit.

The lattice experiments therefore show that the attack exceeds DQI at $m=600$ and that
the tested configurations fall short of it at $m=2000$ and $6000$.
They do not bound stronger lattice methods. Section~\ref{sec:scalinglattice}
reruns the attack with a fixed protocol at block sizes up to $60$, including at $m=1000$,
where some individual subsets yield scores above DQI's value at block sizes $50$ and $60$.
That section also models the mean score over subsets, estimating the block size
needed for this mean to reach DQI's value at lengths up to $m=3000$. Assumption (C) in
Section~\ref{sec:logcauchy} remains a separate hardness hypothesis.

%% file: supplement/S3-search-checks.tex
\section{Search methodology and computational checks}\label{sec:supp-search}

This section describes how the candidate problems were selected and how the numerical
comparisons were evaluated. Section~\ref{sec:search} summarizes the search and its outcomes.
Section~\ref{sec:checks} connects the retained claims to their scripts and records, and
Section~\ref{sec:solvers} specifies the classical solvers and their budgets. Construction
details and decoding proofs appear separately in Section~\ref{sec:supp-candidates}.

\subsection{The candidate search}\label{sec:search}

We used a system of AI agents, guided by human selection, to propose optimization problems
and code constructions, evaluate their screening conditions, and develop decoders and
classical attacks for selected proposals. The search drew on both existing code families
and newly proposed combinations of codes, objectives, and access models. As described in
Section~\ref{sec:intro}, the subfield rule \gate{4} and the objective-dependent form of the
rules were developed during this process.

The resulting catalog contains $109$ problem and code-family entries. Script~10 evaluates
the screening inequalities from the parameters and bounds supplied for each entry.
Script~32 checks that each catalog entry has a recorded disposition. These scripts
organize the finite catalog. The mathematical content of a screening decision comes from
the bounds, decoder guarantees, and classical algorithms attached to that entry.
Likewise, the evaluation time reported in Section~\ref{sec:checkcost} concerns the scalar
checks once these inputs are available.

Several different outcomes led us to set a proposal aside. Some violated a necessary
condition under the rule's hypotheses. Others lacked a suitable worst-case decoder, or
the available decoding guarantee was too weak to certify an advantage. Some exposed
structure that a classical algorithm could exploit. Many evaluation-code proposals
retained the polynomial-intersection mechanism of OPI. Other application proposals
did not produce an optimization problem within the framework.

These outcomes have different implications. An insufficient guarantee for a known decoder
does not bound the reach of every possible decoder. A structural distinguisher does not
by itself solve the optimization problem, although it can undermine a proposed
hidden-structure comparison. We therefore state the search outcomes for the constructions,
parameters, and algorithms examined, rather than treating every discarded proposal as
an excluded code family.

\begin{table}[t]
\centering
\footnotesize
\setlength{\tabcolsep}{4pt}
\begin{tabular}{L{0.27\textwidth} L{0.52\textwidth} L{0.12\textwidth}}
\toprule
proposals examined & outcome of the screening or follow-up analysis & scripts\\
\midrule
Evaluation-code constructions, including Reed--Solomon, algebraic-geometry, twisted,
folded, and multivariate variants
&
The proposed objectives retained the polynomial-evaluation mechanism of OPI.
They were not pursued as a distinct mechanism for this paper.
&
10\\

Sparse and local constructions, including LDPC, polar, and locally recoverable codes
&
Depending on the construction, small dual distance violated \gate{1}(a), or the
available decoder lacked the worst-case guarantee required by \gate{2}.
These are separate reasons for setting a proposal aside.
&
10\\

Expander, Tanner, and near-MDS expander constructions
&
The available worst-case reaches did not certify an advantage at the parameters
examined. In the local constructions considered, additional dual structure also
suggested classical attacks.
&
10, literature\\

Hidden structured keys and their proposed disguises
&
For the variants examined, structural distinguishers or recovery methods exposed
features intended to remain hidden. The checks included shortening, Schur products,
and recovery of low-weight structure.
&
31, 34, 38, literature\\

BIKE, HQC, and binary Goppa parameter sets
&
The studied decoding guarantees or finite margins did not establish the proposed
advantage. The binary Goppa study included comparison with the specified
Prange-restart budget.
&
29, literature\\

Low-density parity-check objectives and partially stuck-at memory formulations
&
The former violated the density condition of \gate{1}(c) at the parameters
examined. The latter admitted a stronger subfield-linear-algebra baseline, motivating
\gate{4}.
&
15--17, 31\\

Restricted-alphabet lattice decoders and alternative maps from exponents
&
The proposals did not supply a decoder with the required reach and error-alphabet
guarantee at the chosen parameters. Some alternative maps also required an
exponentially large modulus.
&
33, 42\\

Application proposals outside the optimization framework
&
Some were not optimization problems or did not yield the required matrix-and-objective
formulation. Applications reducing to polynomial intersection are included in the
first row.
&
10\\
\midrule
Alternant max-agreement over a prime field
&
Carried forward with a secret decoding representation and the comparison model of
Section~\ref{sec:mckey}.
&
29, 30, 49\\

Log-Cauchy matrices with a cosine objective
&
Carried forward with the public signed-error decoder of
Theorem~\ref{thm:logcauchy}, using the objective-dependent rules of
Proposition~\ref{prop:alphabet}.
&
40, 41\\
\bottomrule
\end{tabular}
\caption{Grouped outcomes from the candidate catalog. Each row summarizes multiple
proposals or parameter choices, and the rows are not a count of the $109$ catalog entries.
``Literature'' denotes an assessment based on published constructions or attacks rather
than a dedicated script. The two retained candidates are analyzed in
Section~\ref{sec:problems}.}
\label{tab:search}
\end{table}

The two retained candidates change different parts of the usual OPI setting.
AMA changes access to the decoding structure. MPI
changes the objective so that its Fourier support matches
a signed-error decoder. MPI's guarantee does not require knowing the code's
Hamming distance: the relevant statement is the restricted-error license proved in
Section~\ref{sec:logcauchysep}. For both candidates, the broader classical-hardness
questions remain those stated in Section~\ref{sec:problems}, and the screening outcome
does not resolve them.

\subsection{Computational checks and stored records}\label{sec:checks}

Table~\ref{tab:checks} identifies the scripts supporting the numerical analyses retained
in the paper. Script $k$ is the Python file in \texttt{scripts/} whose name begins with
$k$, in the repository identified under Code and data availability. Stored outputs are
under \texttt{results/}, and \texttt{REPRODUCE.md} gives the commands associated with the
table.

Scalar evaluations compute values of proved
formulas. Small-instance state calculations test the implementation of the DQI
expectation formula. Decoder probes test recovery on specified inputs, while optimization
runs record assignments found within a budget. None of these numerical checks replaces
the corresponding mathematical proof.

Some scripts retain earlier terminology. In script~10, labels (D0)--(D3) correspond
to parts of \gate{1}, (D4) to \gate{2}, (D5) to \gate{3}, and (D6) to \gate{4}.
Label (D7) concerns the objective alphabet treated in
Section~\ref{sec:rulesalphabet}. Script~48 collects scalar calculations used in the
paper alongside the more specialized scripts.

\begin{table}[t]
\centering
\footnotesize
\setlength{\tabcolsep}{4pt}
\begin{tabular}{L{0.40\textwidth} L{0.18\textwidth} l L{0.22\textwidth}}
\toprule
analysis & location & scripts & type of check\\
\midrule
Reach threshold, sharp margin cap, and screening inequalities
&
\S\ref{sec:rules}, Proposition~\ref{prop:cap}
&
13, 23, 48
&
Scalar evaluations and randomized consistency checks\\

OPI finite DQI values and classical comparisons
&
\S\ref{sec:opi}, Table~\ref{tab:experiments}
&
18, 48
&
Finite spectral values and optimization records\\

Alternant parameter choices, margins, and rule evaluations
&
\S\ref{sec:mcconstruction}, Table~\ref{tab:candrules}
&
29, 30, 48
&
Scalar and finite-length calculations\\

Alternant decoding checks and public-instance solver comparisons
&
\S\ref{sec:mcexperiments}, Table~\ref{tab:experiments}
&
30, 48
&
Decoder probes and optimization records\\

Cost estimates for the attacks of~\cite{ghoshal26quasipoly}
&
\S\ref{sec:mcattacks}
&
49
&
Evaluation of the specified cost model\\

Classical optimization with the alternant key
&
\S\ref{sec:mckeyholder}
&
36
&
Optimization records\\

Reed--Solomon list-recovery threshold at the alternant parameters
&
Lemma~\ref{lem:listrecovery}
&
48
&
Scalar evaluation\\

Multiplicative polynomial intersection: rule evaluations, finite DQI values, and portfolio comparisons
&
\S\ref{sec:logcauchy}, Table~\ref{tab:experiments}
&
40, 41, 48
&
Finite values and optimization records\\

Signed-error decoding checks and code-structure calculations
&
\S\ref{sec:inmeasureproof}, \S\ref{sec:mpistructure}
&
40, 46, 48
&
Decoder probes, binomial probabilities, and ranks\\

Row-subset lattice attacks
&
\S\ref{sec:latticestudy}, Table~\ref{tab:lattice}
&
43, 48
&
Lattice runs and finite DQI comparisons\\

General-objective expectation formula
&
\S\ref{sec:objlawproofs}
&
44
&
Direct small-instance state construction\\

Subfield examples, affine-subspace counts, and equation ranks
&
\S\ref{sec:subfield}
&
15--17, 48
&
Finite-field calculations\\

Finite-length Prange-restart thresholds
&
Theorem~\ref{thm:converse}, Table~\ref{tab:candrules}
&
48
&
Evaluation using the finite DQI margin\\

Budget scaling of the generic searches
&
\S\ref{sec:scaling}, Table~\ref{tab:scaling}
&
50, 52
&
Optimization records with budgets in moves\\

Searches given the alternant key, with budgets in moves
&
\S\ref{sec:scalingkey}
&
53
&
Optimization records\\

Lattice attack at block sizes up to $60$, and its model
&
\S\ref{sec:scalinglattice}
&
51, 52
&
Lattice runs and model evaluation\\
\bottomrule
\end{tabular}
\caption{Scripts supporting the numerical analyses retained in the paper.
The DQI values are computed expectations, rather than observed quantum-sampling
averages. Attack-cost calculations are estimates within their stated models, and the
costs derived from the scaling runs are extrapolations rather than bounds.}
\label{tab:checks}
\end{table}

\paragraph{Checks of the objective law.}
Script~44 constructs small DQI states directly and compares their expected objectives
with the tridiagonal expression in Lemma~\ref{lem:cosinelaw}. The tested objectives
include indicators of five-element subsets of $\F_{11}$, the cosine over $\F_{13}$,
and the two-frequency function
\[
  F(v)=\cos(2\pi v/q)+0.3\cos(4\pi v/q).
\]
On the tested instances satisfying the separation hypothesis, the two calculations
agree to floating-point precision. These are numerical consistency checks of the
identity proved in Section~\ref{sec:objlawproofs}. Calculations on instances violating
the hypothesis lie outside that guarantee.

\paragraph{Instances and seeds.}
The reporting scripts regenerate instances from their seeds. The principal optimization
runs use seeds $2101$--$2103$ for OPI and $3101$--$3103$ for the two candidates.
The AMA instance at $m=20000$ uses the same three seeds.
Table~\ref{tab:lattice} and Section~\ref{sec:latticestudy} identify the completed lattice
runs and the failures encountered. Script~15 uses seeds $1701$--$1703$, and the
decoder probe uses seed $40$.

A seed specifies the regenerated instance. Wall-clock stopping criteria and parallel
solver execution can nevertheless change the search trajectory and best score obtained
in a rerun.

\paragraph{Saved witnesses.}
The AMA runs at $600$~s budgets and the MPI runs at $m=2000$
store the codeword $Bx$ corresponding to each solver's best assignment. Such a record
allows two independent checks: code membership can be verified by solving the linear
system $Bx=c$ for the stored vector $c$, and its objective can be recomputed from the
regenerated instance. The reported $600$~s AMA comparisons use these records.

The OPI runs and the MPI runs at $m=6000$ retain scores but not the associated
codewords. Their scores were recomputed when the assignments were found, but the saved
records do not permit the same later witness check. The runs of
Section~\ref{sec:supp-scaling} likewise store no assignments.

The AMA portfolio records are under
\texttt{results/logs-bud600/}, with filenames matching
\texttt{ext6-30-*}, and the key-aware records match \texttt{ext6-36-*}.
The longer AMA run is identified by the tag \texttt{ext6bud1800}.
The reporting commands in \texttt{REPRODUCE.md} select the records used in each table.

\subsection{Classical solvers and budgets}\label{sec:solvers}

The generic portfolio operates on the public matrix and row objectives. It consists of
Prange completions, line search, simulated annealing, and, where specified, CP-SAT.
The key-aware polynomial search and the lattice attack on MPI are additional
tests using the respective constructions' structure. Every reported score is evaluated
on the full instance, including when a solver optimizes only a subset of its rows.

\paragraph{Prange completions.}
Each restart chooses an information set, assigns accepted values on those coordinates,
and completes them to a codeword. For the cosine objective, it sets those coordinates
to their centers. The solver keeps the best result from the prescribed number of
randomized completions.

Theorem~\ref{thm:converse} applies when the information set is chosen independently
of the accepted sets or centers and the completion uses only their values on that set.
Its comparison with $2^{40}$ restarts is a theoretical bound. The much smaller numbers
of completions actually run are listed in Table~\ref{tab:solver-budgets}.

\paragraph{Line search.}
Let $k=\rank(B)$ and let $G\in\F_q^{m\times k}$ be a full-column-rank generator of
$\optcode$. For an information set $T$ of size $k$, define
\[
  L=G\,G_T^{-1}.
\]
Then $L_T=I_k$: column $j$ changes the $j$-th information coordinate and leaves the
other information coordinates unchanged. When $B$ already has full column rank,
we may take $G=B$.

Starting from a Prange completion $c\in\optcode$, the search chooses a column $j$
and evaluates the $q$ codewords
\[
  c+aL_{\cdot j},\qquad a\in\F_q.
\]
It moves to a point with the largest objective on that line and retains the best
codeword encountered.

The systematic basis can be updated by replacing the $j$-th information row with
a row $i\notin T$ whenever $L_{ij}\ne0$. This nonzero-pivot condition preserves
invertibility. After the update, moves along the other basis columns leave coordinate
$i$ unchanged. For indicator objectives, the pivot can preserve a newly satisfied
row by choosing such a row outside $T$. The search restarts after $60$ consecutive
moves without improvement on the AMA and MPI instances, and after
$40$ on the OPI instances. The runs with a fixed number of moves in
Section~\ref{sec:supp-scaling} use $60$ for all three families.

\paragraph{Simulated annealing.}
The annealer uses the same codeword-line moves, but samples the next point with
probability proportional to the exponential of its objective value times the current
inverse temperature. This heat-bath update allows moves that reduce the score.
The inverse temperature increases geometrically from $0.05$ to $4$ over the run.
As in line search, the best codeword encountered is retained.

These adaptive searches use information from the row objectives and are separate
from the restricted Prange-restart baseline of Theorem~\ref{thm:converse}.

\paragraph{CP-SAT.}
CP-SAT~\cite{ortools} is warm-started from the best assignment already found.
It is used on OPI only at $m\le210$ and is given $90$~s on the AMA instances.
At $m=14000$, its optimization model uses $3000$ rows, and the returned assignment is
then scored on all $14000$ rows. The runs on seeds $3102$ and $3103$, and the longer
budget experiment, use two workers. These are time-limited solution searches, so
their reported scores are feasible objective values rather than optimality
certificates.

\paragraph{Search with the alternant key.}
Script~36 performs line search and simulated annealing along
$\F_{p^{s}}$-lines of message polynomials, as described in
Section~\ref{sec:mckeyholder}. Both searches start from the best assignment found
by the searches using only the public matrix and accepted sets. When the extension
field has more than $2\times10^{5}$ elements, each move samples $4096$ points on its line and evaluates
each sampled point exactly. At $s=6$, this permits about $180$ moves per run,
rather than an exhaustive scan of each extension-field line.
Script~53 runs the same searches for a prescribed number of moves
(Section~\ref{sec:scalingkey}).

\paragraph{Lattice reduction.}
The row-subset lattice attack is specified in Section~\ref{sec:latticestudy}.
Its LLL and BKZ configurations, completed subsets, time budgets, and numerical
failures are reported there and in Table~\ref{tab:lattice}. Those runs are distinct
from the generic portfolio. Section~\ref{sec:scalinglattice} reports a rerun with
extended-precision arithmetic, fixed numbers of subsets and block sizes up to $60$.

\begin{table}[t]
\centering
\small
\begin{tabular}{lrrrl}
\toprule
family & $m$ & seeds & Prange completions & time per search\\
\midrule
OPI          & 400   & $2101$--$2103$ & $20000$ & $40$~s\\
OPI          & 4000  & $2101$--$2103$ & $12500$ & $40$~s\\
AMA          & 14000 & $3101$--$3103$ & $1000$  & $600$~s\\
AMA          & 20000 & $3101$--$3103$  & $1000$  & $600$~s\\
MPI & 2000 & $3101$--$3103$ & $125$ & $600$~s\\
MPI & 6000 & $3101$--$3103$ & $24$  & $40$~s and $600$~s\\
\bottomrule
\end{tabular}
\caption{Search budgets for the fixed-time comparisons of
Table~\ref{tab:experiments}. Budgets specified in moves are given in
Section~\ref{sec:supp-scaling}. The MPI instance at $h=17$ and $m=20000$ uses seeds $3101$--$3103$ and $600$~s per search.
Prange counts are per instance. The last column gives the budget for each of line
search and simulated annealing, separately. For MPI at $m=6000$,
the reported result takes the best of the $40$~s and $600$~s runs for each seed.
CP-SAT, key-aware searches, and lattice attacks have the separate budgets described
in the text. OPI, AMA and MPI denote optimal polynomial intersection, alternant
max-agreement and multiplicative polynomial intersection.}
\label{tab:solver-budgets}
\end{table}

\paragraph{Budget sensitivity and smaller instances.}
The $1800$~s budget-sensitivity runs described after Table~\ref{tab:experiments} use seed $3101$.
For the AMA instance at $m=14000$, increasing each search budget from $600$~s
to $1800$~s raises the best reported quality from $0.5144$ to $0.5157$.
For the MPI instance at $m=6000$, the corresponding best mean cosine
increases from $0.1146$ to $0.1169$, compared with DQI's finite expectation
$0.2518$.

At smaller OPI lengths, the portfolio already exceeds DQI's finite expectation.
At $m=100$, line search and annealing reach qualities $0.79$--$0.80$, compared
with $0.637$ for DQI. At $m=210$, they reach $0.729$--$0.733$, compared with
$0.674$. At $m=400$, searches with $10^6$ moves per chain match or exceed DQI's
expected quality (Section~\ref{sec:scalingconsequences}). These results support the
finite-size qualification in Section~\ref{sec:opi}.

All these comparisons concern achieved objective values at the stated budgets.
Increasing the instance size while keeping a wall-clock budget fixed also reduces
the search effort available per coordinate. The observed change in the classical
scores therefore cannot, by itself, be attributed to increasing problem hardness.

%% file: supplement/S4-scaling.tex
\section{Scaling of the classical attacks}\label{sec:supp-scaling}

The comparisons of Section~\ref{sec:problems} stop each classical search after a fixed wall-clock
time. Their scores therefore depend on the machine, and they do not show how the scores grow with
the effort spent. This section repeats the line search and the annealing with budgets counted in
moves, spanning up to five orders of magnitude, and compares the three families on
a common scale. It also reruns the searches supplied with the alternant key and the lattice attack with larger budgets.
Scripts~50--53 of the code repository produce the data. They regenerate the instances from their
seeds (Section~\ref{sec:solvers}).

\subsection{Budgets counted in moves}\label{sec:scaling}

One \emph{move} of the line search or of simulated annealing (Section~\ref{sec:solvers}) scores
the $q$ codewords $c+aL_{\cdot j}$, $a\in\F_q$, on all $m$ rows and then updates the systematic
basis. The update costs about $mn$ field operations. The scoring costs about $m\rho q$ operations
for accepted sets of density $\rho$, and about $m+q\log_2q$ for the cosine, where one fast Fourier
transform of length $q$ scores the whole line. Counting moves therefore gives a machine-independent measure of effort for a fixed instance.

Annealing is run separately at each run length $N_{\mathrm{mv}}=10^2,3\times10^2,10^3,\dots$,
because its temperature schedule depends on the run length: the inverse temperature increases
geometrically from $0.05$ to $4$ over the run. We record the best quality visited. The line search
has no such schedule. It is run for a fixed number of moves per chain, with the restart rule of
Section~\ref{sec:solvers} (restart after $60$ moves without improvement), and its best quality
so far is recorded along the run. Each run uses eight independent chains (six for OPI at $m=400$). A run length is reported once all of
its chains have finished on every seed, so the longest reported one depends on the instance. It
ranges up to $10^{7}$ moves per chain (Table~\ref{tab:scaling}).

The seeds are those of the fixed-time comparisons, $2101$--$2103$ for OPI and $3101$--$3103$ for
the two candidates. We use three seeds for OPI at $m=400$ and $4000$, AMA at $m=14000$, and MPI at $m=600$, $2000$ and $6000$, and seed $3101$, one of the three seeds of Table~\ref{tab:experiments}, for AMA at $m=20000$. Further lengths are added on
one seed, to show where the comparison with DQI changes sign: OPI at $m=1008$ and $2002$, AMA at $m=2000$ and $6000$, and MPI at $m=1000$ and $4000$.

\subsection{A common scale for the three families}\label{sec:scalinglaw}

For a Prange completion with independently drawn accepted sets or centers, the $m-n$ rows outside the information set have independent objective
values, each with variance $\sigma^2$, where $\sigma^2=\rho(1-\rho)$ for accepted sets of density
$\rho$ and $\sigma^2=\tfrac12$ for the cosine. The quality of a completion thus fluctuates around
$\Qpr$ with standard deviation $\sigma\sqrt{(1-\alpha)/m}$, and we measure every quality $Q$ in
these units:
\begin{equation}\label{eq:zscore}
  z(Q)=\frac{Q-\Qpr}{\sigma\sqrt{(1-\alpha)/m}} .
\end{equation}
As a Gaussian reference for independent sampling, we approximate completion scores by Gaussian random variables. We define the level $\hat z(N)$ by
\begin{equation}\label{eq:samplinglevel}
  N\,\overline\Phi\bigl(\hat z(N)\bigr)=1 ,
\end{equation}
where $\overline\Phi$ is the upper tail of the standard normal distribution. At this level, the expected number of exceedances among $N$ independent draws is one. To leading order
$\hat z(N)=\sqrt{2\ln N}$. For the values of $N$ in our runs the exact $\hat z(N)$ is lower than
$\sqrt{2\ln N}$ by $0.4$ to $0.6$, and the mean of the best of $N$ draws exceeds
$\hat z(N)$ by about $0.1$. Theorem~\ref{thm:converse} gives a rigorous bound for the restricted Prange-restart baseline. In these units its hypothesis reads
$z(\Qdqi^{(m)})^2\ge R^2\ln(N/\eta)/(2\sigma^2)$, which is $2\ln(N/\eta)$ for accepted sets of
density $\tfrac12$ and $4\ln(N/\eta)$ for the cosine, where Hoeffding's inequality loses a factor
of two in the exponent.

The searches are adaptive, so their scores are not independent samples. A move keeps $n-1$ information
coordinates fixed and changes the codeword on the rows outside the information set. We compare each search with the Gaussian reference for the same number of codeword evaluations: $N=qN_{\mathrm{mv}}$ for
one chain, and $N=q$ times the total number of moves for the best of the line-search chains of
one seed. This reference is a model, not a separate experiment with that many independent completions.

\begin{figure}[t]
\centering
\begin{tikzpicture}
\begin{axis}[width=0.84\textwidth,height=0.60\textwidth,clip=false,
  xlabel={$\hat z(N)$: Gaussian reference for $N$ independent completions},
  ylabel={$z$: standard deviations above $\Qpr$},
  xmin=2.8,xmax=7.2,ymin=2.0,ymax=12.4,
  legend style={at={(0.5,-0.19)},anchor=north,legend columns=3,legend cell align=left,font=\scriptsize,draw=none,/tikz/every even column/.append style={column sep=6pt}},
  tick label style={font=\footnotesize},label style={font=\small}]
\addplot[gray,line width=0.6pt,domain=2.8:7.2,samples=2,forget plot]{x};
\addplot[color=green!50!black,mark=*,mark size=1.3pt,line width=0.5pt] coordinates {(4.056,3.895) (4.306,4.305) (4.565,4.533) (4.791,4.785) (5.027,4.996) (5.234,5.171) (5.452,5.441) (5.644,5.646) (5.848,5.851)};
\addlegendentry{OPI, $m=400$}
\addplot[color=green!50!black,only marks,mark=o,mark size=1.8pt,forget plot] coordinates {(6.139,5.950) (6.139,5.950) (6.139,6.056)};
\addplot[color=green!50!black,densely dotted,line width=0.4pt,forget plot] coordinates {(2.80,5.765) (7.20,5.765)};
\node[anchor=south west,font=\tiny,text=green!50!black,inner sep=1.5pt] at (axis cs:7.20,5.765) {OPI, $400$};
\addplot[color=green!50!black,mark=square*,mark size=1.3pt,line width=0.5pt] coordinates {(4.267,4.280) (4.506,4.512) (4.755,4.629) (4.973,5.168) (5.201,5.268) (5.402,5.550) (5.614,5.633) (5.801,5.882) (5.999,6.081)};
\addlegendentry{OPI, $m=1008$}
\addplot[color=green!50!black,only marks,mark=square,mark size=1.8pt,forget plot] coordinates {(6.328,6.305)};
\addplot[color=green!50!black,densely dotted,line width=0.4pt,forget plot] coordinates {(2.80,10.176) (7.20,10.176)};
\node[anchor=north west,font=\tiny,text=green!50!black,inner sep=1.5pt] at (axis cs:7.20,10.176) {OPI, $1008$};
\addplot[color=green!50!black,mark=triangle*,mark size=1.3pt,line width=0.5pt] coordinates {(4.417,4.450) (4.649,4.750) (4.892,4.980) (5.104,5.027) (5.327,5.392) (5.523,5.581) (5.731,5.787) (5.915,6.058)};
\addlegendentry{OPI, $m=2002$}
\addplot[color=green!50!black,only marks,mark=triangle,mark size=1.8pt,forget plot] coordinates {(6.248,6.523)};
\addplot[color=green!50!black,mark=diamond*,mark size=1.3pt,line width=0.5pt] coordinates {(4.565,4.616) (4.790,4.782) (5.026,5.114) (5.233,5.277) (5.451,5.411) (5.644,5.680) (5.847,5.990)};
\addlegendentry{OPI, $m=4000$}
\addplot[color=green!50!black,only marks,mark=diamond,mark size=1.8pt,forget plot] coordinates {(6.184,6.048) (6.184,6.015) (6.184,6.015)};
\addplot[color=blue!80!black,mark=*,mark size=1.3pt,line width=0.5pt] coordinates {(3.118,3.398) (3.429,3.609) (3.743,3.855) (4.010,4.025) (4.286,4.353) (4.524,4.605) (4.773,4.897) (4.989,5.020) (5.217,5.196) (5.417,5.495)};
\addlegendentry{AMA, $m=2000$}
\addplot[color=blue!80!black,only marks,mark=o,mark size=1.8pt,forget plot] coordinates {(5.778,5.694)};
\addplot[color=blue!80!black,densely dotted,line width=0.4pt,forget plot] coordinates {(2.80,2.240) (7.20,2.240)};
\node[anchor=west,font=\tiny,text=blue!80!black,inner sep=1.5pt] at (axis cs:7.20,2.240) {AMA, $2000$};
\addplot[color=blue!80!black,mark=square*,mark size=1.3pt,line width=0.5pt] coordinates {(3.118,3.066) (3.429,3.535) (3.743,3.873) (4.010,4.032) (4.286,4.410) (4.524,4.434) (4.773,4.870) (4.989,5.292) (5.217,5.255)};
\addlegendentry{AMA, $m=6000$}
\addplot[color=blue!80!black,only marks,mark=square,mark size=1.8pt,forget plot] coordinates {(5.590,5.404)};
\addplot[color=blue!80!black,densely dotted,line width=0.4pt,forget plot] coordinates {(2.80,4.915) (7.20,4.915)};
\node[anchor=north west,font=\tiny,text=blue!80!black,inner sep=1.5pt] at (axis cs:7.20,4.915) {AMA, $6000$};
\addplot[color=blue!80!black,mark=triangle*,mark size=1.3pt,line width=0.5pt] coordinates {(3.118,3.177) (3.429,3.520) (3.743,3.831) (4.010,4.095) (4.286,4.331) (4.524,4.641) (4.773,4.853) (4.989,5.076)};
\addlegendentry{AMA, $m=14000$}
\addplot[color=blue!80!black,only marks,mark=triangle,mark size=1.8pt,forget plot] coordinates {(5.224,5.280) (5.224,5.121) (5.224,5.634)};
\addplot[color=blue!80!black,densely dotted,line width=0.4pt,forget plot] coordinates {(2.80,8.373) (7.20,8.373)};
\node[anchor=west,font=\tiny,text=blue!80!black,inner sep=1.5pt] at (axis cs:7.20,8.373) {AMA, $14000$};
\addplot[color=blue!80!black,mark=diamond*,mark size=1.3pt,line width=0.5pt] coordinates {(3.118,3.100) (3.429,3.498) (3.743,4.024) (4.010,4.074) (4.286,4.253) (4.524,4.727)};
\addlegendentry{AMA, $m=20000$}
\addplot[color=blue!80!black,only marks,mark=diamond,mark size=1.8pt,forget plot] coordinates {(4.946,5.038)};
\addplot[color=blue!80!black,densely dotted,line width=0.4pt,forget plot] coordinates {(2.80,10.261) (7.20,10.261)};
\node[anchor=south west,font=\tiny,text=blue!80!black,inner sep=1.5pt] at (axis cs:7.20,10.261) {AMA, $20000$};
\addplot[color=orange!90!black,mark=*,mark size=1.3pt,line width=0.5pt] coordinates {(4.713,4.324) (4.932,4.567) (5.162,4.756) (5.364,4.974) (5.577,5.178) (5.766,5.333) (5.965,5.542) (6.142,5.737) (6.331,5.932) (6.498,6.081) (6.677,6.228)};
\addlegendentry{MPI, $m=600$}
\addplot[color=orange!90!black,only marks,mark=o,mark size=1.8pt,forget plot] coordinates {(6.975,6.853) (6.975,6.818) (6.975,6.858)};
\addplot[color=orange!90!black,densely dotted,line width=0.4pt,forget plot] coordinates {(2.80,5.513) (7.20,5.513)};
\node[anchor=west,font=\tiny,text=orange!90!black,inner sep=1.5pt] at (axis cs:7.20,5.513) {MPI, $600$};
\addplot[color=orange!90!black,mark=square*,mark size=1.3pt,line width=0.5pt] coordinates {(4.713,4.491) (4.932,4.501) (5.162,4.639) (5.364,4.964) (5.577,5.054) (5.766,5.126) (5.965,5.491) (6.142,5.548) (6.331,5.925) (6.498,6.063)};
\addlegendentry{MPI, $m=1000$}
\addplot[color=orange!90!black,only marks,mark=square,mark size=1.8pt,forget plot] coordinates {(6.804,6.519)};
\addplot[color=orange!90!black,densely dotted,line width=0.4pt,forget plot] coordinates {(2.80,7.594) (7.20,7.594)};
\node[anchor=west,font=\tiny,text=orange!90!black,inner sep=1.5pt] at (axis cs:7.20,7.594) {MPI, $1000$};
\addplot[color=orange!90!black,mark=triangle*,mark size=1.3pt,line width=0.5pt] coordinates {(4.713,4.438) (4.932,4.573) (5.162,4.627) (5.364,4.689) (5.577,4.845) (5.766,5.073) (5.965,5.168) (6.142,5.415) (6.331,5.637) (6.498,5.775)};
\addlegendentry{MPI, $m=2000$}
\addplot[color=orange!90!black,only marks,mark=triangle,mark size=1.8pt,forget plot] coordinates {(6.804,6.479) (6.804,6.825) (6.804,6.656)};
\addplot[color=orange!90!black,densely dotted,line width=0.4pt,forget plot] coordinates {(2.80,11.363) (7.20,11.363)};
\node[anchor=west,font=\tiny,text=orange!90!black,inner sep=1.5pt] at (axis cs:7.20,11.363) {MPI, $2000$};
\addplot[color=orange!90!black,mark=diamond*,mark size=1.3pt,line width=0.5pt] coordinates {(4.713,4.515) (4.932,4.670) (5.162,4.637) (5.364,4.769) (5.577,4.861) (5.766,4.770) (5.965,4.991) (6.142,5.049) (6.331,5.327)};
\addlegendentry{MPI, $m=4000$}
\addplot[color=orange!90!black,only marks,mark=diamond,mark size=1.8pt,forget plot] coordinates {(6.644,6.689)};
\addplot[color=orange!90!black,mark=pentagon*,mark size=1.3pt,line width=0.5pt] coordinates {(4.713,4.692) (4.932,4.663) (5.162,4.641) (5.364,4.698) (5.577,4.702) (5.766,4.730) (5.965,4.896) (6.142,4.891) (6.331,5.010)};
\addlegendentry{MPI, $m=6000$}
\addplot[color=orange!90!black,only marks,mark=pentagon,mark size=1.8pt,forget plot] coordinates {(6.644,6.330) (6.644,6.534) (6.644,6.675)};
\end{axis}
\end{tikzpicture}
\caption{Classical searches and DQI on the scale of \eqref{eq:zscore}. Filled marks: simulated
annealing, mean over chains and seeds of the best quality visited, one mark per run length, with
$N=qN_{\mathrm{mv}}$. Open marks: the best line-search chain of each seed at the end of the
runs, with $N=q$ times the total number of moves over all line-search chains of the seed. The horizontal coordinate is the level
$\hat z(N)$ of \eqref{eq:samplinglevel}. The gray line marks agreement with the Gaussian reference.
Dotted lines: $z(\Qdqi^{(m)})$ for the instance named at the right. The levels of the other
instances lie outside the plotted range (Table~\ref{tab:scalingz}). OPI, AMA and MPI denote optimal polynomial intersection, alternant
max-agreement and multiplicative polynomial intersection.}
\label{fig:scaling}
\end{figure}

Figure~\ref{fig:scaling} and Tables~\ref{tab:scaling} and~\ref{tab:scalingz} show three things.

First, for OPI and AMA the searches perform close to the Gaussian reference. Over all run lengths of at least $10^3$ moves, the annealing means lie between
$-0.1$ and $+0.3$ standard deviations from $\hat z(N)$. Along the line-search runs, from
$10^3$ moves on, the mean over chains of the best quality so far stays between $-0.1$ and
$+0.4$ from $\hat z(qN_{\mathrm{mv}})$. The best chain of a seed fluctuates more. At the end
of the runs it lies between $-0.2$ and $+0.4$ from $\hat z(N)$. Earlier in the runs
it is up to $1.4$ above, where a single chain reaches its final quality early
(AMA at $m=2000$ and $m=14000$, after about $1.5\times10^{3}$ moves). The observed scores thus remain close to the reference, with a largest gain of
about $1.4$ standard deviations.

For MPI the line search is at the same level: along the runs
the mean over chains stays between $-0.3$ and $+0.1$ from
$\hat z(qN_{\mathrm{mv}})$, and at the end of the runs the best chains lie between
$-0.3$ and $0.0$ from $\hat z(N)$. The annealing means lie below $\hat z(N)$, by
$0.45$, $0.72$ and $1.32$ standard deviations at the longest run lengths for $m=600$, $2000$ and $6000$. At $m=6000$ the annealing means are nearly flat: $z$ rises only from $4.70$ to $5.01$ between $10^{4}$ and $10^{6}$ moves, while $\hat z(N)$ rises from $5.58$ to $6.33$.
Across these runs, the observed gains over $\Qpr$ are of order
$\sigma\,\hat z(N)/\sqrt m$ in quality, consistent with the scale of finite-size fluctuations.

Second, DQI's margin over $\Qpr$ tends to a positive constant, so $z(\Qdqi^{(m)})$ grows like
$\sqrt m$. If the searches continue to follow the reference curve, this scaling favors DQI as the length increases at fixed budget. Within our budgets the searches reach DQI's value for OPI at $m=400$, AMA at $m=2000$ and $6000$, and MPI at $m=600$, where
$z(\Qdqi^{(m)})$ is at most about $6$. They remain below it at the other lengths.

Third, the observed agreement with the reference curve suggests a conditional estimate of the effort needed for these searches to reach DQI's value: the number $N^*$ of independent completions with $\hat z(N^*)=z(\Qdqi^{(m)})$, that is,
$N^*=1/\overline\Phi(z(\Qdqi^{(m)}))$. Table~\ref{tab:scalingz} lists it together with the
corresponding number of field operations, $N^*$ times the cost of a move divided by $q$. $N^*$
estimates the effort of the tested searches if they continue to follow $\hat z(N)$ beyond the
tested range, which ends at $\hat z(N)=6.33$ for OPI and AMA and at
$6.98$ for MPI. Annealing on MPI lies below $\hat z(N)$ and
would need more. The estimate is sensitive to how far a search stays above the reference curve. A search that stays $\Delta$ standard
deviations above $\hat z(N)$ needs fewer codewords by a factor of about $e^{\Delta z}$, which is
$2^{14}$ per standard deviation at $z=10$. These values extrapolate an empirical regularity. They
are neither lower bounds on classical algorithms nor statements about algorithms that exploit the
structure of an instance.

\begin{table}[t]
\centering
\small
\begin{tabular}{lrrrrrrrl}
\toprule
 & & & & \multicolumn{3}{c}{annealing} & \multicolumn{2}{c}{line search}\\
\cmidrule(lr){5-7}\cmidrule(lr){8-9}
family & $m$ & seeds & $\Qdqi^{(m)}$ & $N_{\mathrm{mv}}$ & mean & best & moves & best chain\\
\midrule
OPI & 400 & 3 & 0.6856 & $10^{6}$ & 0.6876 & 0.7025 & $6\times10^{6}$ & 0.6900--0.6925\\
OPI & 1008 & 1 & 0.7017 & $10^{6}$ & 0.6405 & 0.6448 & $8\times10^{6}$ & 0.6438\\
OPI & 2002 & 1 & 0.7067 & $3\times10^{5}$ & 0.6139 & 0.6234 & $2.4\times10^{6}$ & 0.6189\\
OPI & 4000 & 3 & 0.7109 & $10^{5}$ & 0.5948 & 0.5990 & $8\times10^{5}$ & 0.5950--0.5953\\
\addlinespace
AMA & 2000 & 1 & 0.5226 & $3\times10^{6}$ & 0.5574 & 0.5650 & $2.4\times10^{7}$ & 0.5595\\
AMA & 6000 & 1 & 0.5285 & $10^{6}$ & 0.5306 & 0.5347 & $8\times10^{6}$ & 0.5315\\
AMA & 14000 & 3 & 0.5321 & $3\times10^{5}$ & 0.5187 & 0.5212 & $1.04\times10^{6}$ & 0.5189--0.5210\\
AMA & 20000 & 1 & 0.5329 & $3\times10^{4}$ & 0.5142 & 0.5165 & $2.4\times10^{5}$ & 0.5153\\
\addlinespace
MPI & 600 & 3 & 0.2235 & $10^{7}$ & 0.2434 & 0.2539 & $8\times10^{7}$ & 0.2598--0.2609\\
MPI & 1000 & 1 & 0.2338 & $3\times10^{6}$ & 0.2007 & 0.2043 & $2.4\times10^{7}$ & 0.2106\\
MPI & 2000 & 3 & 0.2433 & $3\times10^{6}$ & 0.1580 & 0.1639 & $2.4\times10^{7}$ & 0.1688--0.1741\\
MPI & 4000 & 1 & 0.2493 & $10^{6}$ & 0.1274 & 0.1298 & $8\times10^{6}$ & 0.1421\\
MPI & 6000 & 3 & 0.2518 & $10^{6}$ & 0.1141 & 0.1178 & $8\times10^{6}$ & 0.1257--0.1288\\
\bottomrule
\end{tabular}
\caption{Budget scaling of the generic searches: qualities at the longest completed run length.
Annealing: number $N_{\mathrm{mv}}$ of moves per chain, mean over chains and seeds of the best
quality visited, and best chain. Line search: total number of moves over the chains of one seed,
and best chain of each seed, as a range over seeds. OPI, AMA and MPI denote optimal polynomial intersection, alternant
max-agreement and multiplicative polynomial intersection.}
\label{tab:scaling}
\end{table}

\begin{table}[t]
\centering
\small
\begin{tabular}{lrrrrrrrr}
\toprule
 & & & \multicolumn{2}{c}{annealing} & \multicolumn{2}{c}{line search} & \multicolumn{2}{c}{to reach $\Qdqi^{(m)}$}\\
\cmidrule(lr){4-5}\cmidrule(lr){6-7}\cmidrule(lr){8-9}
family & $m$ & $z(\Qdqi^{(m)})$ & $z$ & $\hat z(N)$ & $z$ & $\hat z(N)$ & $\log_2N^*$ & $\log_2$ ops\\
\midrule
OPI & 400 & 5.76 & 5.85 & 5.85 & 5.99 & 6.14 & 28 & 36\\
OPI & 1008 & 10.18 & 6.08 & 6.00 & 6.31 & 6.33 & 79 & 89\\
OPI & 2002 & 14.81 & 6.06 & 5.91 & 6.52 & 6.25 & 163 & 174\\
OPI & 4000 & 21.47 & 5.99 & 5.85 & 6.03 & 6.18 & 338 & 349\\
\addlinespace
AMA & 2000 & 2.24 & 5.49 & 5.42 & 5.69 & 5.78 & 6 & 21\\
AMA & 6000 & 4.92 & 5.26 & 5.22 & 5.40 & 5.59 & 21 & 39\\
AMA & 14000 & 8.37 & 5.08 & 4.99 & 5.35 & 5.22 & 55 & 75\\
AMA & 20000 & 10.26 & 4.73 & 4.52 & 5.04 & 4.95 & 81 & 102\\
\addlinespace
MPI & 600 & 5.51 & 6.23 & 6.68 & 6.84 & 6.98 & 26 & 30\\
MPI & 1000 & 7.59 & 6.06 & 6.50 & 6.52 & 6.80 & 46 & 50\\
MPI & 2000 & 11.36 & 5.77 & 6.50 & 6.65 & 6.80 & 98 & 104\\
MPI & 4000 & 16.63 & 5.33 & 6.33 & 6.69 & 6.64 & 205 & 212\\
MPI & 6000 & 20.65 & 5.01 & 6.33 & 6.51 & 6.64 & 313 & 322\\
\bottomrule
\end{tabular}
\caption{The runs of Table~\ref{tab:scaling} on the scale of \eqref{eq:zscore}. For each search,
$z$ is the value of the tabulated quality (the mean for annealing, and for line search the mean over seeds
of the best chain) and $\hat z(N)$ the level of \eqref{eq:samplinglevel} for the codewords scored.
$N^*=1/\overline\Phi(z(\Qdqi^{(m)}))$ is the Gaussian estimate of the number of independent completions needed to reach
DQI's value. The last column gives the corresponding number of field operations under the move-cost model. Both are
extrapolations. OPI, AMA and MPI denote optimal polynomial intersection, alternant
max-agreement and multiplicative polynomial intersection.}
\label{tab:scalingz}
\end{table}

\subsection{Consequences for the candidates}\label{sec:scalingconsequences}

\paragraph{OPI.}
At $m=400$ the fixed-time portfolio of Table~\ref{tab:experiments} is below DQI. With
$10^{6}$ moves per chain the annealing mean is $0.6876$, with standard error $0.0013$, and
the best line-search chains reach 0.6900--0.6925 on the three seeds, against
$\Qdqi^{(m)}=0.6856$. The searches thus reach or exceed DQI's value, as
$z(\Qdqi^{(m)})=5.76$ against $\hat z(N)=5.85$ indicates. At $m=1008$, annealing with $10^{6}$ moves reaches a mean of $0.6405$, against $0.7017$. At $m=2002$, annealing with $3\times10^{5}$ moves reaches a mean of $0.6139$, against $0.7067$. At $m=4000$, annealing with $10^{5}$ moves reaches a mean of $0.5948$, against $0.7109$. The length at which DQI overtakes the classical searches therefore depends on the budget.

\paragraph{AMA.}
At $m=14000$, annealing with $3\times10^{5}$ moves reaches a mean quality of $0.5187$ over the three
seeds (best chain $0.5212$), and the best line-search chains of the three seeds reach 0.5189--0.5210 with $1.04\times10^{6}$ moves per seed, against $\Qdqi^{(m)}=0.5321$. Here
$z(\Qdqi^{(m)})=8.37$, and $N^*$ is about $2^{55}$ codewords, or $2^{75}$ field
operations. The quality $\Qdqi^{(m)}-0.01\simeq0.522$ has $z=5.90$, corresponding to about
$2^{29}$ independent completions under the Gaussian reference model. On seed $3101$ alone, annealing with $10^{6}$ moves gives a mean of $0.5193$ (best chain $0.5201$) and line search with $3\times10^{5}$ moves per chain a best chain of $0.5215$. At $m=2000$ and $6000$ the searches exceed DQI's value: annealing means $0.5574$ and $0.5306$ against $0.5226$ and $0.5285$.

Table~\ref{tab:scalinglengths} repeats this calculation for longer keys of the same family. Taken
at face value, the estimated cost of reaching $\Qdqi^{(m)}-0.01$ exceeds $2^{80}$ field
operations from $m=30000$, and the number of codewords alone exceeds $2^{80}$ from $m=40000$. If
a search keeps the largest lead over $\hat z(N)$ seen in our runs, $1.4$ standard deviations,
the estimate at $m=40000$ drops to $2^{69}$ codewords and $2^{93}$ field
operations, and the estimate at $m=30000$ to $2^{71}$ field operations. A field operation
costs at least one bit operation. For this reason the hardness assumptions of
Section~\ref{sec:mchardness}, whose budget is $2^{80}$ bit operations, are stated at $m=40000$
rather than at the lengths of our experiments. No instance of that length was generated: the
entries of Table~\ref{tab:scalinglengths} use the exact finite DQI value and the extrapolation
above.

\begin{table}[t]
\centering
\small
\begin{tabular}{rrrrrrrrr}
\toprule
 & & & & \multicolumn{2}{c}{reach $\Qdqi^{(m)}$} & \multicolumn{3}{c}{reach $\Qdqi^{(m)}-0.01$}\\
\cmidrule(lr){5-6}\cmidrule(lr){7-9}
$m$ & $r$ & $\ell$ & $\Qdqi^{(m)}$ & $z$ & $\log_2N^*$ & $z$ & $\log_2N^*$ & $\log_2$ ops\\
\midrule
14000 & 187 & 93 & 0.5321 & 8.37 & 55 & 5.90 & 29 & 49\\
20000 & 267 & 133 & 0.5329 & 10.26 & 81 & 7.30 & 43 & 64\\
30000 & 400 & 199 & 0.5335 & 12.81 & 123 & 9.18 & 65 & 88\\
40000 & 533 & 266 & 0.5340 & 15.02 & 168 & 10.83 & 89 & 113\\
50000 & 667 & 333 & 0.5343 & 16.92 & 212 & 12.24 & 113 & 137\\
\bottomrule
\end{tabular}
\caption{Alternant keys with $p=11$, $s=6$, $\alpha\simeq0.08$, $n=sr$ and accepted sets of
size $5$, decoded to radius $\ell$. For the exact finite DQI quality and for that quality minus
$0.01$: the value $z$ on the scale of \eqref{eq:zscore}, and the number $N^*=1/\overline\Phi(z)$
of independent completions estimated to reach it under the Gaussian reference model. The last column is the corresponding number of
field operations. The estimates extrapolate the regularity of Figure~\ref{fig:scaling} and assume
no offset from $\hat z(N)$.}
\label{tab:scalinglengths}
\end{table}

\paragraph{MPI.}
At $m=2000$, annealing with $3\times10^{6}$ moves reaches a mean cosine of $0.1580$, and the best line-search chains of the three seeds reach 0.1688--0.1741 with $2.4\times10^{7}$ moves per seed, against
$\Qdqi^{(m)}=0.2433$. At $m=6000$, annealing with $10^{6}$ moves reaches $0.1141$, and the best line-search chains of the three seeds reach 0.1257--0.1288 with $8\times10^{6}$ moves per seed,
against $0.2518$. The corresponding $N^*$ are about $2^{98}$ and $2^{313}$ codewords. The
target of Assumption~(C), $\Qdqi^{(m)}-0.01\simeq0.2418$ at $m=6000$, has $z=19.51$, corresponding to about $2^{280}$ independent completions under the Gaussian reference model. At $m=600$ the generic searches alone
exceed DQI's value.

\subsection{Searches with the alternant key}\label{sec:scalingkey}

The searches supplied with the alternant key in Section~\ref{sec:mckeyholder} made about $180$ moves per run. Script~53
runs them for $10000$ moves from the same starting assignments. For comparison, it also runs the
line search of Section~\ref{sec:scaling}, which uses only the public matrix and accepted sets, from the same assignments for the same
number of moves (Table~\ref{tab:scalingkey}). Neither search using the key improves the score of any of the three
starting assignments. The line search without the key improves two of the three.

The different moves help explain this outcome. A move using the key keeps
$r-1=186$ row values fixed and changes the others, whereas a move of the search without the key keeps
$n-1=1121$ fixed. The qualities scored by a move using the key are therefore centered near
$\rho+(1-\rho)(r-1)/m\simeq0.462$, about $9$ standard deviations below $\Qpr$.
When started from a random polynomial, the line search using the key stays below $\Qpr\simeq0.498$. In this experiment, moves using the key give the local searches no advantage. The experiment does not test other
uses of the key.

\begin{table}[t]
\centering
\small
\begin{tabular}{rrrrrr}
\toprule
 & & \multicolumn{3}{c}{with the key} & without the key\\
\cmidrule(lr){3-5}\cmidrule(lr){6-6}
seed & start & line search & annealing & line search, random start & line search\\
\midrule
3101 & 0.5144 & 0.5144 & 0.5144 & 0.4861 & 0.5162\\
3102 & 0.5151 & 0.5151 & 0.5151 & 0.4859 & 0.5156\\
3103 & 0.5171 & 0.5171 & 0.5171 & 0.4841 & 0.5171\\
\bottomrule
\end{tabular}
\caption{Alternant max-agreement at $m=14000$: best quality after $10000$ moves, one run
of each search per seed. The first three searches move along $\F_{p^s}$-lines of message
polynomials and score the current point and $4096$ random points of each line. The last moves along $\F_p$-lines of the
public code and scores their $11$ points. All but the third start from the best assignment of
the searches using only the public matrix and accepted sets. $\Qdqi^{(m)}=0.5321$.}
\label{tab:scalingkey}
\end{table}

\subsection{The lattice attack at larger block sizes}\label{sec:scalinglattice}

Section~\ref{sec:latticestudy} ran the row-subset attack within a time budget and reported the
best of the subsets examined. Script~51 fixes the number of subsets instead and reports the mean
over subsets, which estimates the score from a randomly chosen subset rather than the best subset found. It uses BKZ~2.0 with pruned enumeration~\cite{chen11bkz}, and floating-point types with
$64$ to $212$ bits of mantissa in dimensions above $160$, which avoid the numerical failures
reported in Section~\ref{sec:latticestudy}. Each subset is reduced progressively
through the block sizes of Table~\ref{tab:scalinglattice} (at $m=2000$, seeds $3102$ and $3103$ skip block size $30$), with at most eight tours
per block size and a time limit of $7200$~s per block size, tested after each tour. The
reductions at the larger block sizes have therefore not converged, and the entries marked
$\dagger$ were stopped by the time limit.

\begin{table}[t]
\centering
\footnotesize
\setlength{\tabcolsep}{4pt}
\begin{tabular}{rrrllllll}
\toprule
$m$ & $n$ & $\Qdqi^{(m)}$ & LLL & BKZ-20 & BKZ-30 & BKZ-40 & BKZ-50 & BKZ-60\\
\midrule
600 & 42 & 0.2235 & 0.203 (160) & 0.211 (130) & \textbf{0.242} (260) & \textbf{0.240} (160) & \textbf{0.249} (260) & \textbf{0.250} (260)\\
1000 & 70 & 0.2338 & 0.160 (340) & 0.173 (280) & 0.178 (180) & 0.186 (220) & 0.196 (280) & 0.204 (340)\\
2000 & 140 & 0.2433 & 0.109 (220) & 0.115 (260) & 0.126 (350) & 0.131 (500) & 0.141 (350) & 0.148$^\dagger$ (400)\\
6000 & 420 & 0.2518 & 0.087 (480) & 0.083 (480) & 0.081 (560) & 0.084 (560) & 0.086$^\dagger$ (560) & --\\
\bottomrule
\end{tabular}
\caption{Row-subset lattice attack on multiplicative polynomial intersection ($M=8191$,
$\alpha=0.07$): mean cosine on all rows, averaged over subsets, at the best of the subset sizes
$m'$ tested (in parentheses), for each block size. The sizes tested are six from $80$ to $260$ at $m=600$, six from $110$ to $340$ at $m=1000$, six from $220$ to $500$ at $m=2000$ and four from $480$ to $800$ at $m=6000$. Seed $3101$ at $m=600$, $1000$ and $6000$. At $m=2000$ the BKZ-20 and BKZ-60 entries pool seeds $3101$--$3103$, and the LLL, BKZ-30, BKZ-40 and BKZ-50 entries are from seed $3101$ alone.
Each mean is over $8$ to $28$ subsets, with standard errors between
$0.002$ and $0.007$. Taking the best of several sizes raises an entry by about one
standard error. Bold entries exceed DQI's exact finite value. $\dagger$: more than half of the
subsets reached the time limit at this block size. A dash marks a block size that was not
completed.}
\label{tab:scalinglattice}
\end{table}

At $m=600$ the mean rises from $0.203$ with LLL to $0.250$ with BKZ-$60$ and exceeds
DQI's $0.2235$ from block size $30$. At $m=1000$ it rises from $0.160$ to $0.204$ with BKZ-$60$, below DQI's $0.2338$. At $m=2000$ it rises from $0.109$ to $0.148$ with BKZ-$60$, against $0.2433$. At $m=6000$, where the time limit is reached from block size $50$, the means stay between $0.081$ and $0.087$, against $0.2518$.

\begin{figure}[t]
\centering
\begin{tikzpicture}
\begin{axis}[width=0.72\textwidth,height=0.44\textwidth,xmode=log,clip=false,
  xlabel={block size $\beta$},ylabel={mean cosine on all rows},
  xmin=1.6,xmax=2000,ymin=0.05,ymax=0.33,
  xtick={2,20,50,100,200,500,1000},xticklabels={LLL,20,50,100,200,500,1000},
  yticklabel style={/pgf/number format/fixed,/pgf/number format/precision=2},scaled y ticks=false,
  legend pos=south east,legend style={font=\scriptsize,draw=none,fill=none},
  tick label style={font=\footnotesize},label style={font=\small}]
\addplot[color=blue!80!black,only marks,mark=*,mark size=1.3pt,error bars/.cd,y dir=both,y explicit] coordinates {(2,0.2030) +- (0,0.0061) (20,0.2110) +- (0,0.0044) (30,0.2419) +- (0,0.0058) (40,0.2400) +- (0,0.0052) (50,0.2495) +- (0,0.0069) (60,0.2502) +- (0,0.0052)};
\addlegendentry{$m=600$}
\addplot[color=blue!80!black,line width=0.5pt,forget plot] coordinates {(50,0.2467) (57,0.2503) (65,0.2547) (73,0.2592) (84,0.2654) (95,0.2714) (108,0.2781) (123,0.2855) (140,0.2934) (159,0.3016) (180,0.3099) (205,0.3191) (233,0.3286)};
\addplot[color=blue!80!black,densely dotted,line width=0.4pt,forget plot] coordinates {(1.6,0.2235) (2000,0.2235)};
\node[anchor=west,font=\tiny,text=blue!80!black,inner sep=1.5pt] at (axis cs:2000,0.2235) {$600$};
\addplot[color=orange!90!black,only marks,mark=*,mark size=1.3pt,error bars/.cd,y dir=both,y explicit] coordinates {(2,0.1601) +- (0,0.0057) (20,0.1731) +- (0,0.0059) (30,0.1780) +- (0,0.0049) (40,0.1856) +- (0,0.0042) (50,0.1959) +- (0,0.0052) (60,0.2038) +- (0,0.0049)};
\addlegendentry{$m=1000$}
\addplot[color=orange!90!black,line width=0.5pt,forget plot] coordinates {(50,0.1993) (57,0.2024) (65,0.2062) (73,0.2100) (84,0.2153) (95,0.2205) (108,0.2264) (123,0.2328) (140,0.2397) (159,0.2468) (180,0.2543) (205,0.2624) (233,0.2709) (265,0.2798) (302,0.2892) (343,0.2988)};
\addplot[color=orange!90!black,densely dotted,line width=0.4pt,forget plot] coordinates {(1.6,0.2338) (2000,0.2338)};
\node[anchor=west,font=\tiny,text=orange!90!black,inner sep=1.5pt] at (axis cs:2000,0.2338) {$1000$};
\addplot[color=green!50!black,only marks,mark=*,mark size=1.3pt,error bars/.cd,y dir=both,y explicit] coordinates {(2,0.1088) +- (0,0.0038) (20,0.1152) +- (0,0.0023) (30,0.1265) +- (0,0.0026) (40,0.1306) +- (0,0.0035) (50,0.1412) +- (0,0.0024) (60,0.1485) +- (0,0.0022)};
\addlegendentry{$m=2000$}
\addplot[color=green!50!black,line width=0.5pt,forget plot] coordinates {(50,0.1470) (57,0.1494) (65,0.1522) (73,0.1553) (84,0.1595) (95,0.1635) (108,0.1681) (123,0.1732) (140,0.1787) (159,0.1844) (180,0.1903) (205,0.1969) (233,0.2038) (265,0.2111) (302,0.2189) (343,0.2268) (390,0.2352) (443,0.2439) (504,0.2530) (573,0.2625)};
\addplot[color=green!50!black,densely dotted,line width=0.4pt,forget plot] coordinates {(1.6,0.2433) (2000,0.2433)};
\node[anchor=west,font=\tiny,text=green!50!black,inner sep=1.5pt] at (axis cs:2000,0.2433) {$2000$};
\addplot[color=red!70!black,only marks,mark=*,mark size=1.3pt,error bars/.cd,y dir=both,y explicit] coordinates {(2,0.0870) +- (0,0.0027) (20,0.0832) +- (0,0.0023) (30,0.0811) +- (0,0.0029) (40,0.0836) +- (0,0.0028) (50,0.0864) +- (0,0.0022)};
\addlegendentry{$m=6000$}
\addplot[color=red!70!black,line width=0.5pt,forget plot] coordinates {(50,0.0960) (57,0.0971) (65,0.0982) (73,0.0996) (84,0.1014) (95,0.1032) (108,0.1054) (123,0.1079) (140,0.1107) (159,0.1136) (180,0.1171) (205,0.1209) (233,0.1252) (265,0.1300) (302,0.1352) (343,0.1405) (390,0.1462) (443,0.1521) (504,0.1585) (573,0.1651) (651,0.1719) (741,0.1793) (842,0.1868) (957,0.1947) (1089,0.2031) (1238,0.2117) (1407,0.2207)};
\addplot[color=red!70!black,densely dotted,line width=0.4pt,forget plot] coordinates {(1.6,0.2518) (2000,0.2518)};
\node[anchor=west,font=\tiny,text=red!70!black,inner sep=1.5pt] at (axis cs:2000,0.2518) {$6000$};
\end{axis}
\end{tikzpicture}
\caption{The row-subset lattice attack on multiplicative polynomial intersection: mean cosine on all rows at the best subset size. Marks:
measured means with their standard errors (Table~\ref{tab:scalinglattice}). Curves: the model
with $\delta_0(\beta)$ from the standard estimate, drawn while the block is shorter than the
decaying part of the profile. Dotted: DQI's exact finite value at the length given at the right.}
\label{fig:scalinglattice}
\end{figure}

\paragraph{A model of the attack.}
Consider the lattice $\Lambda_J$ of a subset of $m'$ rows (Section~\ref{sec:latticestudy}), and
let $w_1^*,\dots,w_{m'}^*$ be the Gram--Schmidt vectors of its reduced basis. Their norms have the
shape usual for such lattices: a first group of norm $M$, then a geometric decay in which
$\log_2\lVert w_i^*\rVert$ decreases by $2\log_2\delta_0$ per index, where $\delta_0$ is the
root-Hermite factor of the reduction, and a last group of norm $1$. The product of the norms is
$M^{m'-n}$. We model the residual of Babai's nearest-plane algorithm as uniform in the box spanned
by the $w_i^*$. The directions with $\lVert w_i^*\rVert=M$ then contribute zero mean cosine, as do
the $m-m'$ rows outside the subset. Let $m''$ be the number of the other directions and
$\overline{w^2}$ the mean of $\lVert w_i^*\rVert^2$ over them. The predicted mean cosine on all
rows is
\[
  \frac{m''}{m}\,\exp\!\Bigl(-\frac{\pi^2\,\overline{w^2}}{6M^2}\Bigr),
\]
maximized over $m'$.

The model has no fitted parameter. We first evaluate it on the Gram--Schmidt norms of the
reduced bases themselves and compare with the measured means of all 128 configurations
(length, subset size, block size). At $m=600$ and $m=1000$ the root-mean-square difference is
$0.007$ and $0.006$, the size of the standard errors of the means. At $m=2000$ and
$m=6000$ it is $0.016$ and $0.021$, several standard errors. There the largest
differences, up to $0.044$, are overestimates by the model for large subsets whose reduced
bases still contain Gram--Schmidt norms above $M$, which the evaluation treats like norms equal to
$M$: LLL on subsets of $300$ rows or more at $m=2000$, and subsets of $660$ rows or more at every block size at $m=6000$. For $m'=660$ at $m=6000$ the measured mean falls from $0.065$ with LLL to $0.046$ with BKZ-$50$ while the model rises from $0.081$ to $0.087$. In those cases the model errs in favor of the attack. The measured root-Hermite
factors, from the slope of the decaying part, are $1.0222$ for LLL and $1.0186$, $1.0161$, $1.0143$, $1.0128$ and $1.0119$ for BKZ with
block sizes $20$, $30$, $40$, $50$ and $60$.

For larger block sizes we use the standard estimate
$\delta_0(\beta)=\bigl((\pi\beta)^{1/\beta}\beta/(2\pi e)\bigr)^{1/(2(\beta-1))}$~\cite{chen13thesis}.
It gives $1.0121$ and $1.0115$ at block sizes $50$ and $60$, slightly below the measured
factors, since the runs stop after eight tours. With this estimate and the profile shape above, the model predicts mean cosines of $0.247$ and $0.252$ at block sizes $50$ and $60$ for $m=600$, compared with measured values of $0.249$ and $0.250$. For $m=1000$, the predictions are $0.199$ and $0.204$, compared with $0.196$ and $0.204$. For $m=2000$, they are $0.147$ and $0.150$, compared with $0.141$ and $0.148$. At $m=6000$, the prediction at block size $50$ is $0.096$, compared with $0.086$. Thus the model agrees within about one standard error for $m\le1000$ and overestimates the measured means at $m=2000$ and $m=6000$ (Figure~\ref{fig:scalinglattice}).

The predicted mean reaches DQI's value at block sizes $71$, $125$, $278$, $439$ and $812$ for $m=800$, $1000$, $1500$, $2000$ and $3000$. At $m=600$
a block size of at most $50$ suffices. The block size grows faster than linearly with the
length. The estimate $\delta_0(\beta)$ presupposes a block shorter than the decaying part of the
profile, which has $207$, $265$, $410$, $555$ and $845$ vectors at these block sizes. The block is thus $34\%$, $47\%$, $68\%$, $79\%$ and $96\%$ of
the decaying part, against at most $33\%$ for the measured points of
Figure~\ref{fig:scalinglattice}, so the values for $m\ge1500$ are rough. For $m\ge4000$ no block shorter than the decaying part reaches DQI's
value: at the largest such block size, about $1100$ at $m=4000$ and $1500$ at
$m=6000$, the model gives $0.241$ and $0.227$, against $0.2493$ and $0.2518$, and we give
no block size. With the core-SVP estimate $2^{0.292\beta}$ for one call to the shortest-vector
oracle~\cite{becker16sieve}, the corresponding cost of one oracle call is about $2^{37}$ at $m=1000$ and,
with the caveat just stated, about $2^{128}$ at $m=2000$.

These are estimates for the mean over subsets of this attack, in this cost model. They do not
cover the selection of the best subset, which at $m=1000$ already gives single subsets above
DQI's value (up to $0.240$ against $0.2338$, at block sizes $50$ and $60$), nor attacks that combine subsets or replace Babai's algorithm by
enumeration.

A second model predicts the score of the best subset. A BKZ-$\beta$ reduced basis of the $m'$-dimensional $M$-ary lattice has Gram--Schmidt norms that fall geometrically with ratio $\delta_\beta^{-2}$ under the geometric series assumption, clamped between $1$ and $M$ (the $M$-ary profile), with product $M^{m'-n}$. We take $\delta_\beta$ from the experiments of Chen and Nguyen for $\beta<50$~\cite{chen11bkz} and from the closed form of Chen~\cite{chen13thesis} above. Babai's algorithm leaves a residual uniform along each Gram--Schmidt direction, as in the analysis of Lindner and Peikert~\cite{lindner11lwe}. The directions of norm $M$ are the axis vectors $Me_i$, so their rows receive a uniform residual and mean cosine $0$. The other rows share the residual of the sloped part, which we take as Gaussian per row, so each contributes $\exp(-2\pi^{2}s^{2}/M^{2})$ with $s^{2}$ the residual variance per row. The rows outside the subset contribute $0$ on average. Maximizing over $m'$ gives the mean cosine the attack reaches at block size $\beta$. Over all our runs the measured best is at most $1.4$ times this model value, and we multiply every prediction by $1.4$, which favors the attacker. The calibrated model puts DQI's value within reach of BKZ-$20$ at $m=600$, as observed. At $m=2000$ it needs $\beta\approx150$ on about $390$ rows, a core-SVP cost of about $2^{44}$, so this attack offers no hardness at $m=2000$. The first model concerns the mean over subsets and this one the best subset, which is why this one gives the lower cost. At $m=6000$ it needs $\beta\approx750$ on about $1170$ rows, about $2^{219}$, and at $h=17$, $m=20000$ ($n=1400$, $M=131071$) no block size up to $2000$ reaches DQI's value $0.256$. The model leaves out combinations of subsets, sieving beyond the core-SVP count and the structure of $B$, so it estimates this attack only.